\tikzset{join/.code=\tikzset{after node path={%
\ifx\tikzchainprevious\pgfutil@empty\else(\tikzchainprevious)%
edge[every join]#1(\tikzchaincurrent)\fi}}}
\tikzset{>=stealth',every on chain/.append style={join},
         every join/.style={->}}
\tikzstyle{labeled}=[execute at begin node=$\scriptstyle,
\tikzstyle{vertex}=[circle, draw, fill=blue!20, inner sep=0pt, minimum width=15pt]
\tikzstyle{finalvertex}=[circle, draw, fill=red!30, inner sep=0pt, minimum width=15pt]
\tikzstyle{initialvertex}=[circle, draw, fill=green!30, inner sep=0pt, minimum width=15pt]
\tikzstyle{en}=[text=blue]
\author{Suryajith Chillara\thanks{Department of CSE, IIT Bombay. \texttt{suryajith@cse.iitb.ac.in}}
  \and
  Christian Engels\thanks{Department of CSE, IIT Bombay.  \texttt{christian@cse.iitb.ac.in}}
  \and
  Nutan Limaye\thanks{ Department of CSE, IIT Bombay. \texttt{nutan@cse.iitb.ac.in}}
  \and
  Srikanth Srinivasan\thanks{Department of Mathematics, IIT Bombay. \texttt{srikanth@math.iitb.ac.in}}}
\newtheorem{theorem}{Theorem}
\newtheorem{corollary}[theorem]{Corollary}
\newtheorem{lemma}[theorem]{Lemma}
\newtheorem{proposition}[theorem]{Proposition}
\newtheorem{claim}[theorem]{Claim}
\newtheorem{remark}[theorem]{Remark}
\newcommand{\prob}[2]{\mathop{\mathrm{Pr}}_{#1}[#2]}
\newcommand{\avg}[2]{\mathop{\textbf{E}}_{#1}[#2]}
\newcommand{\abs}[1]{\left|#1\right|}
\newcommand{\notwhat}{0.5}
\newcommand{\what}{0.5}
\newcommand{\mc}[1]{\mathcal{#1}}
\newcommand{\rank}{\mathrm{rank}}
\newcommand{\unbias}{\mathrm{Unbias}}
\newcommand{\bias}{\mathrm{Bias}}
\newtheorem*{rep@theorem}{\rep@title}
\newcommand{\newreptheorem}[2]{%
\newenvironment{rep#1}[1]{%
 \def\rep@title{#2 \ref{##1}}%
 \begin{rep@theorem}}%
 {\end{rep@theorem}}}
\newcommand{\F}{\mathbb{F}}
\newcommand{\Vars}{\mathrm{Vars}}
\newcommand{\supp}{\mathrm{Supp}}
\newcommand{\M}{{M}}
\newcommand{\wt}{\mathrm{wt}}
\newcommand{\Img}{\mathrm{Img}}
\newcommand{\ABP}{\mathrm{ABP}}
\newcommand{\mlABP}{\mathrm{mlABP}}
\newcommand{\relrk}{\mathrm{relrk}}
\title{A Near-Optimal Depth-Hierarchy Theorem for Small-Depth Multilinear Circuits}
\begin{document}

\maketitle
\thispagestyle{empty}
\begin{abstract}
We study the size blow-up that is necessary to convert an algebraic circuit of product-depth $\Delta+1$ to one of product-depth $\Delta$ in the multilinear setting.

We show that for every positive $\Delta = \Delta(n) = o(\log n/\log \log n),$ there is an explicit multilinear polynomial $P^{(\Delta)}$ on $n$ variables that can be computed by a multilinear formula of product-depth $\Delta+1$ and size $O(n)$, but not by any multilinear circuit of product-depth $\Delta$ and size less than $\exp(n^{\Omega(1/\Delta)})$. This result is tight up to the constant implicit in the double exponent for all $\Delta = o(\log n/\log \log n).$

This strengthens a result of Raz and Yehudayoff (Computational Complexity 2009) who prove a quasipolynomial separation for constant-depth multilinear circuits, and a result of Kayal, Nair and Saha (STACS 2016) who give an exponential separation in the case $\Delta = 1.$

Our separating examples may be viewed as algebraic analogues of variants of the Graph Reachability problem studied by Chen, Oliveira, Servedio and Tan (STOC 2016), who used them to prove lower bounds for constant-depth \emph{Boolean} circuits.
\end{abstract}

\section{Introduction}

This paper deals with a question in the area of \emph{Algebraic Complexity,} which studies the Computational Complexity of any algorithmic task that can be cast as the problem of computing a fixed multivariate polynomial (or polynomials) $f\in \F[x_1,\ldots,x_N]$ on a given $a\in \F^N.$  Many fundamental problems such as the Determinant, Permanent, the Fast Fourier Transform and Matrix Multiplication can be captured in this general paradigm. The natural computational model for solving such problems are Algebraic Circuits (and their close relations Algebraic Formulas and Algebraic Branching Programs) which use the algebraic operations of the polynomial ring $\F[x_1,\ldots,x_N]$ to compute the given polynomial. 

Our main focus in this paper is on \emph{Small-depth} Algebraic circuits, which are easily defined as follows.\footnote{We actually define algebraic \emph{formulas} here, but the distinction is not too important for our results in this paper.} Recall that any multivariate polynomial $f\in \F[x_1,\ldots,x_N]$ can be written as a linear combination of terms which are products of variables (i.e.\ monomials); we call such an expression a $\Sigma\Pi$ circuit for $f$. In general, such an expression for $f$ could be prohibitively large. More compact representations for $f$ may be obtained if we consider representations as linear combinations of products of linear functions, which we call $\Sigma\Pi\Sigma$ circuits, and subsequent generalizations such as $\Sigma\Pi\Sigma\Pi$ circuits and so on. We consider representations of the form $\Sigma\Pi\cdots O_d$ for some $d = d(N)$ where $d$ is a slow growing function of the number of variables $N$. We consider such a representation of a polynomial $f$ as a computation of $f$. 

The efficiency of such a computation is captured by the following two complexity measures: the \emph{size} of the corresponding expression, which captures the number of operations used in computing the polynomial; and the \emph{product-depth} of the circuit, which is the number of $\Pi$s in the expression for the circuit class (e.g. $1$ for $\Sigma\Pi\Sigma$ circuits, $2$ for $\Sigma\Pi\Sigma\Pi$ circuits etc.) and which measures in some sense the inductive complexity of the corresponding computation.\footnote{It is also standard in the literature to consider the \emph{depth} of the circuit, which is the number of $\Sigma$ and $\Pi$ terms in the defining expression for the circuit class, but the product-depth is frequently nicer (invariant under simple operations such as linear transformations etc.) and is essentially $\lfloor \text{depth}/2\rfloor$. So we mostly use product-depth. We also state our results in terms of depth.}

This paper is motivated by questions in a general body of results in the area that go by the name of \emph{Depth-reduction.} Informally, the question is: what is the worst case blow-up in size required to convert a circuit of product-depth $\Delta$ to one of product-depth $\Delta' < \Delta$? This is an important question, since circuits of smaller depth are frequently easier to analyze and understand, and such results allow us to transfer this understanding to more complex (i.e. higher depth) classes of circuits. Consequently, there have been many results addressing this general question for various models of Algebraic (and also Boolean) computation including Algebraic and Boolean formulas~\cite{brent, spira}, Boolean circuits~\cite{HopcroftPaulValiant}, and Algebraic circuits~\cite{VSBR,AV,Koiran,Tav13,GKKSdepth3}. Recently, Tavenas~\cite{Tav13} and Gupta, Kamath, Kayal and Saptharishi~\cite{GKKSdepth3} (building on~\cite{VSBR,AV,Koiran}) showed that a strong enough exponential lower bound for $\Sigma\Pi\Sigma$ circuits would imply a superpolynomial lower bound for general algebraic circuits. Interesting impossibility results are also known in this direction: for instance, it is known that the result of Tavenas~\cite{Tav13}, which converts a general circuit to a $\Sigma\Pi\Sigma\Pi$ circuit of subexponential size, cannot be improved in the restricted model of \emph{homogeneous}\footnote{I.e. a circuit where every intermediate expression computes a homogeneous polynomial.} $\Sigma\Pi\Sigma\Pi$ circuits~\cite{GKKSdepth4,FLMS,KLSS,KShom}.

Here, we study a more fine-grained version of the question of depth-reduction. We ask: what is the size blow-up in converting a circuit of product-depth $\Delta+1$ to one of product-depth $\Delta$? A natural strategy to carry out such a depth-reduction for a given $(\Sigma\Pi)^{\Delta+1}\Sigma$\footnote{I.e. a $\Sigma\Pi\cdots \Sigma$ expression with exactly $(\Delta+1)$ many $\Pi$s.} expression $F$ is to take some product terms in the expression and interchange them with the inner sum terms via the distributive law. This creates a blow-up in the size of the expression that is exponential in the number of sum terms. It is not hard to show that by choosing the sum terms carefully, one can limit this blow-up to $\exp(s^{1/\Delta+o(1)})$ for constant $\Delta,$\footnote{In fact, the upper bound is of the form $\exp(s^{1/\Delta}\log s)$, and is the $\exp(s^{1/\Delta+o(1)})$ as long as $\Delta = o(\log s/\log \log s).$} where $s$ is the size of the expression of depth $\Delta+1.$

Is this exponential\footnote{Strictly speaking, we get an exponential blow-up only for \emph{constant} $\Delta$. The careful reader should read ``exponential'' as ``exponential in $s^{1/\Delta}$'' for general $\Delta.$} blow-up unavoidable for any $\Delta$? The evidence we do have seems to suggest the answer is yes. For example, in the Boolean setting (where the $\Sigma$ and $\Pi$ are replaced by their Boolean counterparts $\bigvee$ and $\bigwedge$), such an exponential \emph{Depth-hierarchy theorem} is a classical result of H\r{a}stad~\cite{Hastadthesis}, with recent improvements by Rossman, Servedio, Tan and H\r{a}stad~\cite{RST15,Has16,HRSTjournal}. Even in the algebraic setting, there have been partial results in this direction~\cite{nw1997,ry09,KNS16,KumarSap16}. For \emph{homogeneous} circuits such a result was proved for $\Delta=1$ in  the work of Nisan and Wigderson~\cite{nw1997}. 

We study this question in the \emph{multilinear} setting, where the circuits are restricted to computing \emph{multilinear} polynomials at each stage of computation (multilinear polynomials are polynomials where the degree of each variable is at most $1$). This is a fairly natural model of computation for multilinear polynomials, and has been extensively studied in the literature \cite{Raz,Raznc2nc1,ry08,ry09,RSY08,DMPY12,KNS16}. 

Raz and Yehudayoff~\cite{ry09} considered the problem of separating product-depth $\Delta+1$ circuits from product-depth $\Delta$ circuits in the multilinear setting and proved a \emph{superpolynomial} separation implying a superpolynomial depth-hierarchy theorem. More precisely, they showed that there are circuits of size $s$ and product-depth $\Delta+1$ such that any product-depth $\Delta$ circuit computing the same polynomial must have size at least $s^{(\log s)^{\Omega(1/\Delta)}}.$ While this result shows that some blow-up is unavoidable in the multilinear setting, this is still only a quasipolynomial separation and does not completely resolve our original question. More recently, Kayal, Nair and Saha~\cite{KNS16} resolved the question completely in the case when $\Delta = 1$ by giving an optimal exponential separation between product-depth $2$ and product-depth $1$ multilinear circuits. 

We extend both these results to prove a strong depth-hierarchy theorem for all small depths. The following is implied by Corollary~\ref{cor:abstract-pdepth}. 

\begin{theorem}
For each $\Delta = \Delta(n) = o(\log n/\log \log n),$ there is an explicit multilinear polynomial $P^{(\Delta)}$ on $n$ variables that can be computed by a multilinear formula of product-depth $(\Delta+1)$ and linear size, but not by any multilinear circuit of product-depth $\Delta$ and size less than $\exp(n^{\Omega(1/\Delta)}).$ 
\end{theorem}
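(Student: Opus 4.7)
The plan is to construct $P^{(\Delta)}$ as a multilinear algebraic analogue of the layered reachability formulas of Chen--Oliveira--Servedio--Tan, and to establish the lower bound via a sharper analysis of the partial-derivative-matrix (PDM) method of Raz and Raz--Yehudayoff. Concretely, I would pick a block-size $m \approx n^{1/(\Delta+1)}$ and let $P^{(\Delta)}$ be a $(\Delta+1)$-level alternating sum-product expression on $n$ disjoint variables, with fan-in $\approx m$ at every internal gate. By construction this yields a multilinear formula of product-depth $\Delta+1$ and size $O(n)$, which is the easy direction; the subtlety in the construction lies in choosing the bottom-layer ``gadget'' and the combinatorial layering so that the PDM rank of the resulting polynomial is amplified appropriately under a structured random partition.

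For the lower bound I would work with the relative-rank measure $\relrk_\gamma(f) := \log_2 \rank(M_\gamma(f)) / \min(|Y|,|Z|)$, where $\gamma = (Y,Z)$ is a balanced partition of $\vars(P^{(\Delta)})$ drawn from a distribution correlated with the layered block structure of $P^{(\Delta)}$ rather than drawn uniformly. The argument then splits into two halves. Target has high relative rank: with non-negligible probability over $\gamma$ one should get $\relrk_\gamma(P^{(\Delta)}) \ge 1 - o(1)$, by exploiting the multiplicativity of PDM rank on disjoint variable sets together with the balanced-split property of the chosen distribution. Small circuits have large deficit: any multilinear circuit $C$ of product-depth $\Delta$ and size $s$ should satisfy $\mathop{\mathbb{E}}_\gamma[\relrk_\gamma(C)] \le 1 - \Omega(n^{-c})$ unless $s \ge \exp(n^{\Omega(1/\Delta)})$. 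Combining these two bounds forces the claimed size lower bound.

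The core of the lower bound is an inductive rank-deficit lemma applied level by level: at the top product gate $f_1 \cdots f_t$ of any candidate small circuit, the variable-disjointness of the $f_i$'s (a consequence of multilinearity) together with the structured distribution on $\gamma$ force typically $\Omega(t)$ of the factors to have their variable sets split in a sharply unbalanced manner, and each unbalanced factor contributes a multiplicative loss of $2^{-\Omega(n^{1/\Delta})}$ to the relative rank; induction on product-depth compounds these losses across all $\Delta$ layers. The main technical obstacle, and the reason that Raz and Yehudayoff obtained only a quasipolynomial separation while we aim for an exponential one, is to secure a per-layer rank deficit that is polynomial in $n^{1/\Delta}$ rather than polylogarithmic in $s$; this in turn forces both (i) the specific combinatorial layering of $P^{(\Delta)}$, mirroring the Boolean reachability construction of Chen--Oliveira--Servedio--Tan, and (ii) a partition distribution tuned to that layering so that every one of the $\Delta$ product layers of a candidate small circuit must pay a substantial rank tax. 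The base case $\Delta=1$ should recover the exponential separation of Kayal, Nair and Saha, and the inductive step extends it to all $\Delta = o(\log n/\log\log n)$.
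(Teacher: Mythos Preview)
Your high-level architecture is right: a COST-style layered hard polynomial, the PDM/relative-rank measure, and an induction on product-depth. But the proposal has a genuine gap in the core mechanism, and as written it would not escape the quasipolynomial regime.

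The engine you describe is still the Raz--Yehudayoff one: draw a (structured) random \emph{partition} $\gamma=(Y,Z)$ and argue that many factors $f_i$ of a top product gate get their variable sets split in an ``unbalanced'' way, each contributing a multiplicative rank loss. You then assert a per-factor loss of $2^{-\Omega(n^{1/\Delta})}$, but nothing in the proposal justifies this; under any balanced-partition model the typical imbalance on a $k$-variable factor is $\Theta(\sqrt{k})$, giving a loss of only $2^{-\Theta(\sqrt{k})}$, and compounding this across $\Delta$ layers is precisely what yields the RY quasipolynomial bound. You correctly name the obstacle (``per-layer deficit polynomial in $n^{1/\Delta}$ rather than polylogarithmic in $s$'') but do not supply a mechanism that overcomes it; ``a partition distribution tuned to the layering'' is not one.

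What the paper actually does differs from your sketch in two essential ways. First, it uses random \emph{restrictions} $\rho:X\to Y\cup Z\cup\{0,1\}$, not partitions: most variables are killed (set to $0$ or $1$), and the restriction is defined recursively so that on each segment it either recurses (applies a copy of $\rho^{(\Delta-1)}$ to one half-segment) or collapses the segment to a single $(y+z)$. This guarantees deterministically that $P^{(\Delta)}|_\rho$ is full-rank, and it is what makes the induction go. Second, the per-factor rank loss is obtained not from imbalance but from a \emph{parity} argument: if after restriction a factor is left with an odd number of surviving variables, its relative rank is at most $1/\sqrt{2}$. The heart of the proof is a dichotomy for each factor $Q$ of a product gate: either $Q$ is ``structured'' (heavy on many segments), in which case one conditions on the recursive option and applies the inductive hypothesis to $Q$ as a depth-$(\Delta-1)$ formula; or $Q$ is ``shattered'' across segments, in which case an intricate iterative resampling argument shows that with high probability many of the pieces end up with odd parity, yielding a constant-factor loss per piece and hence an $\exp(-m^{\Omega(1)})$ total loss. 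Neither ingredient---restrictions with a recursive option, and the odd-parity rank deficiency---appears in your proposal, and without them the unbalanced-split heuristic does not deliver the exponential separation.
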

We also prove an analogous result for depth instead of product-depth (Corollary~\ref{cor:abstract-depth}). Note that, from the above discussion, the above results are tight up to the constant implicit in the $\Omega(1/\Delta)$.

\subsection{Related Work}

We survey here some of the work on depth-hierarchy theorems in the Algebraic and Boolean settings. 

As mentioned above, this is a well-known question in the Boolean setting, with a near-optimal separation between different depths due to H\r{a}stad~\cite{Hastadthesis} (building on~\cite{Ajtai,FSS,Yao}); the separating examples in H\r{a}stad's result are the \emph{Sipser functions}, which are computed by linear-sized Boolean circuits of depth $\Delta+1$ but cannot be computed by subexponential-sized depth $\Delta$ circuits. A recent variant of this lower bound was proved by Chen et al.~\cite{COST} for \emph{Skew-Sipser functions}, which in turn was used to prove near-optimal lower bounds for constant-depth Boolean circuits solving variants of the Boolean Graph Reachability problem. Their construction motivates our hard polynomials, as we describe below. 

In the algebraic setting, we have separations between fixed constant-depth circuits under the restriction of homogeneity. Nisan and Wigderson~\cite{nw1997} show that converting a homogeneous $\Sigma\Pi\Sigma\Pi$ circuit to a homogeneous $\Sigma\Pi\Sigma$ circuit requires an exponential blow-up. A quasipolynomial separation between homogeneous $\Sigma\Pi\Sigma\Pi\Sigma$ and $\Sigma\Pi\Sigma\Pi$ circuits was shown by Kumar and Saptharishi~\cite{KumarSap16}. However, as far as we know, nothing is known for larger depths. 

When the algebraic circuits are instead restricted to be multilinear, more is known. Raz and Yehudayoff~\cite{ry09} were the first to study this question, and showed a quasipolynomial depth-hierarchy theorem for all constant product-depths. In the case of product-depth $1$ vs. product-depth $2$, this was strengthened to an exponential separation by Kayal, Nair and Saha~\cite{KNS16}. 

One can ask if the methods of~\cite{ry09,KNS16} can be used to prove our result. Kayal et al.\cite{KNS16} prove their result by defining a suitable complexity measure for any polynomial $f\in \F[x_1,\ldots,x_N]$, and show that this measure is small for any subexponential-sized $\Sigma\Pi\Sigma$ circuit but large for some linear-sized $\Sigma\Pi\Sigma\Pi$ circuit. In particular, this means that this measure needs to be changed to prove lower bounds for larger depth circuits. However, it is not clear how to modify this measure to take into account the product-depth of the circuit class.

This is not a problem with the technique of Raz and Yehudayoff~\cite{ry09}, which can indeed be used to prove exponential lower bounds on the sizes of small-depth circuits for computing certain polynomials. However, the polynomials used to witness the superpolynomial separation cannot give an exponential depth-hierarchy, for the reasons we now explain. 

The depth-hierarchy theorem of~\cite{ry09} is obtained via an exponential lower bound $s(n,\Delta) \approx \exp(n^{1/\Delta})$ against product-depth $\Delta$ circuits computing polynomials from a certain ``hard'' class $\mc{P}$ of polynomials on $n$ variables. Importantly, these lower bounds are tight in the sense that there also exist product-depth $\Delta$ circuits of size roughly $s(n,\Delta)$ computing polynomials from $\mc{P}$. Since $s(n,\Delta-1)$ is superpolynomially larger than $s(n,\Delta),$ we obtain a superpolynomial separation between circuits of product-depth $\Delta$ and product-depth $\Delta-1$. However, since we cannot improve on either the upper bound of $s(n,\Delta)$ or the lower bound of $s(n,\Delta-1)$ for this class of polynomials, we cannot hope to improve this separation. 

There is a striking parallel between this line of work and the setting of Boolean circuits, where also a similar quasipolynomial depth-hierarchy theorem can be obtained by appealing to easier lower bounds for explicit functions such as the ``Parity'' function~\cite{Hastadthesis}. It is known that the depth-$\Delta$ Boolean complexity of the Parity function is $\exp(\Theta(n^{1/\Delta-1}))$ and this yields a quasipolynomial depth-hierarchy theorem for constant-depth Boolean circuits as in the work of Raz and Yehudayoff described above. However, H\r{a}stad~\cite{Hastadthesis} was able to improve this to an exponential depth-hierarchy theorem by changing the candidate hard function to the Sipser functions  and then proving a lower bound for these functions by a related, but more involved, technique~\cite{Hastadthesis, COST}.

\subsection{Proof Outline}

As mentioned in the Related Work section above, the polynomials considered by Raz and Yehudyaoff~\cite{ry09} cannot be used to prove better than a quasipolynomial depth-hierarchy theorem. This parallels a quasipolynomial depth-hierarchy theorem in the Boolean setting. However, in the Boolean setting, there is a different family of explicit functions that can be used to prove an exponential depth-hierarchy theorem.

Our aim is to do something similar in the setting of multilinear small-depth circuits. While the methods for proving Boolean circuit lower bounds do not seem to apply in the algebraic setting, we can take inspiration in the matter of choosing the candidate hard polynomial. For this, we look to the recent result of Chen et al.~\cite{COST}, who observe that the Sipser functions (and also their ``skew'' variants) can be interpreted as special cases of the Boolean Graph Reachability problem. This is quite appealing for us, since Graph Reachability has a natural polynomial analogue, the Iterated Matrix Multiplication polynomial, which has been a source of many lower bounds in algebraic circuit complexity~\cite{nw1997,FLMS,KShom,BeraChakrabarti15,KNS16,KST16,CLS}. We therefore choose our lower bound candidate to be a restriction of the Iterated Matrix Multiplication polynomial, which we now describe. 

\subsubsection{The Hard Polynomials}

All the polynomials we consider will be naturally defined in terms of Directed Acyclic graphs (DAGs) with a unique source and sink. Given such a graph $G$ with source $s$ and sink $t$, we define a corresponding polynomial $P_G$ as follows. Label each edge $e$ of $G$ with a distinct variable $x_e$. The polynomial $P_G$ is defined to be the sum, over all paths $\pi$ from $s$ to $t$, of the monomial which is the product of edge labels along that path. (See Figure~\ref{fig:eg1intro} for a simple example.) This polynomial $P_G$ is the algebraic analogue of the Boolean Computational problem of checking $s$-$t$ reachability  on subgraphs of $G$. (Informally, we think of each $x_e$ as a Boolean variable that determines if $e$ remains in the subgraph or not. Then the polynomial $P_G$ on the Boolean input corresponding to a subgraph $H$ of $G$ counts the number of $s$-$t$ paths in $H$.)

 \begin{figure}
    \centering
    \begin{tikzpicture}[scale=0.75]
      \tikzstyle{vertex}=[circle, draw, fill=blue!20, inner sep=0pt, minimum width=5pt]
      
      \node[vertex] (a) at (0,0) {};
      \node[vertex] (b) at (1,1) {};
      \node[vertex] (c) at (2,0) {};
      \node[vertex] (d) at (3,1) {};
      \node[vertex] (e) at (5,0) {};

      \draw[->] (a) -- (b) node[midway,left] {$x_{1}$};
      \draw[->] (b) -- (d) node[midway,above] {$x_{2}$};
      \draw[->] (d) -- (e) node[midway,above] {$x_{3}$};
      \draw[->] (a) -- (c) node[midway,below] {$x_{4}$};
      \draw[->] (c) -- (e) node[midway,below] {$x_{6}$};
      \draw[->] (b) -- (c) node[midway,right] {$x_{5}$};
    \end{tikzpicture}
    \caption{A simple example of a graph $G$ (all edges go from left to right). In this case, $P_G$ is $x_1x_2x_3+x_1x_5x_6+x_4x_6.$}
    \label{fig:eg1intro}
  \end{figure}
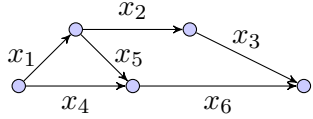

When $G$ is a layered graph with $d$ layers and all possible edges between consecutive layers, $P_G$ is known as the Iterated Matrix Multiplication polynomial (for the connection to matrix product, see, e.g.,~\cite{FLMS}). This polynomial has been studied in many contexts in Algebraic circuit complexity. Unfortunately, it is not useful in our setting, since it does not have a subexponential-size constant-depth multilinear circuit, as was shown recently by some of the authors~\cite{CLS}. In particular, this means that it cannot be used to obtain the claimed separation between depths $\Delta+1$ and $\Delta.$

However, changing $G$ can drastically reduce the complexity of the polynomial $P_G$. To obtain a $G$ such that $P_G$ has an efficient depth $\Delta+1$ circuit, we use \emph{series-parallel} graphs, as in the result of Chen et al.~\cite{COST}. 

Recall that given DAGs $G_1,\ldots,G_k$ with sources $s_1,\ldots,s_k$ and sinks $t_1,\ldots,t_k$, we can construct larger DAGs by composing these graphs in parallel (by identifying all the sources and all the sinks) or in series (by identifying $t_1$ with $s_2$, $t_2$ with $s_3$ and so on until $t_{k-1}$ with $s_k$) to get larger DAGs $G_{par}$ or $G_{ser}$ respectively. Note that the corresponding polynomials $P_{par}$ and $P_{ser}$ are the sums and products of the polynomials $P_1,\ldots,P_k,$ respectively. In particular, if $P_1,\ldots,P_k$ have efficient circuits of product-depth at most $\Delta$, then $P_{par}$ (resp.\ $P_{ser}$) has an efficient circuit of product-depth at most $\Delta$ (resp.\ $\Delta + 1$).

In this way, we can inductively construct polynomials which, by their very definition, have efficient circuits of small depth. In particular, to keep the product-depth of the circuit bounded by $\Delta+1,$ it is sufficient to ensure that the number of  series compositions used in the construction of the graph is at most $\Delta+1.$

\subsubsection{The $\Sigma\Pi\Sigma$ lower bound}
We motivate our proof with the solution to the simpler problem of separating product-depth $2$ and product-depth $1$ circuits. In fact, we will separate the power of $\Pi\Sigma\Pi$ and $\Sigma\Pi\Sigma$ circuits. While an exponential separation was already known in this case thanks to the work of Kayal et al.~\cite{KNS16}, we will outline a different proof that extends to larger depths.

Given the above discussion, a natural polynomial to witness the separation between product-depth $1$ and product-depth $2$ circuits is a polynomial corresponding to a series parallel graph obtained with exactly $2$ series compositions. Unfortunately, our proof technique is not able to prove a lower bound for such a graph. However, we are able to prove such a separation with the slightly more complicated graph $G^{(1)}$ in Figure~\ref{fig:g1intro}, which is made of a composition of $m$ copies of the basic graph $H^{(1)}$ (see Figure~\ref{fig:h1intro}). 
Though the graph $G^{(1)}$ is constructed using three series compositions rather than two, the corresponding polynomial $P^{(1)}(X)$ nevertheless has a product-depth $2$ circuit of small size (this is because the polynomial corresponding to $H^{(1)}$ depends only on a constant-number of variables and hence has a ``brute-force'' $\Sigma\Pi$ circuit).

\begin{figure}
\centering
  \begin{tikzpicture}[scale=0.75]
    \tikzstyle{vertex}=[circle, draw, fill=blue!20, inner sep=0pt, minimum width=5pt]
      \node[vertex] (a) at (0,0) {};

      \node[vertex] (b) at (1.5,0.5) {};
      \node[vertex] (c) at (0.5,1.5) {};
      \node[vertex] (d) at (2,2) {};
      \draw[->] (a) edge (b) (b) edge (d);
      \draw[->] (a) edge (c) (c) edge (d);

      \begin{scope}[rotate=-90,shift={(0,0)}]
        \node[vertex] (bl) at (1.5,0.5) {};
        \node[vertex] (cl) at (0.5,1.5) {};
        \node[vertex] (dl) at (2,2) {};
        \draw[->] (a) edge (bl) (bl) edge (dl);
        \draw[->] (a) edge (cl) (cl) edge (dl);
      \end{scope}

      \begin{scope}[rotate=-90,shift={(-2,2)}]
        \node[vertex] (bur) at (1.5,0.5) {};
        \node[vertex] (cur) at (0.5,1.5) {};
        \node[vertex] (dur) at (2,2) {};
        \draw[->] (d) edge (bur) (bur) edge (dur);
        \draw[->] (d) edge (cur) (cur) edge (dur);
      \end{scope}
      
      \begin{scope}[rotate=0,shift={(2,-2)}]
        \node[vertex] (blr) at (1.5,0.5) {};
        \node[vertex] (clr) at (0.5,1.5) {};
        \draw[->] (dl) edge (blr) (blr) edge (dur);
        \draw[->] (dl) edge (clr) (clr) edge (dur);
      \end{scope}
    \end{tikzpicture}
    \caption{$H^{(1)}$.}
    \label{fig:h1intro}
  \end{figure}

\begin{figure}
\centering
    \begin{tikzpicture}[scale=0.75,level/.style={},decoration={brace,mirror,amplitude=7}]
      \tikzstyle{vertex}=[circle, draw, fill=blue!20, inner sep=0pt, minimum width=5pt]

      \node[vertex] (a) at (0,0) {};

      \node[vertex] (b) at (1.5,0.5) {};
      \node[vertex] (c) at (0.5,1.5) {};
      \node[vertex] (d) at (2,2) {};
      \draw[->] (a) edge (b) (b) edge (d);
      \draw[->] (a) edge (c) (c) edge (d);

      \begin{scope}[rotate=-90,shift={(0,0)}]
        \node[vertex] (bl) at (1.5,0.5) {};
        \node[vertex] (cl) at (0.5,1.5) {};
        \node[vertex] (dl) at (2,2) {};
        \draw[->] (a) edge (bl) (bl) edge (dl);
        \draw[->] (a) edge (cl) (cl) edge (dl);
      \end{scope}

      \begin{scope}[rotate=-90,shift={(-2,2)}]
        \node[vertex] (bur) at (1.5,0.5) {};
        \node[vertex] (cur) at (0.5,1.5) {};
        \node[vertex] (dur) at (2,2) {};
        \draw[->] (d) edge (bur) (bur) edge (dur);
        \draw[->] (d) edge (cur) (cur) edge (dur);
      \end{scope}
      
      \begin{scope}[rotate=0,shift={(2,-2)}]
        \node[vertex] (blr) at (1.5,0.5) {};
        \node[vertex] (clr) at (0.5,1.5) {};
        \draw[->] (dl) edge (blr) (blr) edge (dur);
        \draw[->] (dl) edge (clr) (clr) edge (dur);
      \end{scope}

      \begin{scope}[shift={(4,0)}]
        \node[vertex] (a) at (0,0) {};
        
        \node[vertex] (b) at (1.5,0.5) {};
        \node[vertex] (c) at (0.5,1.5) {};
        \node[vertex] (d) at (2,2) {};
        \draw[->] (a) edge (b) (b) edge (d);
        \draw[->] (a) edge (c) (c) edge (d);

        \begin{scope}[rotate=-90,shift={(0,0)}]
          \node[vertex] (bl) at (1.5,0.5) {};
          \node[vertex] (cl) at (0.5,1.5) {};
          \node[vertex] (dl) at (2,2) {};
          \draw[->] (a) edge (bl) (bl) edge (dl);
          \draw[->] (a) edge (cl) (cl) edge (dl);
        \end{scope}

        \begin{scope}[rotate=-90,shift={(-2,2)}]
          \node[vertex] (bur) at (1.5,0.5) {};
          \node[vertex] (cur) at (0.5,1.5) {};
          \node[vertex] (dur) at (2,2) {};
          \draw[->] (d) edge (bur) (bur) edge (dur);
          \draw[->] (d) edge (cur) (cur) edge (dur);
        \end{scope}
        
        \begin{scope}[rotate=0,shift={(2,-2)}]
          \node[vertex] (blr) at (1.5,0.5) {};
          \node[vertex] (clr) at (0.5,1.5) {};
          \draw[->] (dl) edge (blr) (blr) edge (dur);
          \draw[->] (dl) edge (clr) (clr) edge (dur);
        \end{scope}
      \end{scope}

      \node at (9.5,0) {\Large $\dots$};
      \begin{scope}[shift={(11,0)}]
        \node[vertex] (a) at (0,0) {};

        \node[vertex] (b) at (1.5,0.5) {};
        \node[vertex] (c) at (0.5,1.5) {};
        \node[vertex] (d) at (2,2) {};
        \draw[->] (a) edge (b) (b) edge (d);
        \draw[->] (a) edge (c) (c) edge (d);

        \begin{scope}[rotate=-90,shift={(0,0)}]
          \node[vertex] (bl) at (1.5,0.5) {};
          \node[vertex] (cl) at (0.5,1.5) {};
          \node[vertex] (dl) at (2,2) {};
          \draw[->] (a) edge (bl) (bl) edge (dl);
          \draw[->] (a) edge (cl) (cl) edge (dl);
        \end{scope}

        \begin{scope}[rotate=-90,shift={(-2,2)}]
          \node[vertex] (bur) at (1.5,0.5) {};
          \node[vertex] (cur) at (0.5,1.5) {};
          \node[vertex] (dur) at (2,2) {};
          \draw[->] (d) edge (bur) (bur) edge (dur);
          \draw[->] (d) edge (cur) (cur) edge (dur);
        \end{scope}
        
        \begin{scope}[rotate=0,shift={(2,-2)}]
          \node[vertex] (blr) at (1.5,0.5) {};
          \node[vertex] (clr) at (0.5,1.5) {};
          \draw[->] (dl) edge (blr) (blr) edge (dur);
          \draw[->] (dl) edge (clr) (clr) edge (dur);
        \end{scope}
      \end{scope}
      \node (start) at (1,0) {};
      \node (end) at (15,0) {};
      \draw [decorate] ([yshift=-25mm]start.west) --node[below=3mm]{$m$ copies} ([yshift=-25mm]end.east);
    \end{tikzpicture}
    \caption{$G^{(1)}$}

\label{fig:g1intro}
  \end{figure}

We now describe how to prove that any $\Sigma\Pi\Sigma$ circuit for $P^{(1)}$ must have large size (the lower bound we obtain is $2^{\Omega(m)}$). Let $F$ be a $\Sigma\Pi\Sigma$ circuit of size $s$ computing $P^{(1)}$ and assume $s$ is small. We can write $F$ as $T_1+\cdots + T_s$, where each $T_i$ is a product of linear polynomials. 

The lower bound is via a rank argument first used by Raz~\cite{Raz}. The idea is to associate a rank-based complexity measure with any polynomial $f$, and show that while the polynomial $P^{(1)}$ has rank as large as possible, the rank of $F$ must be small. 

This rank measure is defined as follows. We partition the variables $X$ in our polynomial into two sets $Y$ and $Z$ and consider any polynomial $f(X)$ as a polynomial in the variables in $Y$ with coefficients from $\F[Z].$ The rank of the space of coefficients (as vectors over the base field $\F$) is considered a measure of the complexity of $f$. 

It is easy to come up with a partition of the underlying variable set $X$ into $Y,Z$ so that the complexity of our polynomial $P^{(1)}$ is as large as it can be. Unfortunately, it is also easy to find $\Sigma\Pi\Sigma$ formulas that have maximum rank w.r.t. this partition. Hence, this notion of complexity is not by itself sufficient to prove a lower bound. At this point, we follow an idea of Raz~\cite{Raz} and show something stronger for $P^{(1)}$: we show that its complexity is \emph{robust} in the sense that it is full rank w.r.t. many different partitions.

More precisely, we carefully design a large space of \emph{restrictions} $\rho^{(1)}:X\rightarrow Y\cup Z\cup \{0,1\}$ such that for any such restriction $\rho^{(1)},$ the resulting substitution of $P^{(1)}$, which we will call a \emph{restriction of $P^{(1)}$}, continues to have full rank, but the rank of the restriction of $F$ is small under many of them. We define these restrictions now. 

The definition of the space of restrictions is motivated by the following easily verified observation (also used in many previous results~\cite{Raznc2nc1,ry08,ry09,DMPY12}): any polynomial of the form 
\begin{equation}
\label{eq:fullrkpoly}
\prod_{i\in [t]}(y_{i} + z_{i})
\end{equation}
is full-rank w.r.t. the natural partition $Y = \{y_1,\ldots,y_t\}$ and $Z=\{z_1,\ldots,z_t\}$. Given this, a possible option is to have the restriction $\rho^{(1)}$ set variables in each copy of $H^{(1)}$ in ways that ensure that after substitution, the polynomial is of the form in (\ref{eq:fullrkpoly}). We choose one among the three different (up to isomorphism) restrictions shown in Figure~\ref{fig:g0set}. It can be checked that each such restriction results in a polynomial of the form in (\ref{eq:fullrkpoly}) and hence is full rank. Since $P^{(1)}$ is a product of many disjoint copies of this basic polynomial, it remains full rank also.  

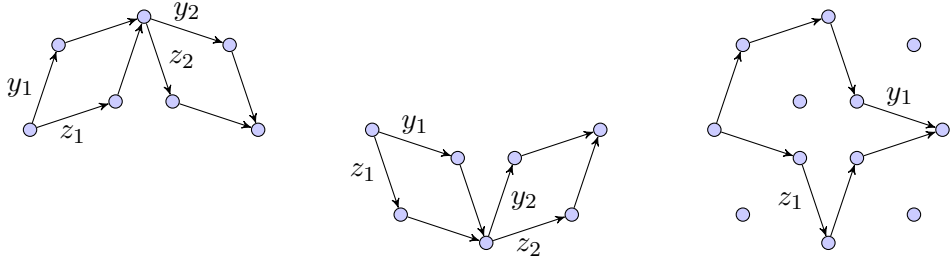
\begin{figure}
\centering
\begin{tikzpicture}[scale=0.75]
  \tikzstyle{vertex}=[circle, draw, fill=blue!20, inner sep=0pt, minimum width=5pt]
  
  \node[vertex] (a) at (0,0) {};
  
  \node[vertex] (b) at (1.5,0.5) {};
  \node[vertex] (c) at (0.5,1.5) {};
  \node[vertex] (d) at (2,2) {};
  \draw[->] (a) -- (c) node[midway,left] {$y_{1}$};
  \draw[->] (c) -- (d);
  \draw[->] (a) -- (b) node[midway,below] {$z_1$};
  \draw[->] (b) edge (d);

  \begin{scope}[rotate=-90,shift={(0,0)}]
    \node at (0,0) {};
    \node at (2,2) {};
  \end{scope}

  \begin{scope}[rotate=-90,shift={(-2,2)}]
    \node[vertex] (bur) at (1.5,0.5) {};
    \node[vertex] (cur) at (0.5,1.5) {};
    \node[vertex] (dur) at (2,2) {};
    \draw[->] (d) -- (cur) node[midway,above] {$y_{2}$};
    \draw[->] (cur) -- (dur);
    \draw[->] (d) -- (bur) node[midway,right] {$z_2$}; 
    \draw[->] (bur) edge (dur);
  \end{scope}
\begin{scope}[shift={(6,0)}]
  \node[vertex] (a) at (0,0) {};

  \begin{scope}[rotate=-90,shift={(0,0)}]
    \node[vertex] (bl) at (1.5,0.5) {};
    \node[vertex] (cl) at (0.5,1.5) {};
    \node[vertex] (dl) at (2,2) {};
    \draw[->] (a) -- (cl) node[midway,above] {$y_{1}$};
    \draw[->] (cl) -- (dl);
    \draw[->] (a) -- (bl) node[midway,left] {$z_1$};
    \draw[->] (bl) edge (dl);
  \end{scope}
  
  \begin{scope}[rotate=0,shift={(2,-2)}]
    \node[vertex] (blr) at (1.5,0.5) {};
    \node[vertex] (clr) at (0.5,1.5) {};
    \node[vertex] (dur) at (2,2) {};
    \draw[->] (dl) -- (clr) node[midway,right] {$y_{2}$};
    \draw[->] (clr) -- (dur);
    \draw[->] (dl) -- (blr) node[midway,below] {$z_2$};
    \draw[->] (blr) edge (dur);
  \end{scope}
\end{scope}
\begin{scope}[shift={(12,0)}]
  \node[vertex] (a) at (0,0) {};

  \node[vertex] (b) at (1.5,0.5) {};
  \node[vertex] (c) at (0.5,1.5) {};
  \node[vertex] (d) at (2,2) {};
  \draw[->] (a) -- (c);
  \draw[->] (c) -- (d);
  
  \begin{scope}[rotate=-90,shift={(0,0)}]
    \node[vertex] (bl) at (1.5,0.5) {};
    \node[vertex] (cl) at (0.5,1.5) {};
    \node[vertex] (dl) at (2,2) {};
  \end{scope}

  \begin{scope}[rotate=-90,shift={(-2,2)}]
    \node[vertex] (bur) at (1.5,0.5) {};
    \node[vertex] (cur) at (0.5,1.5) {};
    \node[vertex] (dur) at (2,2) {};
  \end{scope}
  
  \begin{scope}[rotate=0,shift={(2,-2)}]
    \node[vertex] (blr) at (1.5,0.5) {};
    \node[vertex] (clr) at (0.5,1.5) {};
  \end{scope}
  
  \draw[->] (d) -- (bur);
  \draw[->] (bur) -- (dur) node[midway,above] {$y_{1}$};

  \draw[->] (a) -- (cl);
  \draw[->] (cl) -- (dl) node[midway,left] {$z_{1}$};
  \draw[->] (dl) -- (clr);
  \draw[->] (clr) -- (dur);
\end{scope}
\end{tikzpicture}
\caption{Restrictions applied to each copy of $H^{(1)}$. Edges that are not labelled have their variables set to $1$. Edges that are not present have their variables set to $0$.}
\label{fig:g0set}
\end{figure}

Now, to show that $F$ has small rank under some such restriction, we apply a \emph{uniformly random} copy of such a restriction $\rho^{(1)}$ to $F$ and show that with high probability each of its constituent terms $T_1,\ldots, T_s$ is very small in rank. Since rank is subadditive and $s$ is assumed to be small, this implies that $F$ cannot be full rank after the restriction and hence cannot have been computing the polynomial $P^{(1)}$. 

Fix a term $T_i$ which is a product of linear functions as mentioned above. We argue that it restricts to a small-rank term with high probability as follows.

\begin{itemize}
\item  A standard argument~\cite{Raz} shows that, by multilinearity,\footnote{This essentially means that the linear functions that participate in $T_i$ do not share any variables.} the rank of $T_i$ is small if many of its constituent linear functions have smaller than ``full-rank'' after applying $\rho^{(1)}$. Here,  a linear polynomial $L(Y',Z')$ ($Y'\subseteq Y$, $Z'\subseteq Z$) is said to be full-rank if its complexity (as defined above) is $2^{(|Y'|+|Z'|)/2}.$ 
\item By a simple argument we can argue that, upon applying a random restriction $\rho^{(1)}$, any constituent linear function $L$ of $T_i$ becomes rank-deficient (i.e. short of full-rank) with good (constant) probability. This is by a case analysis on the set of variables $X'$ that appear in $L$ and proceeds roughly as follows. 
\begin{itemize}
\item Say $X'$ is \emph{structured} in that it is the union of the variable sets in some copies of $H^{(1)}.$ In this case, we observe that with good probability, the total number of variables in $L$ after restriction is larger than $2$ and hence, the target full-rank is a number larger than $2$. On the other hand, any linear function cannot have rank more than $2$. 
\item On the other hand if $X'$ is not structured in the above sense, then it can be shown that with good probability, the number of restricted variables in $L$ is odd and then it cannot be full-rank, since the target full-rank is $2^{t+1/2}$ for some integer $t$, while the rank of $L$ can be at most $2^t$. 
\end{itemize}
\item Given the above, we can easily argue using a concentration bound\footnote{By multilinearity, the events that the various linear functions are small rank become (roughly) independent.} that with high probability, $T_i$ has \emph{many} rank-deficient linear functions and consequently has small rank. This yields the proof of the $\Sigma\Pi\Sigma$ lower bound.
\end{itemize} 

\subsubsection{Separating $\Delta$ from $\Delta+1$}

To prove the lower bound for product-depth $\Delta$ circuits, we proceed by induction on $\Delta.$ Each step in the inductive proof is analogous to a step in the $\Sigma\Pi\Sigma$ lower bound so we will be brief.

The graph $H^{(\Delta)}$ is constructed from two copies of $G^{(\Delta-1)}$ by parallel composition and the graph $G^{(\Delta)}$ is constructed from many copies of $H^{(\Delta)}$ by series composition as shown in Figure~\ref{fig:gdelta}. By construction $P^{(\Delta)}$ will have a circuit of product-depth $\Delta+1.$

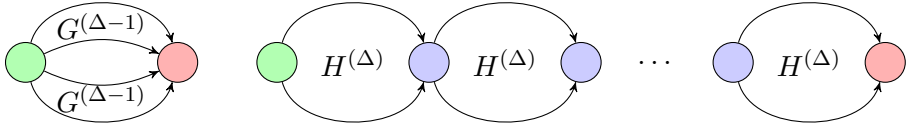
\begin{figure}[H]
\centering
    \begin{tikzpicture}[scale=0.4]
      \node[initialvertex] (a) at (0,0) {};
      \node[finalvertex]   (b) at (5,0) {};
      \draw[->] (a) to [bend left=75] (b);
      \node at (2.5,1.2) {$G^{(\Delta-1)}$};
      \draw[->] (a) to [bend left=25] (b);
      \draw[->] (a) to [bend left=-25] (b);
      \node at (2.5,-1.2) {$G^{(\Delta-1)}$};
      \draw[->] (a) to [bend left=-75] (b);
    \end{tikzpicture}
	\hspace*{0.2in}
    \begin{tikzpicture}
      \node[initialvertex] (a) at (0,0) {};
      \node[vertex] (b) at (2,0) {};
      \draw[->] (a) to [bend left=75] (b);
      \draw[->] (a) to [bend left=-75] (b);
      \node at (1,0) {$H^{(\Delta)}$};
      
      \node[vertex] (c) at (4,0) {};
      \draw[->] (b) to [bend left=75] (c);
      \draw[->] (b) to [bend left=-75] (c);
      \node at (3,0) {$H^{(\Delta)}$};
      
      \node at (5,0) {$\dots$};
      
      \node[vertex] (d) at (6,0) {};
      \node[finalvertex] (e) at (8,0) {};
      \draw[->] (d) to [bend left=75] (e);
      \draw[->] (d) to [bend left=-75] (e);
      \node at (7,0) {$H^{({\Delta})}$};
    \end{tikzpicture}
\caption{$H^{(\Delta)}$ (left) and $G^{(\Delta)}$}
\label{fig:gdelta}
\end{figure}

The lower bound is again proved via a random restriction argument. We define the random restriction $\rho^{(\Delta)}$ by restricting each copy of $H^{(\Delta)}$ independently by choosing one of these options at random.
\begin{itemize}
\item set it to a polynomial of the form $(y+z)$ by choosing a random path in each copy of $G^{(\Delta-1)}$ and setting a single variable in each to $y$ or $z$ (other variables in the path are set to $1$ and off-path variables are set to $0$),
\item inductively applying the restriction $\rho^{(\Delta-1)}$ to one of the copies of $G^{(\Delta-1)}$ (setting the other one to $0$).
\end{itemize}
As before, the polynomial $P^{(\Delta)}$ remains full-rank after the restriction with probability $1$, since it always transforms into a polynomial of the form (\ref{eq:fullrkpoly}) after the restrictions. 

Let $F$ be a small product-depth $\Delta$ circuit. To argue that it is low-rank w.h.p. after applying $\rho^{(\Delta)}$, we prove a variant of a decomposition lemma from the literature that allows us to write $F = T_1 + \ldots + T_s$ where each $T_i$ now is a product of circuits each of which has  small size and product depth at most $\Delta-1.$ 

As before, we argue that each $T_i$ is low-rank with high probability. Say $T_i = \prod_{j\in [t]}Q_{i,j}$. To show this, it suffices to show that each $Q_{i,j}$ is somewhat short of full-rank with good probability. This does indeed turn out to be true in one of two ways.
\begin{itemize}
\item If the variable set $X'$ of $Q_{i,j}$ is structured (as defined above), then we can use induction and show that it is low-rank if we happen to apply the restriction $\rho^{(\Delta-1)}$ to the variables of $Q_{i,j}.$ This happens with constant probability.
\item On the other hand, if $Q_{i,j}$ is unstructured, then with good probability, $Q_{i,j}$ is left with an odd number of variables, which implies that it is rank-deficient exactly as before. 
\end{itemize}

At an intuitive level, the above cases correspond to two different ways in which a small circuit can try to compute the hard polynomial $P^{(\Delta)}.$ In the first case, it uses the fact that the polynomial $P^{(\Delta)}$ is a product of many polynomials and tries to compute each of them with a smaller circuit of depth $\Delta-1$; in this case, we use the inductive hypothesis to show that this cannot happen. In the second case, the circuit tries to do some non-trivial (i.e. unexpected) computation at the top level and in this case, we argue directly that the circuit fails. 

This finishes the sketch of an idealized version of the argument. While the above strategy can be carried out as stated, it would yield a sub-optimal but still exponential bound of the form $\exp(n^{\alpha_\Delta})$ for some $\alpha_\Delta > 0$ that depends on $\Delta.$ To obtain the near-optimal lower bound of $\exp(n^{\Omega(1/\Delta)})$, we need to make the following changes to the above idealized proof strategy. 

\begin{itemize}
\item We expand the set of ``structured'' polynomials to allow for polynomials $Q_{i,j}$ (and inductively formulas computed at smaller depths) that compute polynomials over a non-trivial fraction, say $\varepsilon$, of the variables in many different copies, say $M$, of $H^{(\Delta-1)}$. The exact relationship between $\varepsilon$ and $M$ is somewhat delicate (we choose $\varepsilon \geq 1/M^{0.25}$) and must be chosen carefully for the proof to yield an optimal lower bound.
\item Given this, we also need to handle (for the general inductive statement) the case that $F$ does not depend on all the variables in $G^{(\Delta)}$ but rather at least $\delta$-fraction of the variables in $K$ many copies of $H^{(\Delta)}$ for some suitable $\delta$ and $K$. Given a single term $T_i = \prod_{j}Q_{i,j}$ in $F$ as above, one of the following two cases occurs.
\begin{itemize}
\item One of the $Q_{i,j}$s depends on at least a $\delta'$-fraction of the variables in $K'$ many copies of $H^{(\Delta)}$, where $\delta'$ and $K'$ are not much smaller than $\delta$ and $K$ respectively. This corresponds to the structured case above and in this case, we can actually use the induction hypothesis to show that we are done with high probability. 

To argue this, we have to use the fact that while $\delta$ has decreased to $\delta'$, the $K'$ many copies of $H^{(\Delta)}$ contain many more copies of $H^{(\Delta-1)}$. Hence the number of copies of $H^{(\Delta-1)}$, say $K''$, and $\delta'$ still have the ``correct'' relationship so that the inductive statement is applicable. 

\item Otherwise, in many copies of $H^{(\Delta)},$ the $\delta$-fraction that are variables of $F$ are further partitioned into parts of relative size $< \delta'$ by the variable sets of the $Q_{i,j}$. In this case, we have to argue that many of the $Q_{i,j}$ are rank-deficient with high probability. This is the most technical part of the proof and is done by a careful re-imagining of the sampling process that defines the random restriction. 
\end{itemize}
\end{itemize}

The more general inductive statement (and some additional technicalities that force us to carry around auxiliary sets of $Y$ and $Z$ variables) results in a technically complicated theorem statement (see Theorem~\ref{thm:technical}), which yields the near-optimal depth-hierarchy theorem. 

\paragraph{Organization.} We start with some Preliminaries in Section~\ref{sec:prelims}. We define the hard polynomials that we use in Section~\ref{sec:polysandrests}. The general inductive statement (Theorem~\ref{thm:technical}) is given in Section~\ref{sec:mainthm}, which also contains the proof of the main depth-hierarchy theorem (Corollary~\ref{cor:abstract-pdepth}) assuming Theorem~\ref{thm:technical}. Finally, we prove Theorem~\ref{thm:technical} in Section~\ref{sec:technical}.

\section{Preliminaries}
\label{sec:prelims}

\subsection{Polynomials and restrictions}
\label{sec:polyrest}

Throughout, let $\F$ be an arbitrary field. A polynomial $P\in \F[X]$ is called \emph{multilinear} if the degree of $P$ in each variable $x\in X$ is at most $1$. 

Let $X,Y,Z$ be disjoint sets of variables. An \emph{$(X,Y,Z)$-restriction} is a function $\rho:X\rightarrow Y\cup Z\cup \{0,1\}.$ The sets $X,Y$ and $Z$ will sometimes be omitted when clear from context. We say that the restriction $\rho$ is \emph{multilinear} if no two variables in $X$ are mapped to the same variable in $Y\cup Z$ by $\rho$. A \emph{random} $(X,Y,Z)$-restriction is simply a random function $\rho:X\rightarrow Y\cup Z\cup \{0,1\}$ (chosen according to some distribution).

Let $X,Y,Z$ be as above and say we have for each $i\in [k]$, an $(X_i,Y_i,Z_i)$-restriction $\rho_i$ where $X_i\subseteq X, Y_i\subseteq Y$ and $Z_i\subseteq Z.$ If $X_1,\ldots,X_k$ form a (pairwise disjoint) partition of $X$, we define their \emph{composition} --- denoted $\rho_1\circ\rho_2\cdots\circ \rho_k$ --- to be the $(X,Y,Z)$-restriction $\rho$ such that $\rho(x)$ agrees with $\rho_i(x)$ for each $x\in X_i.$ Further, if restrictions $\rho_i$ are multilinear restrictions and the sets $Y_i$ ($i\in [k]$) and $Z_j$ ($j\in [k]$) are pairwise disjoint, then $\rho_1\circ\cdots\circ \rho_k$ is also multilinear. 

Let $\rho$ be an $(X,Y,Z)$-restriction. Given a polynomial $f\in \F[X]$, the restriction $\rho$ yields a natural polynomial in $\F[Y\cup Z]$ obtained from $f$ by substitution; we denote this polynomial $f|_\rho$. Note, moreover, that if $f$ is multilinear and $\rho$ is multilinear, then so is $f|_\rho$.

\subsection{Partial derivative matrices and relative rank}
\label{sec:pdrelrk}

Let $Y$ and $Z$ be two disjoint sets of variables and let $g\in \F[Y\cup Z]$ be a multilinear polynomial. Define the $2^{|Y|}\times 2^{|Z|}$ matrix $\M_{(Y,Z)}(g)$ whose rows and columns are labelled by distinct multilinear monomials in $Y$ and $Z$ respectively and the $(m_1,m_2)$th entry of $\M_{(Y,Z)}(g)$ is the coefficient of the monomial $m_1\cdot m_2$ in $g$. We will use the rank of this matrix as a measure of the complexity of $g$. 

We define the \emph{relative-rank} of $g$ w.r.t. $(Y,Z)$ (denoted by $\relrk_{(Y,Z)}(g)$) by
\[
\relrk_{(Y,Z)}(g) = \frac{\rank(M_{(Y,Z)}(g))}{2^{(|Y|+|Z|)/2}}.
\]

We note the following properties of relative rank~\cite{ry09}.

\begin{proposition}
\label{prop:relrk}
Let $g,g_1,g_2\in \F[Y\cup Z]$ be multilinear polynomials.
\begin{enumerate}
\item $\relrk_{(Y,Z)}(g) \leq 1.$ Further if $|Y|+|Z|$ is odd, $\relrk_{(Y,Z)}(g) \leq 1/\sqrt{2}.$
\item $\relrk_{(Y,Z)}(g_1+g_2)\leq \relrk_{(Y,Z)}(g_1) + \relrk_{(Y,Z)}(g_2).$
\item \label{relrk:mult} If $Y$ is partitioned into $Y_1,Y_2$ and $Z$ into $Z_1,Z_2$ with $g_i\in \F[Y_i\cup Z_i]$ ($i\in [2]$), then $\rank(M_{(Y,Z)}(g)) = \rank(M_{(Y_1,Z_1)})(g_1)\cdot \rank(M_{(Y_2,Z_2)})(g_2).$ In particular, $\relrk_{(Y,Z)}(g_1\cdot g_2) = \relrk_{(Y_1,Z_1)}(g_1)\cdot \relrk_{(Y_2,Z_2)}(g_2).$
\end{enumerate}
\end{proposition}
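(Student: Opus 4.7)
The plan is to prove each part directly from the definition of $M_{(Y,Z)}(g)$ together with standard facts about matrix rank. For part (1), I would observe that $M_{(Y,Z)}(g)$ is a $2^{|Y|}\times 2^{|Z|}$ matrix, so its rank is at most $2^{\min(|Y|,|Z|)}$. Dividing by $2^{(|Y|+|Z|)/2}$ gives
\[
\relrk_{(Y,Z)}(g) \;\leq\; 2^{\min(|Y|,|Z|) - (|Y|+|Z|)/2} \;=\; 2^{-\,|\,|Y|-|Z|\,|/2}.
\]
When $|Y|=|Z|$ this is exactly $1$, proving the first bound. When $|Y|+|Z|$ is odd, $|Y|-|Z|$ is also odd and hence has absolute value at least $1$, so the bound specializes to $2^{-1/2} = 1/\sqrt{2}$.

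For part (2), the assignment $g \mapsto M_{(Y,Z)}(g)$ is $\F$-linear, since the $(m_1,m_2)$-entry is simply the coefficient of $m_1m_2$ in $g$, and coefficients are linear in $g$. Thus $M_{(Y,Z)}(g_1+g_2) = M_{(Y,Z)}(g_1) + M_{(Y,Z)}(g_2)$, and subadditivity of matrix rank, $\rank(A+B)\leq \rank(A)+\rank(B)$, followed by division by the common normalization $2^{(|Y|+|Z|)/2}$, yields the claim.

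For part (3), the key observation is that, up to a permutation of rows and columns (which does not affect rank), $M_{(Y,Z)}(g_1 g_2)$ is the Kronecker product $M_{(Y_1,Z_1)}(g_1) \otimes M_{(Y_2,Z_2)}(g_2)$. To see this, note that every multilinear monomial in $Y$ factors uniquely as $m_1 m_2$ with $m_i$ a multilinear monomial in $Y_i$, and similarly every multilinear monomial in $Z$ factors as $n_1 n_2$. Because $g_1$ and $g_2$ live in disjoint variable sets $Y_1\cup Z_1$ and $Y_2\cup Z_2$, the coefficient of $m_1 m_2 n_1 n_2$ in $g_1 g_2$ factors as the product of the coefficient of $m_1 n_1$ in $g_1$ and the coefficient of $m_2 n_2$ in $g_2$. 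This is precisely the defining entry pattern of the Kronecker product. Using $\rank(A\otimes B) = \rank(A)\cdot \rank(B)$ gives the first identity. The second identity then follows by dividing by $2^{(|Y|+|Z|)/2} = 2^{(|Y_1|+|Z_1|)/2} \cdot 2^{(|Y_2|+|Z_2|)/2}$ and splitting the quotient across the two factors.

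I expect the only step requiring any real care is the Kronecker-product identification in part (3): one has to verify both that the coefficient of a ``product'' monomial in $g_1 g_2$ factors exactly as a product of the individual coefficients (which uses the disjointness of variable supports crucially, since otherwise there would be cancellations or cross-contributions), and that the reindexing of rows and columns by the bijections $(m_1,m_2)\leftrightarrow m_1m_2$ and $(n_1,n_2)\leftrightarrow n_1n_2$ is a genuine permutation of indices. All other steps are routine applications of the definition and of the standard rank inequalities.
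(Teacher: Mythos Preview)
Your proof is correct and is the standard argument for these facts. The paper does not actually supply its own proof of this proposition; it simply states the properties and attributes them to Raz and Yehudayoff~\cite{ry09}, so there is nothing to compare against beyond noting that your argument is exactly the expected one.
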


\subsection{Multilinear models of computation}
\label{sec:mlmodels}

We refer the reader to the standard resources (e.g.~\cite{sy,github}) for basic definitions related to algebraic circuits and formulas. Having said that, we make a few remarks.

\begin{itemize}
\item All the gates in our formulas and circuits will be allowed to have \emph{unbounded} fan-in.
\item  The size of a formula or circuit will refer to the number of gates (including input gates) in it, and the depth of the formula or circuit will refer to the maximum number of gates on a path from an input gate to the output gate.
\item  Further, the \emph{product-depth} of the formula or circuit (as in \cite{ry08}) will refer to the maximum number of product gates on a path from the input gate to output gate. Note that if a formula or circuit has depth $\Delta$, we can assume without loss of generality that its product depth is between $\lfloor \Delta/2\rfloor$ and $\lceil \Delta/2\rceil$ (by collapsing sum and product gates if necessary).
\end{itemize}

 An algebraic formula $F$ (resp.\ circuit $C$) computing a polynomial from $\F[X]$ is said to be \emph{multilinear} if each gate in the formula (resp.\ circuit) computes a multilinear polynomial. Moreover, a formula $F$ is said to be \emph{syntactic multilinear} if for each $\times$-gate $\Phi$ of $F$ with children $\Psi_1,\ldots,\Psi_t$, we have 
$\supp(\Psi_i)\cap \supp(\Psi_j) = \emptyset \text{ for each $i\neq j$},$ 
where $\supp(\Phi)$ denotes the set of variables that appear in the subformula rooted at $\Phi.$ Finally, for $\Delta \geq 1$, we say that a multilinear formula (resp.\ circuit) is a $(\Sigma\Pi)^{\Delta}\Sigma$ formula (resp.\ circuit) if the output gate is a sum gate and along any path from an input gate to the output gate, the sum and product gates alternate, with product gates appearing exactly $\Delta$ times and the bottom gate being a sum gate. We can define $(\Sigma\Pi)^{\Delta}, \Sigma\Pi\Sigma, \Sigma\Pi\Sigma\Pi$ formulas and circuits similarly.

An \emph{Algebraic Branching Program} ($\ABP$), over the set of variables $X$ and field $\mathbb{F}$ is a layered (i.e., the edges are only between two consecutive layers) directed acyclic graph $G$ with two special vertices called the \emph{source} and the \emph{sink}. The label of an edge is a linear polynomial in $\mathbb{F}[X]$. The weight of a path is the product of the labels of its edges. The polynomial computed by $G$ is the sum of the weights of all the paths from $s$ to $t$ in $G$. 

  A \emph{Multilinear ABP} ($\mlABP$) is an algebraic branching program such that for any path from the source to sink, the labels of edges on that path are linear polynomials over pairwise-disjoint sets of variables.

\paragraph*{Composing ABPs in series and in parallel.}  Let $G_1,\ldots,G_k$ be ABPs (on disjoint sets of vertices) with sources $s_1,\ldots,s_k$ and sinks $t_1,\ldots, t_k$. We say that $G$ is obtained by composing $G_1,\ldots,G_k$ \emph{in parallel} if $G$ is obtained by identifying all sources $s_1,\ldots, s_k$ to obtain a single source $s$ and all the sinks $t_1,\ldots,t_k$ to obtain a single sink $t$. Note that the polynomial computed by $G$ is the sum of the polynomials computed by $G_1,\ldots,G_k.$

We say that $G$ is obtained by composing $G_1,\ldots,G_k$ \emph{in series} if $G$ is obtained by identifying $t_i$ with $s_{i+1}$ for each $i\in [k-1]$ to obtain an ABP with source $s_1$ and sink $t_k$. Note that the polynomial computed by $G$ is the product of the polynomials computed by $G_1,\ldots,G_k.$ Further if $G_1,\ldots,G_k$ are $\mlABP$s over disjoint sets of variables, then $G$ is also an $\mlABP.$
\begin{figure}
    
    \centering
    \begin{tikzpicture}[scale=1]
      \node[initialvertex] (v0) at (0,0) {$s_1$};
      \node[vertex] (v11) at (1,1) {};
      \node[vertex] (v12) at (1,-1) {};
      \node (mid1) at (2,1) {$\dots$};
      \node (mid2) at (2,0) {$\dots$};
      \node (mid3) at (2,-1) {$\dots$};
      \node[vertex] (v21) at (3,1) {};
      \node[vertex] (v22) at (3,-1) {};
      \node[finalvertex] (vf) at (4,0) {$t_1$};
      \node[initialvertex] (u0) at (5,0) {$s_2$};
      \node[vertex] (u11) at (6,1) {};
      \node[vertex] (u12) at (6,-1) {};
      \node (mid11) at (7,1) {$\dots$};
      \node (mid21) at (7,0) {$\dots$};
      \node (mid31) at (7,-1) {$\dots$};
      \node[vertex] (u21) at (8,1) {};
      \node[vertex] (u22) at (8,-1) {};
      \node[finalvertex] (uf) at (9,0) {$t_2$};
      \path[->]
      (v0) edge (v11)
      (v0) edge (v12)
      (v21) edge (vf)
      (v22) edge (vf)
      (u0) edge (u11)
      (u0) edge (u12)
      (u21) edge (uf)
      (u22) edge (uf);
      \draw[red, thick, dashed] (3.5,-0.5) rectangle node {$u$} (5.5,0.5);
    \end{tikzpicture}
    \hfill
    \begin{tikzpicture}[scale=1]
      \node[initialvertex] (v0) at (0,0) {$s_1$};
      \node[vertex] (v11) at (1,1) {};
      \node[vertex] (v12) at (1,-1) {};
      \node (mid1) at (2,1) {$\dots$};
      \node (mid2) at (2,0) {$\dots$};
      \node (mid3) at (2,-1) {$\dots$};
      \node[vertex] (v21) at (3,1) {};
      \node[vertex] (v22) at (3,-1) {};
      \node[finalvertex] (vf) at (4,0) {$t_1$};
      \node[initialvertex] (u0) at (0,-3) {$s_2$};
      \node[vertex] (u11) at (1,-2) {};
      \node[vertex] (u12) at (1,-4) {};
      \node (mid11) at (2,-2) {$\dots$};
      \node (mid21) at (2,-3) {$\dots$};
      \node (mid31) at (2,-4) {$\dots$};
      \node[vertex] (u21) at (3,-2) {};
      \node[vertex] (u22) at (3,-4) {};
      \node[finalvertex] (uf) at (4,-3) {$t_2$};
      \draw[red, thick, dashed] (-0.5,-3.5) rectangle node {$s$} (0.5,0.5);
      \draw[red, thick, dashed] (3.5,-3.5) rectangle node {$t$} (4.5,0.5);
      \path[->]
      (v0) edge (v11)
      (v0) edge (v12)
      (v21) edge (vf)
      (v22) edge (vf)
      (u0) edge (u11)
      (u0) edge (u12)
      (u21) edge (uf)
      (u22) edge (uf);     
    \end{tikzpicture}
    \caption{Compostion of ABPs in series and parallel.}
    \label{fig:abp-series}
  \end{figure}
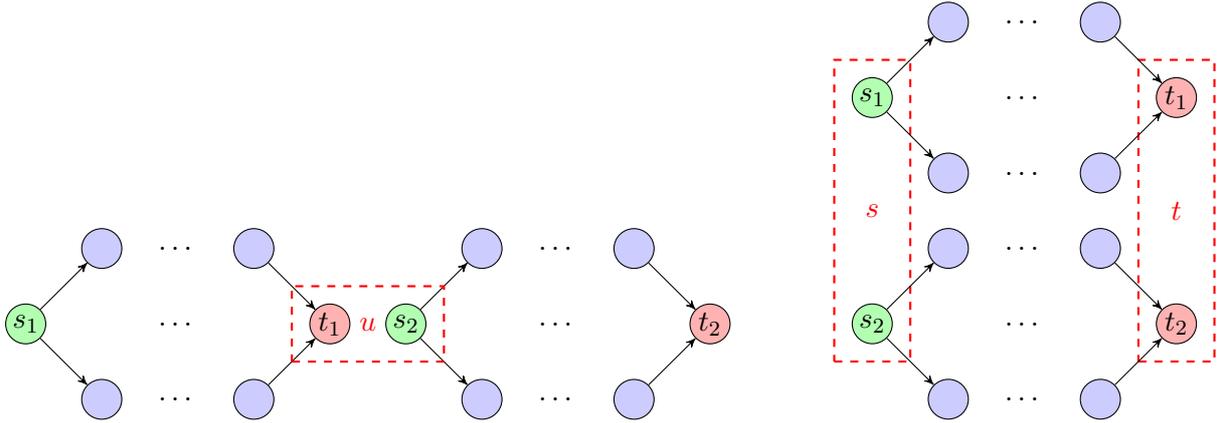

\subsection{Some structural lemmas}
\label{sec:struct}

Given a syntactically multilinear formula $F$ computing a polynomial $P\in \F[X],$ we define a \emph{variable-set labelling} of $F$ to be a labelling function that assigns to each gate $\Phi$ of $F$ a set $\Vars(\Phi)\subseteq X$ with the following properties.
\begin{enumerate}
\item For any gate $\Phi$ in $F$, $\supp(\Phi) \subseteq \Vars(\Phi)$. 
\item If $\Phi$ is an sum gate, with children $\Psi_1, \Psi_2, \ldots, \Psi_k$, then $\forall i \in [k]$, $\Vars(\Psi_i) = \Vars(\Phi)$. 
\item If $\Phi$ is a product gate, with children $\Psi_1, \Psi_2, \ldots, \Psi_k$, then $\Vars(\Phi) = \cup_{i=1}^k \Vars(\Psi_i)$ and the sets $\Vars(\Psi_i)$ ($i\in [k]$) are pairwise disjoint.
\end{enumerate}
We call a syntactically multilinear formula $F$ \emph{variable-labelled} if it is equipped with a labelling function as above. For such an $F$, we define $\Vars(F)$ to be $\Vars(\Phi_0)$, where $\Phi_0$ is the output gate of $F$.

The following lemma shows that we may always assume that a syntactically multilinear formula $F$ is variable-labelled. 

\begin{proposition}
\label{prop:vars}
For each syntactically multilinear formula $F$ computing $P\in \F[X],$ there is a variable-labelled syntactically multilinear formula $F'$ of the same size as $F$ computing $P$ such that $\Vars(\Phi)\subseteq X$ for each gate $\Phi$ of $F'$ and $\Vars(F) = X.$
\end{proposition}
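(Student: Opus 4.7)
The plan is to produce $F'$ from $F$ simply by adding a labelling; the underlying gate/edge structure (and hence the size) is unchanged, so the ``same size'' clause is automatic. Only the assignment of the sets $\Vars(\cdot)$ requires work, and this is done top-down along the tree of gates of $F$.

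For each gate $\Phi$, let $\vars_F(\Phi)\subseteq X$ denote the set of variables appearing at input gates in the subformula rooted at $\Phi$; note that $\supp(\Phi)\subseteq \vars_F(\Phi)$ always, and for a $\times$-gate $\Phi$ with children $\Psi_1,\ldots,\Psi_k$, syntactic multilinearity gives that the sets $\vars_F(\Psi_i)$ are pairwise disjoint with union $\vars_F(\Phi)$. Starting at the output gate $\Phi_0$, set $\Vars(\Phi_0)=X$; note $\vars_F(\Phi_0)\subseteq X$. Propagate inductively: for any gate $\Phi$ already labelled with $\Vars(\Phi)\supseteq \vars_F(\Phi)$, label its children as follows.

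If $\Phi$ is a $+$-gate with children $\Psi_1,\ldots,\Psi_k$, set $\Vars(\Psi_i) = \Vars(\Phi)$ for each $i\in[k]$; since $\vars_F(\Psi_i)\subseteq \vars_F(\Phi)\subseteq \Vars(\Phi)$, the invariant is preserved. If $\Phi$ is a $\times$-gate with children $\Psi_1,\ldots,\Psi_k$, use syntactic multilinearity to define
\[
\Vars(\Psi_1) \;=\; \vars_F(\Psi_1)\,\cup\,\bigl(\Vars(\Phi)\setminus \vars_F(\Phi)\bigr), \qquad \Vars(\Psi_i) \;=\; \vars_F(\Psi_i) \text{ for } i\geq 2.
\]
Since $\vars_F(\Phi)=\bigsqcup_i \vars_F(\Psi_i)\subseteq \Vars(\Phi)$, these sets are pairwise disjoint and partition $\Vars(\Phi)$; also each contains the corresponding $\vars_F(\Psi_i)\supseteq \supp(\Psi_i)$, preserving the invariant.

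It is then immediate from the invariant $\supp(\Phi)\subseteq \vars_F(\Phi)\subseteq \Vars(\Phi)$ that condition (1) holds at every gate, while the construction rules directly enforce conditions (2) and (3). The main (minor) point to get right is the $\times$-gate step, where one must use syntactic multilinearity to assign the ``slack'' variables $\Vars(\Phi)\setminus \vars_F(\Phi)$ to exactly one child so that the children's $\Vars$-sets remain disjoint and together cover $\Vars(\Phi)$; the choice to dump them into $\Psi_1$ is arbitrary. Since $F$ is a formula (a tree), the top-down recursion terminates after one pass through the gates without conflict, producing the desired variable-labelled formula $F'$ with $\Vars(F')=\Vars(\Phi_0)=X$.
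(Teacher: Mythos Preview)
Your proposal is correct and essentially identical to the paper's proof: both build the labelling top-down, setting $\Vars(\Phi_0)=X$, copying $\Vars(\Phi)$ to all children of a $+$-gate, and at a $\times$-gate assigning each child its own $\supp$-set while dumping the slack $\Vars(\Phi)\setminus\supp(\Phi)$ into a single designated child. The only cosmetic differences are that the paper writes $\supp(\Psi_i)$ where you write $\vars_F(\Psi_i)$ (these coincide by the paper's definition of $\supp$) and that the paper dumps the slack into $\Psi_k$ rather than $\Psi_1$.
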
 

\begin{proof}
The formula $F'$ is the same as the formula $F$ along with variable-set labels $\Vars(\Phi)$ for each gate $\Phi$ of $F'$. These labels are defined by downward induction on the structure of $F$ as follows. 

For the output gate $\Phi_0$ of $F$, we define $\Vars(\Phi_0) = X$ (this ensures that $\Vars(F) = X$). If $\Phi$ is a sum gate with children $\Psi_1, \ldots, \Psi_k$ and $\Vars(\Phi) =S \subseteq X$, then for each $1\leq i \leq k$, we define $\Vars(\Psi_i) = S$. If $\Phi$ is a product gate with children $\Psi_1, \ldots \Psi_k$ and $\Vars(\Phi) = S \subseteq X$, then we define $\Vars(\Psi_i) = \supp(\Psi_i)$ for $1 \leq i \leq k-1$ and $\Vars(\Psi_k) = S \setminus \left(\cup_{i=1}^{k-1} \Vars(\Psi_i)\right)$.

It is easy to check that the above labelling is indeed a valid variable-set labelling. 
\end{proof}

We will use the following structural result that converts any small-depth multilinear circuit to a small-depth syntactic multilinear formula without a significant blowup in size.

\begin{lemma}[Raz and Yehudayoff~\cite{ry09}, Lemma 2.1]
\label{lem:RY-nf-ckts}
For any multilinear circuit $C$ of product-depth at most $\Delta$ and size at most $s$, there is a syntactic multilinear $(\Sigma\Pi)^{\Delta}\Sigma$ formula $F$ of size at most $(\Delta+1)^2\cdot s^{2\Delta+1}$ computing the same polynomial as $C$. 
\end{lemma}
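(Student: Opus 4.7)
The plan is to construct $F$ in three phases: normalize $C$ to $(\Sigma\Pi)^{\Delta}\Sigma$ shape, unfold the normalized circuit into a tree, and use multilinearity of $C$ to enforce the syntactic condition via local zero-substitutions. For the first phase, standard preprocessing lets us assume $C$ has depth at most $2\Delta+1$ with strictly alternating gate types along every root-to-leaf path. We then insert identity gates of fan-in $1$ at the root and above every input so the top and bottom layers are $\Sigma$-layers, and pad every root-to-leaf path whose number of product gates is less than $\Delta$ by inserting identity $\Pi$–$\Sigma$ pairs (each of fan-in $1$) until its length is exactly $2\Delta+1$. The resulting circuit $C'$ is $(\Sigma\Pi)^{\Delta}\Sigma$-shaped, has depth $2\Delta+1$, and its maximum fan-in is still at most $s$, since the padded gates all have fan-in $1$.

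For the second phase, unfold $C'$ into a formula $F$ by creating a fresh copy of the subcircuit rooted at each gate for every root-to-gate path in $C'$. Level $i$ of $F$ contains at most $s^i$ nodes (at most one per path of length $i$), so $|F|\le \sum_{i=0}^{2\Delta+1}s^i \le (\Delta+1)^2\cdot s^{2\Delta+1}$ as required. The resulting $F$ is a $(\Sigma\Pi)^{\Delta}\Sigma$ formula computing the same polynomial as $C$.

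The main obstacle is the third phase: making $F$ syntactic multilinear. Every gate of $F$ computes a multilinear polynomial (inherited from $C$), but the syntactic support of a subformula can strictly exceed the variable support of the polynomial it computes, due to internal cancellations. The fix uses the elementary observation that if a product $g_1\cdots g_k$ is multilinear, the factors $g_i$ have pairwise disjoint variable supports --- otherwise the product would contain a nonzero square term. Assign each gate $\Phi$ of $F$ a label $V(\Phi)\supseteq \Vars(f_\Phi)$ by a top-down sweep in the style of Proposition~\ref{prop:vars}: set $V(\Phi_0)=X$ at the root, propagate $V(\Phi)$ unchanged to every sum-gate child, and at a product gate $\Phi$ with children $\Psi_1,\ldots,\Psi_k$ partition $V(\Phi)$ into pairwise disjoint sets $V(\Psi_i)\supseteq \Vars(f_{\Psi_i})$ (possible since the $\Vars(f_{\Psi_i})$ are disjoint subsets of $V(\Phi)\supseteq\Vars(f_\Phi)$). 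Finally, replace every input gate labelled by a variable $x\notin V$ at that input by the constant $0$. A straightforward bottom-up induction shows this substitution preserves the polynomial computed at every gate --- at a sum gate because substitution commutes with addition and $\Vars(f_\Phi)\subseteq V(\Phi)$, and at a product gate because each factor's polynomial support already lies within its own label and is therefore unaffected. The syntactic support of every subformula is now contained in its label, so children of product gates have disjoint syntactic supports, and $F$ is syntactic multilinear as required.
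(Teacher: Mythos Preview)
The paper does not prove this lemma (it is quoted from \cite{ry09}), so there is no in-paper argument to compare against. Phases~1 and~2 of your sketch are fine. Phase~3, however, has a genuine gap: the invariant $V(\Phi)\supseteq\Vars(f_\Phi)$ need not hold at every gate, and consequently your partitioning step at product gates can be impossible as stated. Concretely, take a $(\Sigma\Pi)^2\Sigma$ multilinear formula with a $\Pi$-gate $G=S_1\cdot S_2$ where $S_2=x$ and $S_1=\Phi_1+\Phi_2+\Phi_3$ with $\Phi_1=x\cdot w$, $\Phi_2=(-x)\cdot w$, $\Phi_3=y$. Every gate computes a multilinear polynomial, $f_{S_1}=y$, $f_G=xy$. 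At $G$ you are forced to put $x$ into $V(S_2)$ (since $\Vars(f_{S_2})=\{x\}$ and the labels of $S_1,S_2$ must be disjoint), hence $x\notin V(S_1)=V(\Phi_1)$. But $\Vars(f_{\Phi_1})=\{x,w\}\not\subseteq V(\Phi_1)$, and you cannot partition $V(\Phi_1)$ so that the $x$-leaf's label contains $x$. The root cause is cancellation at sum gates: a summand's semantic support can strictly exceed its parent's, so ``$V(\Phi)\supseteq\Vars(f_\Phi)$'' does not propagate downward through $\Sigma$-gates, contrary to what you assert. For the same reason, the substitution does \emph{not} preserve the polynomial at every gate (in the example, $\Phi_1$ changes from $xw$ to $0$).

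The repair is small and standard: at a product gate $\Phi$, only require $V(\Psi_i)\supseteq\Vars(f_{\Psi_i})\cap V(\Phi)$ (these intersections are still pairwise disjoint subsets of $V(\Phi)$, so a partition exists), and replace ``preserves the polynomial at every gate'' by the weaker invariant that after substitution gate $\Phi$ computes $f_\Phi\big|_{x\leftarrow 0\text{ for }x\notin V(\Phi)}$. This invariant is immediate at leaves, passes through sums since $V$ is unchanged, and passes through products because each factor's restricted support already lies inside its own label; at the root it yields $f_{\text{root}}$ since $V(\text{root})=X$. You should also dispose of the degenerate case that some factor $f_{\Psi_i}=0$ (there your disjointness-of-supports argument breaks down; simply replace that child by the constant $0$).
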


\subsection{Useful Probabilistic Estimates}
\label{sec:misc}

We will need the Chernoff bound~\cite{chernoff, hoeffding} for sums of independent Boolean random variables. We use the version from the book of Dubhashi and Panconesi~\cite[Theorem 1.1]{DubhashiPanconesi}.

\begin{theorem}[Chernoff bound]
\label{thm:Chernoff}
Let $W_1,\ldots,W_n$ be independent $\{0,1\}$-valued random variables and let $W = \sum_{i\in [n]} W_i.$ Then we have the following.
\begin{enumerate}
\item For any $\varepsilon > 0,$ 
\[
\prob{}{W > (1+\varepsilon) \avg{}{W}} \leq \exp(-\frac{\varepsilon^2}{3}\avg{}{W})\ \text{ and}\  \prob{}{W < (1-\varepsilon) \avg{}{W}} \leq \exp(-\frac{\varepsilon^2}{2}\avg{}{W}).
\] 
\item For any $t > 2e\avg{}{W}$,
\[
\prob{}{W > t}\leq 2^{-t}.
\]
\end{enumerate}
\end{theorem}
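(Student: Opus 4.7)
The plan is to use the classical exponential moment method (the Cramér--Chernoff technique). The strategy is uniform across all three bounds: for an appropriate parameter $\lambda \in \mathbb{R}$, apply Markov's inequality to the random variable $e^{\lambda W}$, factorize the moment generating function using independence, bound each factor by a clean exponential, and finally optimize the free parameter $\lambda$ to obtain the desired tail estimate.

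The first step is to observe that for any $\lambda > 0$ and any threshold $a$, $\Pr[W > a] = \Pr[e^{\lambda W} > e^{\lambda a}] \leq e^{-\lambda a}\, \mathbb{E}[e^{\lambda W}]$ by Markov. Since the $W_i$'s are independent, $\mathbb{E}[e^{\lambda W}] = \prod_{i=1}^n \mathbb{E}[e^{\lambda W_i}]$. Setting $p_i = \Pr[W_i = 1]$, we have $\mathbb{E}[e^{\lambda W_i}] = 1 + p_i(e^\lambda - 1) \leq \exp(p_i(e^\lambda - 1))$ using the inequality $1 + x \leq e^x$. Multiplying over $i$ and writing $\mu = \mathbb{E}[W] = \sum_i p_i$ gives the key MGF bound $\mathbb{E}[e^{\lambda W}] \leq \exp(\mu(e^\lambda - 1))$, so that $\Pr[W > a] \leq \exp(\mu(e^\lambda - 1) - \lambda a)$.

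For the upper tail in Part~1, I would take $a = (1+\varepsilon)\mu$ and the optimal choice $\lambda = \ln(1+\varepsilon) > 0$, which yields $\Pr[W > (1+\varepsilon)\mu] \leq \left(\frac{e^\varepsilon}{(1+\varepsilon)^{1+\varepsilon}}\right)^\mu$. Then I would verify the analytic inequality $(1+\varepsilon)\ln(1+\varepsilon) - \varepsilon \geq \varepsilon^2/3$ (e.g.\ by comparing Taylor series), to convert this into $\exp(-\varepsilon^2\mu/3)$. The lower tail is completely symmetric with $\lambda < 0$; the corresponding analytic estimate $(1-\varepsilon)\ln(1-\varepsilon) + \varepsilon \geq \varepsilon^2/2$ (on $\varepsilon \in [0,1)$) gives the $\exp(-\varepsilon^2\mu/2)$ bound. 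For Part~2, I would instead choose $\lambda = \ln(t/\mu)$ in the generic bound $\Pr[W > t] \leq \exp(\mu(e^\lambda - 1) - \lambda t)$, giving $\Pr[W > t] \leq (e\mu/t)^t$; the hypothesis $t > 2e\mu$ then forces $e\mu/t < 1/2$, so this upper bound is at most $2^{-t}$, as required.

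The only real technical step is verifying the two elementary inequalities $(1+\varepsilon)\ln(1+\varepsilon) - \varepsilon \geq \varepsilon^2/3$ and $(1-\varepsilon)\ln(1-\varepsilon) + \varepsilon \geq \varepsilon^2/2$; everything else is a one-line application of Markov or independence. These inequalities are routine to prove by defining $f(\varepsilon)$ equal to the left minus right side, checking $f(0) = f'(0) = 0$, and verifying $f''(\varepsilon) \geq 0$ on the relevant range. Since this is a textbook result cited from Dubhashi and Panconesi, I would in fact simply cite it rather than reproduce the proof in the paper.
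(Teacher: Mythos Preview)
The paper does not prove this theorem; it simply cites it from Dubhashi and Panconesi, exactly as you propose in your final sentence. Your sketch of the standard Cram\'er--Chernoff argument is correct, with the minor caveat that the analytic inequality $(1+\varepsilon)\ln(1+\varepsilon)-\varepsilon\ge \varepsilon^2/3$ actually fails for large $\varepsilon$ (e.g.\ $\varepsilon=3$) and only holds on $(0,1]$, so the upper-tail bound as stated for all $\varepsilon>0$ is a slight overstatement; the paper's applications, however, all use $\varepsilon\le 1$.
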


We will need the following simple fact that reduces the problem of proving a large deviation bound to the problem of bounding the probability of a conjunction of events. 

\begin{proposition}
\label{prop:ANDthr}
Let $W_1,\ldots,W_n$ be $n$ $\{0,1\}$-valued random variables (not necessarily independent) and let $W = (W_1,\ldots,W_n)$. Assume that for some $p\in [0,1]$ and for any $\beta\in \{0,1\}^n$ we have $\prob{}{W = \beta}\leq p.$ Then, we have $\prob{}{\sum_i W_i \leq r} \leq n^r\cdot p.$
\end{proposition}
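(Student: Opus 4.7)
The plan is a one-line union bound; the proposition is essentially a single counting argument and presents no substantive obstacle. Specifically, I would decompose the event $\{\sum_i W_i \leq r\}$ as a disjoint union of its atoms: setting $B := \{\beta \in \{0,1\}^n : \sum_i \beta_i \leq r\}$, we have
\[
\left\{\sum_i W_i \leq r\right\} \;=\; \bigsqcup_{\beta \in B} \{W = \beta\}.
\]
Using additivity of probability together with the hypothesis $\prob{}{W=\beta} \leq p$ for every $\beta \in \{0,1\}^n$, this immediately yields
\[
\prob{}{\sum_i W_i \leq r} \;=\; \sum_{\beta \in B} \prob{}{W = \beta} \;\leq\; |B|\cdot p.
\]

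The only remaining step is to verify the combinatorial bound $|B| \leq n^r$. Here $|B| = \sum_{k=0}^{r}\binom{n}{k}$ is the size of a Hamming ball of radius $r$ in $\{0,1\}^n$. Using the elementary estimate $\binom{n}{k} \leq n^{k}$ and observing that in the relevant regime (where $n$ is large relative to $r$) the top term $\binom{n}{r}$ dominates the sum, one obtains $|B| \leq n^r$. Substituting into the previous inequality gives the claimed $\prob{}{\sum_i W_i \leq r} \leq n^r \cdot p$.

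There is no genuine hard step here: the decomposition into atoms is automatic because the $W_i$ are $\{0,1\}$-valued, so the joint distribution of $W$ is supported on $\{0,1\}^n$; and the pointwise bound $\prob{}{W=\beta}\leq p$ is exactly what lets us convert the cardinality count into a probability bound. The only minor care is the binomial estimate, which is a standard volume-of-a-Hamming-ball calculation.
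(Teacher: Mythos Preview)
Your proposal is correct and mirrors the paper's proof exactly: decompose the event into the atoms $\{W=\beta\}$ with $\wt(\beta)\le r$, apply the pointwise bound, and then use $\sum_{k=0}^{r}\binom{n}{k}\le n^r$. Your parenthetical hedge (``in the relevant regime where $n$ is large relative to $r$'') is appropriate, since the Hamming-ball bound $\sum_{k\le r}\binom{n}{k}\le n^r$ actually fails at $r=1$; the paper states the same inequality without comment, and for the application (where $r$ is large) it is harmless.
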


\begin{proof}
  For any $\beta \in \{0,1\}^n$, let weight of $\beta$ (denoted $\wt(\beta)$) be $\sum_{i=1}^n\beta(i)$. It is now easy to see that
  \begin{align*}
    \prob{}{\sum_i W_i \leq r} = \sum_{k=0}^r\left(\sum_{\substack{\beta\in\{0,1\}^n\\\wt(\beta)=k}}\prob{}{W=\beta}\right) \leq \sum_{k=0}^r{n\choose k}\cdot p \leq n^r \cdot p.
  \end{align*}
  The first inequality comes from the fact that there are ${n\choose k}$ many bit vectors $\beta$ in $\{0,1\}^n$ of weight exactly $k$ and $\prob{}{W=\beta} \leq p$ for each of them, and the last inequality comes from an upper bound of $n^r$ on $\sum_{k=0}^r{n\choose k}$.

\end{proof}

For a Boolean random variable $W,$ define the \emph{unbias} of $W$ by 
\[
\mathrm{Unbias}(W) = \min\{\prob{}{W = 0}, \prob{}{W=1}\}
\]
and \emph{bias} of $W$ by
\[
  \mathrm{Bias}(W) = \abs{\prob{}{W=0}-\prob{}{W=1}} = \abs{\mathbb{E}[(-1)^{W}]}.
\]
Note that $\mathrm{Unbias}(W)\in [0,1/2].$ The following fact follows directly from the aforementioned definitions.

\begin{proposition}
  \label{prop:bias-unbias}
  For a Boolean random variable $W,$ $\mathrm{Bias}(W) + 2\cdot\mathrm{Unbias}(W) = 1$.
\end{proposition}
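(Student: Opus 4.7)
The plan is to do a direct case analysis on which of the two probabilities $\prob{}{W=0}$ and $\prob{}{W=1}$ is smaller. Write $p = \prob{}{W=1}$ and $q = \prob{}{W=0}$, so that $p + q = 1$ (using that $W$ is Boolean). By definition, $\mathrm{Unbias}(W) = \min\{p,q\}$ and $\mathrm{Bias}(W) = |p - q|$.

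First I would handle the case $p \geq q$. Here $\mathrm{Unbias}(W) = q$ and $\mathrm{Bias}(W) = p - q$, so
\[
\mathrm{Bias}(W) + 2\cdot \mathrm{Unbias}(W) = (p - q) + 2q = p + q = 1.
\]
The case $p < q$ is symmetric: $\mathrm{Unbias}(W) = p$ and $\mathrm{Bias}(W) = q - p$, giving
\[
\mathrm{Bias}(W) + 2\cdot \mathrm{Unbias}(W) = (q - p) + 2p = p + q = 1.
\]
In both cases the quantity equals $1$, which is exactly what is claimed.

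There is no real obstacle here — the statement is essentially a one-line algebraic identity expressing that the larger of $p,q$ equals $\min\{p,q\} + |p-q|$, so together with the complementary smaller probability one recovers $p+q=1$. One could alternatively write the proof without cases by observing $\max\{p,q\} = \mathrm{Unbias}(W) + \mathrm{Bias}(W)$ and $p + q = \mathrm{Unbias}(W) + \max\{p,q\} = 1$, but the case split is cleaner and more transparent.
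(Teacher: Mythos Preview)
Your proof is correct; the paper does not even give a proof of this proposition, simply stating that it ``follows directly from the aforementioned definitions,'' which is exactly the direct computation you carry out.
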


The following fact is folklore, but we state it here in the exact form we need and prove it for completeness.

\begin{proposition}
\label{prop:xorunbias}
Let $W_1,\ldots,W_n$ be independent $\{0,1\}$-valued random variables and assume $W = \bigoplus_{j\in [n]} W_j$. Then $\mathrm{Unbias}(W) \geq \min\{(1/2)\cdot \sum_j \mathrm{Unbias}(W_j),1/10\}.$ 
\end{proposition}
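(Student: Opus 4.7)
The plan is to first reduce to the classical Fourier-style identity for the bias of a XOR of independent bits, and then handle two regimes by elementary calculus. Since $(-1)^W = \prod_j (-1)^{W_j}$ and the $W_j$ are independent, we get
\[
\mathbb{E}[(-1)^W] = \prod_j \mathbb{E}[(-1)^{W_j}],
\]
so $\mathrm{Bias}(W) = \prod_j \mathrm{Bias}(W_j)$. Writing $u_j := \mathrm{Unbias}(W_j) \in [0,1/2]$ and invoking Proposition~\ref{prop:bias-unbias} on each side yields the clean identity
\[
\mathrm{Unbias}(W) \;=\; \frac{1 - \prod_{j\in [n]} (1 - 2u_j)}{2}.
\]

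Let $S := \sum_{j} u_j$. Since $1 - 2u_j \geq 0$, applying the elementary bound $1 - x \leq e^{-x}$ factor-by-factor gives $\prod_j (1 - 2u_j) \leq e^{-2S}$, and hence
\[
\mathrm{Unbias}(W) \;\geq\; \frac{1 - e^{-2S}}{2}.
\]
Now I would split on the size of $S$. If $S \leq 1/5$, I would show $(1 - e^{-2S})/2 \geq S/2$, i.e.\ $f(S) := 1 - e^{-2S} - S \geq 0$ on $[0, 1/5]$. This follows since $f(0) = 0$ and $f'(S) = 2e^{-2S} - 1 \geq 0$ on $[0, (\ln 2)/2]$, and $1/5 < (\ln 2)/2$. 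If instead $S > 1/5$, then monotonicity gives $(1 - e^{-2S})/2 > (1 - e^{-2/5})/2 > 0.16 > 1/10$.

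Combining the two cases yields $\mathrm{Unbias}(W) \geq \min\{S/2, 1/10\}$, which is exactly the desired conclusion. I expect no real obstacle here: the only ``choice'' is the constant $1/5$ at which to split, picked so that it lies safely below the critical value $(\ln 2)/2$ where $1 - e^{-2S} - S$ is still increasing, while also being large enough that $(1 - e^{-2/5})/2$ comfortably exceeds $1/10$. The overall structure — identity via Fourier, then $1 - x \leq e^{-x}$, then a case split — is the cleanest route and avoids any need to open up the product combinatorially.
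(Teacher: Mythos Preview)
Your proof is correct and follows essentially the same route as the paper: derive $\mathrm{Unbias}(W) = \tfrac{1}{2}\bigl(1 - \prod_j (1-2u_j)\bigr)$ via independence, bound the product by $e^{-2S}$, and then case-split on $S$. The only difference is cosmetic: the paper splits at $S = 1/2$ (using $e^{-2S} \leq 1 - S$ on $[0,1/2]$) rather than at $S = 1/5$, but either threshold works and the argument is otherwise identical.
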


\begin{proof}
  From Proposition~\ref{prop:bias-unbias}, we get that $\unbias(W) = \frac{1 - \bias(W)}{2} = \frac{1-\abs{\mathbb{E}[(-1)^W]}}{2}$. We also get the following.
  \begin{align*}
    \abs{\mathbb{E}[(-1)^W]} = \abs{\mathbb{E}[(-1)^{\bigoplus_{j\in [n]} W_j}]} = \abs{\mathbb{E}[(-1)^{\sum_{j\in[n]}W_j}]} = \abs{\prod_{j\in[n]}{\mathbb{E}[(-1)^{W_j}]}} = \prod_{j\in[n]}\bias(W_j).
  \end{align*}
  Thus,
  \begin{align*}
    \unbias(W) = \frac{1-\prod_{j\in[n]}(1-2\cdot\unbias(W_j))}{2}
    &\geq \frac{1-\prod_{j\in[n]}e^{-2\cdot\unbias(W_{j})}}{2}\\ &=\frac{1 - e^{-2(\sum_{j\in[n]}\unbias(W_j))}}{2}.
  \end{align*}
  If $\sum_{j\in[n]}\unbias(W_j) \geq \frac{1}{2}$, then $\unbias(W) \geq \frac{1-e^{-1}}{2}\geq \frac{1}{10}$. Else, $\unbias(W)\geq \frac{\sum_{j\in[n]}\unbias(W_j)}{2}.$ The latter follows from the inequality $e^{x}\leq 1 - \frac{x}{2}$ for all $x\in[0,1/2]$. This proves the proposition.
\end{proof}

Let $A_1,\ldots,A_N$ be any $N$ independent random variables taking values over any finite set. Let $B_1,\ldots,B_M$ be $M$ \emph{Boolean} random variables with $B_i = f_i(A_j : j\in S_i)$ for some function $f_i$ and some $S_i\subseteq [N].$ We say $B_1,\ldots,B_M$ are read-$k$ for $k\geq 1$ if each $j\in [N]$ belongs to at most $k$ many sets $S_i.$ 

A result of Janson~\cite{Janson} yields concentration bounds for sums of read-$k$ Boolean random variables. We use the following form of the bound that appears in the result of Gavinsky, Lovett, Saks and Srinivasan~\cite{GLSS}.

\begin{theorem}[\cite{GLSS}]
\label{thm:GLSS}
If $B_1,\ldots, B_M$ are read-$k$ Boolean random variables and $B = \sum_i {B_i}$, then 
\[
\prob{}{B\leq \frac{1}{2}\avg{}{B}} \leq \exp(-\Omega(\avg{}{B}/k)).
\]
\end{theorem}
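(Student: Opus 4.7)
The statement (Theorem~\ref{thm:GLSS}) is a concentration inequality for sums of Boolean random variables with bounded read-dependence on the underlying independent variables. The plan is to reduce to sums of mutually independent Boolean random variables, for which the standard lower-tail Chernoff bound (Theorem~\ref{thm:Chernoff}) gives exponential concentration.

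The first step I would carry out is to partition $[M]$ into color classes $T_1, \ldots, T_k$ such that within each class the variables $\{B_i : i \in T_\ell\}$ are mutually independent. Two variables $B_i, B_{i'}$ are independent precisely when $S_i \cap S_{i'} = \emptyset$, so the desired partition is a strong coloring of the hypergraph on $[M]$ whose hyperedges are the cliques $C_j = \{i : j \in S_i\}$ for $j \in [N]$. The read-$k$ hypothesis guarantees $|C_j| \leq k$ for all $j$, which is necessary for such a $k$-coloring to exist but not sufficient (e.g., three variables pairwise sharing distinct inputs form a triangular conflict graph needing three colors when $k=2$). Obtaining a partition with the right guarantees---ideally $k$ classes of roughly equal expectation $\mathbb{E}[B^{(\ell)}] \approx \mathbb{E}[B]/k$, where $B^{(\ell)} = \sum_{i \in T_\ell} B_i$---is the main technical obstacle of this approach; one could attempt either a greedy/probabilistic construction or invoke a Hajnal--Szemer\'edi-type equitable coloring of the conflict graph.

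Granted such an equitable independent partition, the remainder is routine. On the event $\{B \leq \mathbb{E}[B]/2\}$ we have $\sum_\ell (\mathbb{E}[B^{(\ell)}] - B^{(\ell)}) \geq \mathbb{E}[B]/2$, so by pigeonhole there exists $\ell^*$ with $\mathbb{E}[B^{(\ell^*)}] - B^{(\ell^*)} \geq \mathbb{E}[B]/(2k) \approx \mathbb{E}[B^{(\ell^*)}]/2$. The variables in $T_{\ell^*}$ are independent, so applying Theorem~\ref{thm:Chernoff} yields $\Pr[B^{(\ell^*)} \leq \mathbb{E}[B^{(\ell^*)}]/2] \leq \exp(-\mathbb{E}[B^{(\ell^*)}]/8) = \exp(-\Omega(\mathbb{E}[B]/k))$, and a union bound over the $k$ choices of $\ell^*$ preserves the exponent up to an additive $\log k$ (absorbed by the $\Omega$). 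A fully rigorous treatment, as in~\cite{GLSS} building on Janson~\cite{Janson}, sidesteps the coloring obstacle entirely by bounding the moment generating function $\mathbb{E}[\exp(-\lambda B)]$ directly using the read-$k$ structure---this is the route I would ultimately follow for a complete proof, since the partitioning approach above hinges on a combinatorial step that is not guaranteed to succeed in the worst case.
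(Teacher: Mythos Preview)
The paper does not prove Theorem~\ref{thm:GLSS}; it is stated as a cited result from~\cite{GLSS} (building on Janson~\cite{Janson}) and used as a black box in the preliminaries section. There is therefore no ``paper's own proof'' to compare your proposal against.

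That said, your sketch is a reasonable heuristic outline, and you correctly identify its own weak point: the existence of a partition of $[M]$ into $k$ classes of mutually independent variables with balanced expectations is not guaranteed by the read-$k$ hypothesis alone. Your closing remark is accurate---the actual proof in~\cite{GLSS} avoids the coloring step entirely and works directly with the moment generating function, using the read-$k$ structure to factor $\mathbb{E}[\exp(-\lambda B)]$ across a fractional cover (this is where Janson's inequality enters). If you were to write a self-contained proof, that is the route to follow; the partition-and-Chernoff approach, as you note, does not go through in general.
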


\section{The Hard polynomials and their restrictions}
\label{sec:polysandrests}
\subsection{The Hard polynomials}
\label{sec:polynomial}
The hard polynomial for circuits of product-depth $\Delta$ will be defined to be the polynomial computed by a suitable ABP. The definition is by induction on the product depth $\Delta$. We also define some auxiliary variable sets that will be useful later when we restrict these polynomials. 

Let $m\in \mathbb{N}$ be a growing parameter. We will define for each $\Delta\geq 1$ an ABP $G^{(\Delta)}$ on a variable set $X^{(\Delta)}$ of size  $n_\Delta$, along with some auxiliary variable sets $Y^{(\Delta)}$ and $Z^{(\Delta)}$. The parameter $n_\Delta$ itself is defined inductively as follows. Let $n_0 = 8$ and for each $\Delta\geq 1,$ define
\begin{equation}
\label{eq:defn_im_i}
n_\Delta = 2 \cdot m\cdot n_{\Delta-1}.
\end{equation}
Observe that $n_{\Delta} = 8\cdot (2m)^{\Delta}$ for $\Delta \geq 1.$

\paragraph{The Variable sets.} For each $\Delta \geq 0,$ we will define three disjoint variable sets $X^{(\Delta)},Y^{(\Delta)}$ and $Z^{(\Delta)}.$ We will have $|X^{(\Delta)}| = n_\Delta$ for each $\Delta.$

Let $X^{(0)}$ be the set $\{x_{i,j}^{(0)}\ |\ i\in [4],j\in [2]\}$, $Y^{(0)} = \{y_1^{(0)},y_2^{(0)}\}$ and $Z^{(0)} = \{z_1^{(0)},z_2^{(0)}\}.$ Note that $|X^{(0)}| = n_0 = 8.$

Given $X^{(\Delta)}, Y^{(\Delta)}$ and $Z^{(\Delta)},$ we now define $X^{(\Delta+1)}, Y^{(\Delta+1)}$ and $Z^{(\Delta + 1)}$ as follows.

\paragraph{Clones, segments and half-segments.} For each $i\in [m]$ and $j\in [2]$, let $X^{(\Delta+1)}_{i,j}$, $Y^{(\Delta+1)}_{i,j}$ and $Z^{(\Delta+1)}_{i,j}$ be pairwise disjoint copies of $X^{(\Delta)}, Y^{(\Delta)}$ and $Z^{(\Delta)}$ respectively. Define $X^{(\Delta+1)} = \bigcup_{i,j} X^{(\Delta+1)}_{i,j}$; note that $|X^{(\Delta+1)}| = 2m |X^{(\Delta)}| = n_{\Delta+1}$ as desired. Define $Y^{(\Delta+1)}$ to be $\bigcup_{i} Y^{(\Delta+1)}_{i,1} \cup Y^{(\Delta+1)}_{i,2}  \cup \{y^{(\Delta+1)}_i\}$ and $Z^{(\Delta+1)}$ to be $\bigcup_{i} Z^{(\Delta+1)}_{i,1} \cup Z^{(\Delta+1)}_{i,2}  \cup \{z^{(\Delta+1)}_i\}$, where $y_i^{(\Delta+1)}$ and $z_i^{(\Delta+1)}$ are fresh variables. 

By the inductive definition of the sets $X^{(\Delta+1)}$ above, we see that $X^{(\Delta+1)}$ is made up of $2m$ copies of $X^{(\Delta)}$, which we denote by $X^{(\Delta+1)}_{i,j}$ for $i\in [m]$ and $j\in [2].$ Using this fact inductively, we see that for any $t\in \{0,\ldots,\Delta+1\},$ $X^{(\Delta+1)}$ contains $(2m)^t$ copies of $X^{(\Delta+1-t)}$. Each such copy is uniquely labelled by a tuple $\omega = ((i_1,j_1),\ldots,(i_t,j_t))\in ([m]\times [2])^t$; we denote this copy by $X^{(\Delta+1)}_\omega$ and call this an \emph{$X^{(\Delta+1-t)}$-clone}. In a similar way, we see that for each $\omega\in ([m]\times [2])^t$, $Y^{(\Delta+1)}$ and $Z^{(\Delta+1)}$ contain copies of $Y^{(\Delta+1-t)}$ and $Z^{(\Delta+1-t)}$ respectively, which we denote as $Y^{(\Delta+1)}_\omega$ and $Z^{(\Delta+1)}_\omega$ respectively, and call $Y^{(\Delta+1-t)}$-clones and $Z^{(\Delta+1-t)}$-clones respectively.

Say $X^{(\Delta')}_\omega$ is an $X^{(\Delta)}$-clone for some $\Delta' \geq \Delta \geq 1.$ Then we refer to $X^{(\Delta')}_{(\omega,(i,j))}$ (which is an $X^{(\Delta-1)}$-clone) as the $(i,j)$th \emph{half-segment} of $X^{(\Delta')}_\omega$. Further, we refer to $X^{(\Delta')}_{(\omega,(i,1))}\cup X^{(\Delta')}_{(\omega,(i,2))}$ as the $i$th \emph{segment} of $X^{(\Delta')}_\omega,$ which we will denote $X^{(\Delta')}_{(\omega,i)}$; note that $X^{(\Delta')}_\omega$ is a union of its $m$ segments. Similarly a $Y^{(\Delta)}$-clone $Y^{(\Delta')}_\omega$ (resp.\ a $Z^{(\Delta)}$-clone $Z^{(\Delta')}_\omega$) is a disjoint union of $m$ segments $Y^{(\Delta')}_{(\omega,i)}$ (resp.\ $Z^{(\Delta')}_{(\omega,i)}$) for $i\in [m]$, the $i$th of which is made up of two half-segments $Y^{(\Delta')}_{(\omega,(i,1))}$ and $Y^{(\Delta')}_{(\omega,(i,2))}$ (resp.\ $Z^{(\Delta')}_{(\omega,(i,1))}$ and $Z^{(\Delta')}_{(\omega,(i,2))}$) and an additional fresh variable that we denote $y^{(\Delta')}_{(\omega,i)}$ (resp.\ $z^{(\Delta')}_{(\omega,i)}$). We refer to a set of the form $X^{(\Delta')}_{(\omega,i)}$ as an $X^{(\Delta)}$-segment\footnote{The reader may want to read ``$X^{(\Delta)}$-segment'' as ``a segment of (a clone of) $X^{(\Delta)}$'' and similarly for other segments and half-segments.} and a set of the form $X^{(\Delta')}_{(\omega,(i,j))}$ as an $X^{(\Delta)}$-half-segment (similarly $Y^{(\Delta)}$-segment, $Y^{(\Delta)}$-half-segment, $Z^{(\Delta)}$-segment and $Z^{(\Delta)}$-half-segment).

To summarize, given any $\Delta'\geq \Delta\geq 0$ the set $X^{(\Delta')}$ contains $(2m)^{\Delta'-\Delta}$ many $X^{(\Delta)}$-clones which are indexed by elements of the set $([m]\times [2])^{\Delta'-\Delta}$. If $\Delta \geq 1,$ each such $X^{(\Delta)}$-clone contains $m$ many $X^{(\Delta)}$-segments, which are indexed by the set $([m]\times [2])^{\Delta'-\Delta} \times [m];$ and each segment contains two $X^{(\Delta-1)}$-clones, also referred to as $X^{(\Delta)}$-half-segments. Finally, these definitions extend naturally to $Y^{(\Delta')}$- and $Z^{(\Delta')}$-clones.

\paragraph{The Hard polynomial.} The hard polynomial for product-depth $\Delta$ is defined by induction on $\Delta.$

Define $G^{(0)}$ to be an ABP as follows. Assume that $G^{(0)}$ has source vertex $s_0$, sink vertex $t_0$ and an intermediate vertex $u_0$. The graph of the ABP consists of two disjoint paths $\pi_1$ and $\pi_2$ of length $2$ each from $s_0$ to $u_0$ and two disjoint paths $\pi_3$ and $\pi_4$ of length $2$ each from $u_0$ to $t_0$ (see Figure~\ref{fig:g0}). The edges of $\pi_i$ are labelled by the distinct variables $x_{i,1}^{(0)}$ and $x_{i,2}^{(0)}$. Let $X^{(0)}$ be the set $\{x_{i,j}^{(0)}\ |\ i\in [4],j\in [2]\}$ of variables that appear in $G^{(0)}.$ Note that $G^{(0)}$ computes a polynomial over the variable set $X^{(0)}$ defined above.

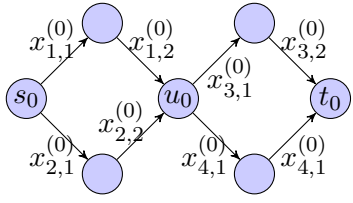
\begin{figure}
\centering
 \begin{tikzpicture}
    \node[vertex] (a) at (0,0) {$s_{0}$};
    \node[vertex] (b) at (1,1) {};
    \node[vertex] (c) at (1,-1) {};
    \node[vertex] (d) at (2,0) {$u_{0}$};

    \node[vertex] (f) at (3,1) {};
    \node[vertex] (g) at (3,-1) {};
    \node[vertex] (h) at (4,0) {$t_{0}$};

    \draw[->] (a) -- node[above,pos=0.25] {$x_{1,1}^{(0)}$} (b);
    \draw[->] (b) -- node[above,pos=0.75] {$x_{1,2}^{(0)}$} (d);
    \draw[->] (d) -- node[below,pos=0.8] {$x_{3,1}^{(0)}$} (f);
    \draw[->] (f) -- node[above,pos=0.75] {$x_{3,2}^{(0)}$} (h);

    \draw[->] (a) -- node[below,pos=0.25] {$x_{2,1}^{(0)}$} (c);
    \draw[->] (c) -- node[above,pos=0.1] {$x_{2,2}^{(0)}$} (d);
    \draw[->] (d) -- node[below,pos=0.25] {$x_{4,1}^{(0)}$} (g);
    \draw[->] (g) -- node[below,pos=0.75] {$x_{4,1}^{(0)}$} (h);
  \end{tikzpicture}
  \caption{The ABP $G^{(0)}$. All edges go from left to right.}
  \label{fig:g0}
\end{figure}

Now fix any $\Delta \geq 0$. Given $G^{(\Delta)}$, an ABP on variable set $X^{(\Delta)}$, we inductively define the ABP $G^{(\Delta +1)}$ an ABP on variable set $X^{(\Delta+1)}$ as follows. 

\begin{itemize}
\item For each $i\in[m]$ and $j\in[2]$, let $G^{(\Delta+1)}_{i,j}$ be a copy of $G^{(\Delta)}$  on the variable set $X^{(\Delta+1)}_{i,j}$. 

\item Let $H^{(\Delta+1)}_{i}$ be the ABP obtained by composing $G^{(\Delta+1)}_{i,1}$ and $G^{(\Delta+1)}_{i,2}$ in parallel. This ABP is defined over variable set $X^{(\Delta+1)}_i.$ 

\item Finally, let $G^{(\Delta+1)}$ be the ABP obtained by composing $H^{(\Delta+1)}_{1}, \ldots, H^{(\Delta+1)}_{m}$ in series, in that order. 
\end{itemize}

From the definition of $G^{(\Delta)}$ it is easy to observe the following properties.
\begin{proposition}
\label{prop:graph}
For any $\Delta \geq 0$, $G^{(\Delta)}$ has a unique source, say $s_\Delta$, and a unique sink, say $t_\Delta$. Also, each edge in $G^{(\Delta)}$ appears on some source to sink path. 
\end{proposition}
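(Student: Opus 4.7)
The plan is to proceed by induction on $\Delta$. For the base case $\Delta = 0$, one reads off the claim directly from the explicit description of $G^{(0)}$: the vertex $s_0$ has only outgoing edges, $t_0$ has only incoming edges, and every other vertex (including $u_0$) has both, so $s_0$ and $t_0$ are the unique source and sink. For the edge-coverage property, each edge lies on one of the four length-$2$ paths $\pi_1, \pi_2$ (from $s_0$ to $u_0$) or $\pi_3, \pi_4$ (from $u_0$ to $t_0$), and concatenating any $\pi_i$ ($i\in \{1,2\}$) with any $\pi_j$ ($j\in \{3,4\}$) produces an $s_0$-$t_0$ path covering the chosen edge.

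For the inductive step I would handle the parallel composition and the series composition separately. Assume the claim holds for $G^{(\Delta)}$. For $H^{(\Delta+1)}_i$, which identifies the unique sources (resp.\ sinks) of two copies of $G^{(\Delta)}$: the merged source still has in-degree $0$ (both constituents had in-degree $0$ there) and the merged sink still has out-degree $0$, while every remaining vertex is an internal vertex of one of the two copies and therefore has both incoming and outgoing edges. Applying the inductive hypothesis inside each copy, every edge of $H^{(\Delta+1)}_i$ lies on a source-to-sink path of $H^{(\Delta+1)}_i$.

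For $G^{(\Delta+1)}$, the series composition identifies the sink of $H^{(\Delta+1)}_i$ with the source of $H^{(\Delta+1)}_{i+1}$; the identified vertex thereby gains both positive in-degree (from $H^{(\Delta+1)}_i$) and positive out-degree (from $H^{(\Delta+1)}_{i+1}$), so it is neither a source nor a sink. Hence the only source is the source of $H^{(\Delta+1)}_1$ (which I label $s_{\Delta+1}$) and the only sink is the sink of $H^{(\Delta+1)}_m$ (which I label $t_{\Delta+1}$). For the edge-coverage claim, any edge of $G^{(\Delta+1)}$ sits inside some $H^{(\Delta+1)}_i$ and, by the previous paragraph, lies on an internal source-to-sink path of $H^{(\Delta+1)}_i$; concatenating this with any source-to-sink path inside each $H^{(\Delta+1)}_k$ for $k \neq i$ (such a path exists since $G^{(\Delta+1)}_{k,1}$ has one by the inductive hypothesis) yields an $s_{\Delta+1}$-$t_{\Delta+1}$ path containing the chosen edge.

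There is essentially no substantive obstacle: the statement is purely structural and follows bookkeeping-style from the two composition rules. The only care needed is to track the source/sink identifications across the parallel and series steps so as to confirm that no new sources or sinks are created and that the existence of internal source-to-sink paths is preserved under composition.
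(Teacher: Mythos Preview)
Your induction argument is correct and is exactly the natural way to verify this structural fact. The paper itself does not give a proof here: it simply states that the proposition follows ``from the definition of $G^{(\Delta)}$,'' so your write-up supplies the routine bookkeeping that the authors omitted.
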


We define $P^{(\Delta)}$ to be the multilinear polynomial in ${\F}[X^{(\Delta)}]$ computed by the ABP $G^{(\Delta)}$. We can also note the following properties of the polynomial computed by $G^{(\Delta)}$. 
\begin{proposition}[Properties of $P^{(\Delta)}$]
\label{prop:poly}
\begin{enumerate}
\item $P^{(0)}$ is the polynomial $\sum_{i_1 \in \{1,2\}, i_2 \in \{3,4\}} x^{(0)}_{i_1,1} \cdot  x^{(0)}_{i_1,2} \cdot  x^{(0)}_{i_2,1} \cdot  x^{(0)}_{i_2,2}.$

\item For each $\Delta \geq 0$, $$P^{(\Delta+1)}(X^{(\Delta+1)}) = \prod_{i\in [m]} \left(P^{(\Delta)}(X^{(\Delta+1)}_{i,1}) + P^{(\Delta)}(X^{(\Delta+1)}_{i,2})\right)  $$
\item The polynomial $P^{(0)}$ is computed by a $\Sigma \Pi$ multilinear formula of size $O(1).$ For each $\Delta \geq 1,$ $P^{(\Delta)}$ can be computed by a syntactic multilinear $(\Pi\Sigma)^{\Delta} \Pi$ formula of size $O(n_\Delta).$
\end{enumerate}
\end{proposition}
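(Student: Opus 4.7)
All three parts follow essentially from unwinding the inductive definitions of $G^{(\Delta)}$ and $X^{(\Delta)}$, together with the standard formulas for series and parallel composition of ABPs recorded in Section~\ref{sec:mlmodels}. The main work is bookkeeping, not a genuine obstacle; the one subtlety I anticipate is matching the product-depth pattern $(\Pi\Sigma)^{\Delta}\Pi$ exactly in Part~3.

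\medskip

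\noindent For Part 1, I would read off $P^{(0)}$ from the picture of $G^{(0)}$. The ABP decomposes into two halves: from $s_0$ to $u_0$ there are two internally-disjoint paths contributing monomials $x^{(0)}_{1,1}x^{(0)}_{1,2}$ and $x^{(0)}_{2,1}x^{(0)}_{2,2}$, and from $u_0$ to $t_0$ there are two internally-disjoint paths contributing $x^{(0)}_{3,1}x^{(0)}_{3,2}$ and $x^{(0)}_{4,1}x^{(0)}_{4,2}$. Every $s_0$-$t_0$ path is the concatenation of exactly one top-half and one bottom-half path, so summing over all such concatenations gives
\[
P^{(0)}=\bigl(x^{(0)}_{1,1}x^{(0)}_{1,2}+x^{(0)}_{2,1}x^{(0)}_{2,2}\bigr)\bigl(x^{(0)}_{3,1}x^{(0)}_{3,2}+x^{(0)}_{4,1}x^{(0)}_{4,2}\bigr),
\]
which is exactly the sum over $(i_1,i_2)\in\{1,2\}\times\{3,4\}$ in the statement.

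\medskip

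\noindent For Part 2, I would invoke the composition formulas for ABPs. By construction $H^{(\Delta+1)}_i$ is the parallel composition of $G^{(\Delta+1)}_{i,1}$ and $G^{(\Delta+1)}_{i,2}$ on the disjoint variable sets $X^{(\Delta+1)}_{i,1}$ and $X^{(\Delta+1)}_{i,2}$, each of which is a fresh copy of $G^{(\Delta)}$; hence $H^{(\Delta+1)}_i$ computes $P^{(\Delta)}(X^{(\Delta+1)}_{i,1})+P^{(\Delta)}(X^{(\Delta+1)}_{i,2})$. Then $G^{(\Delta+1)}$ is the series composition of $H^{(\Delta+1)}_1,\dots,H^{(\Delta+1)}_m$, whose variable supports $X^{(\Delta+1)}_i$ are pairwise disjoint, so the polynomial it computes is the product of the $m$ factors above, which is the claimed identity.

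\medskip

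\noindent For Part 3, I would argue by induction on $\Delta$, taking Part~1 as the base case (a $\Sigma\Pi$ formula with four product terms, each a product of four variables, of constant total size). For the inductive step, Part~2 writes
\[
P^{(\Delta+1)}=\prod_{i\in[m]}\bigl(P^{(\Delta)}(X^{(\Delta+1)}_{i,1})+P^{(\Delta)}(X^{(\Delta+1)}_{i,2})\bigr).
\]
Plug in the inductive formula for each $P^{(\Delta)}$-copy. For $\Delta\ge 1$ the inductive formula is $(\Pi\Sigma)^{\Delta}\Pi$; wrapping it with an inner $\Sigma$ and then the outer $\Pi_{i}$ gives $\Pi\Sigma(\Pi\Sigma)^{\Delta}\Pi=(\Pi\Sigma)^{\Delta+1}\Pi$ as desired. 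For the $\Delta=0\to 1$ transition I would collapse the two adjacent sum layers created at the interface between the outer $\Sigma$ and the $\Sigma$-root of the $\Sigma\Pi$ formula for $P^{(0)}$, yielding a $\Pi\Sigma\Pi=(\Pi\Sigma)^{1}\Pi$ formula. Syntactic multilinearity is immediate at every step: the only product gates are the top $\Pi_{i\in[m]}$ and the product gates inside the $P^{(\Delta)}$-copies, and the disjointness of the sets $X^{(\Delta+1)}_{i,j}$ across $(i,j)$ ensures the subformulas under any product gate have disjoint variable supports. The size recurrence $s(\Delta+1)\le 2m\cdot s(\Delta)+O(m)$ with $s(0)=O(1)$ unwinds to $s(\Delta)=O((2m)^{\Delta})=O(n_{\Delta})$, completing the bound.
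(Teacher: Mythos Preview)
Your proposal is correct and is precisely the natural unwinding the paper has in mind; the paper itself treats Proposition~\ref{prop:poly} as immediate from the construction and gives no explicit proof. The only points worth noting are that you have handled the $\Delta=0\to 1$ interface (collapsing the adjacent $\Sigma$ layers) and the size recurrence carefully, which is exactly what is needed.
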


As with the variable sets, we see that for any $\Delta' \geq \Delta\geq 1,$ and for each $\omega\in ([m]\times [2])^{\Delta'-\Delta}$, the ABP $G^{(\Delta')}$ contains a corresponding copy of $G^{(\Delta)}$ on the variable set $X^{(\Delta')}_\omega.$ We denote this copy of $G^{(\Delta)}$ by $G^{(\Delta')}_\omega.$ The ABP $G^{(\Delta')}_\omega$ is obtained by composing in series the ABPs $H^{(\Delta')}_{(\omega,1)},\ldots,H^{(\Delta')}_{(\omega,m)}$ (defined on the segments $X^{(\Delta')}_{(\omega,1)},\ldots,X^{(\Delta')}_{(\omega,m)}$ respectively) in that order.

\subsection{Restrictions}
\label{sec:restriction}

We define a random multilinear\footnote{See Section~\ref{sec:polyrest} for the definition.} $(X^{(\Delta+1)}, Y^{(\Delta+1)},Z^{(\Delta+1)})$-restriction $\rho^{(\Delta+1)}$ by an inductively defined sampling process. 

\paragraph{Base case.} Let $\rho^{(0)}$ be defined as follows: $\rho^{(0)}(x^{(0)}_{1,1}) = y^{(0)}_1, \rho^{(0)}(x^{(0)}_{2,1}) = z^{(0)}_1$, $\rho^{(0)}(x^{(0)}_{3,1}) = y^{(0)}_2, \rho^{(0)}(x^{(0)}_{4,1}) = z^{(0)}_2$ (see Figure~\ref{fig:base-restriction}). That is, we set the first variable in each of the paths $\pi_1,\pi_2,\pi_3,\pi_4$ to a distinct variable in $Y^{(0)}\cup Z^{(0)}$. Also, we set $\rho^{(0)}(x^{(0)}_{i,2}) = 1$ for each $i \in [2]$. That is, we set all the remaining variables to the constant $1$. (Note that $\rho^{(0)}$ is in fact a \emph{deterministic} $(X^{(0)},Y^{(0)},Z^{(0)})$-restriction.) It can be checked that $\rho^{(0)}$ is multilinear. 

\paragraph{Inductive case.} Let us now assume that we have defined the process for sampling the  random multilinear $(X^{(\Delta)},Y^{(\Delta)},Z^{(\Delta)})$-restriction $\rho^{(\Delta)}$. We define $\rho^{(\Delta+1)}$ by the following sampling process.

For each $i \in [m]$, we sample a random multilinear $(X^{(\Delta+1)}_i,Y^{(\Delta+1)}_i,Z^{(\Delta+1)}_i)$-restriction $\rho^{(\Delta+1)}_i$ independently using the general sampling process described below. We then define $\rho^{(\Delta+1)}$ as a composition of these restrictions, i.e. $\rho^{(\Delta+1)} = \rho^{(\Delta+1)}_1\circ \cdots \circ \rho^{(\Delta+1)}_{m}$ (as defined in Section~\ref{sec:polyrest}). Clearly, $\rho^{(\Delta+1)}$ is multilinear since each $\rho^{(\Delta+1)}_i$ is multilinear.

We now give a general sampling process to sample a random multilinear $(X,Y,Z)$-restriction where $X$ is an $X^{(\Delta+1)}$-segment and $Y$ and $Z$ are the corresponding $Y^{(\Delta+1)}$- and $Z^{(\Delta+1)}$-segments respectively.

For the remainder of this section, let $\Delta'\geq \Delta+1 \geq 1$ be arbitrary and for some $\omega\in ([m]\times [2])^{\Delta'-\Delta-1}$ and $i\in [m]$, let $X$ denote the $i$th segment $X^{(\Delta')}_{(\omega,i)}$ of the $X^{(\Delta+1)}$-clone $X^{(\Delta')}_\omega$ and let $Y = Y^{(\Delta')}_{(\omega,i)}$, $Z = Z^{(\Delta')}_{(\omega,i)}.$ For any $j\in [2]$, let $X_j,Y_j,$ and $Z_j$ denote the $X^{(\Delta)}$-clones $X^{(\Delta')}_{(\omega,(i,j))}, Y^{(\Delta')}_{(\omega,(i,j))}$ and $Z^{(\Delta')}_{(\omega,(i,j))}$ respectively, and let $G_j$ denote the ABP $G^{(\Delta')}_{(\omega,(i,j))}.$ 

Let $y,z$ denote the variables $y^{(\Delta')}_{(\omega,i)}$ and $z^{(\Delta')}_{(\omega,i)}$ in the sets $Y$ and $Z$ respectively (recall that we have $Y = Y_1\cup Y_2 \cup \{y\}$ and $Z = Z_1\cup Z_2 \cup \{z\}$). Let $H$ denote the ABP $H^{(\Delta')}_{(\omega,i)}$ (recall that $H$ is the parallel composition of $G_1$ and $G_2$).

We show now how to sample for any such $X,Y,Z$ a random multilinear $(X,Y,Z)$-restriction $\rho$.

\paragraph{Sampling Algorithm $\mc{A}$ for $(X,Y,Z)$-restriction $\rho$.}

\begin{itemize}
\item[$E_1$:] Set all the variables from the set $X_2$ to $0$. For the variables in $X_1$, sample a random  $(X_1,Y_1,Z_1)$-restriction $\rho_1$ using the sampling procedure for $\rho^{(\Delta)}$. (Recall that $X_1,Y_1,Z_1$ are  $X^{(\Delta)}$-, $Y^{(\Delta)}$-, and $Z^{(\Delta)}$-clones respectively.) Set variables in $X_{1}$ according to $\rho_1$.\label{i:rest1}
\item[$E_2$:] \label{i:rest2} This is the same as $E_1$ except that the roles of $X_1$ and $X_2$ are exchanged. Formally, we set all the variables from the set $X_1$ to $0$ and apply a random $(X_2,Y_2,Z_2)$-restriction $\rho_2$, sampled using the sampling procedure for $\rho^{(\Delta)}$, to $X_2$ . 
\item[$E_3$:] Choose variables $x_1$ and $x_2$ independently and uniformly at random from $X_1$ and $X_2$ respectively. The variable $x_j$ ($j\in [2]$) labels a unique edge, say $e_j$, in ABP $G_j$; let $\pi_{e_j}$ denote the lexicographically smallest source to sink path in $G^{(\Delta+1)}_{i,j}$ containing $e_j$. For any variable $x\in X$ which does not label an edge on either of the paths $\pi_{e_1}$ or $\pi_{e_2}$, set $x$ to the constant $0$. 

Set $x_1$ to $y$ and $x_2$ to $z$. 

For any edge $e$ other than $e_1,e_2$ which lies on either $\pi_{e_1}$ or $\pi_{e_2},$ set the variable labelling $e$ to $1$.

\end{itemize}

\subsection{Properties of $\rho^{(\Delta)}$}
\label{sec:rest-props}

Fix some $\Delta \geq 1.$ For a given random choice of $\rho^{(\Delta)}$, we use $\tilde{Y}$ and $\tilde{Z}$ to denote $\Img(\rho^{(\Delta)})\cap Y^{(\Delta)}$  and $\Img(\rho^{(\Delta)})\cap Z^{(\Delta)}$ respectively. Note that these are \emph{random} sets.

Given any $f\in \F[X^{(\Delta)}],$ the polynomial $f|_{\rho^{(\Delta)}}$ belongs to the set $\F[\tilde{Y}\cup \tilde{Z}].$ By the multilinearity of $\rho^{(\Delta)},$ if $f$ is multilinear, then so is $f|_{\rho^{(\Delta)}}.$

\begin{lemma}
\label{lem:fullrank}
With probability $1$, $\relrk_{(\tilde{Y},\tilde{Z})}(P^{(\Delta)}|_{\rho^{(\Delta)}}) = 1.$
\end{lemma}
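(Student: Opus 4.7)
The plan is to prove this by induction on $\Delta$. The base case $\Delta=0$ is a direct computation: substituting $\rho^{(0)}$ into the explicit form of $P^{(0)}$ given by Proposition~\ref{prop:poly}(1) collapses the four-term sum to the product $(y_1^{(0)}+z_1^{(0)})(y_2^{(0)}+z_2^{(0)})$. Since $\tilde Y=Y^{(0)}$ and $\tilde Z=Z^{(0)}$, and the two factors live over the disjoint partitions $(\{y_j^{(0)}\},\{z_j^{(0)}\})$ with relative rank $1$ each, Proposition~\ref{prop:relrk}(3) finishes the base case.

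For the inductive step, I would combine the product formula in Proposition~\ref{prop:poly}(2) with the fact that $\rho^{(\Delta+1)}$ is the composition of the independent per-segment restrictions $\rho^{(\Delta+1)}_i$, each acting on the pairwise-disjoint segment $X^{(\Delta+1)}_i$. This decomposes the restricted polynomial as $P^{(\Delta+1)}|_{\rho^{(\Delta+1)}}=\prod_i Q_i$, where $Q_i$ is the restriction of $P^{(\Delta)}(X^{(\Delta+1)}_{i,1})+P^{(\Delta)}(X^{(\Delta+1)}_{i,2})$ by $\rho^{(\Delta+1)}_i$ and lives over $\tilde Y_i\cup\tilde Z_i$, with $\tilde Y_i:=\Img(\rho^{(\Delta+1)}_i)\cap Y^{(\Delta+1)}_i$ and analogously for $\tilde Z_i$. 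Since these per-segment image sets are pairwise disjoint with $\tilde Y=\bigcup_i\tilde Y_i$ and $\tilde Z=\bigcup_i\tilde Z_i$, Proposition~\ref{prop:relrk}(3) reduces the lemma to proving $\relrk_{(\tilde Y_i,\tilde Z_i)}(Q_i)=1$ almost surely for each $i$ separately.

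To analyze a single $Q_i$, I would case-split on which of the three branches $E_1,E_2,E_3$ was taken by the sampling algorithm $\mc{A}$. In branch $E_1$ one summand is killed and the other equals a copy of $P^{(\Delta)}$ restricted by a fresh sample of $\rho^{(\Delta)}$, so the inductive hypothesis applies directly; branch $E_2$ is symmetric. In branch $E_3$, I will argue that in each sub-ABP $G^{(\Delta+1)}_{i,j}$, every edge off the sampled path $\pi_{e_j}$ is zeroed, so the path-sum defining $P^{(\Delta)}(X^{(\Delta+1)}_{i,j})|_\rho$ collapses to the single surviving term corresponding to $\pi_{e_j}$. This term simplifies to the fresh variable $y$ for $j=1$ (resp.\ $z$ for $j=2$), because every on-path label other than the one carrying $e_j$ was set to $1$. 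Hence $Q_i=y+z$, and a one-line computation gives relative rank $1$ over $(\{y\},\{z\})$.

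The only step requiring mild care is branch $E_3$: I must verify that $\pi_{e_j}$ is the unique source-to-sink path in $G^{(\Delta+1)}_{i,j}$ that survives the zeroing. This is immediate from the fact that the edges of the ABP carry pairwise distinct variables, so any other source-to-sink path must use at least one off-path edge and therefore contributes $0$ to the restricted polynomial. Beyond this minor check the argument is entirely structural and deterministic given the realized restriction; in fact $\relrk_{(\tilde Y,\tilde Z)}(P^{(\Delta)}|_{\rho^{(\Delta)}})=1$ holds for \emph{every} realization of $\rho^{(\Delta)}$, which is why the ``with probability $1$'' in the statement is actually unconditional.
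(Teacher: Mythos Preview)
Your proposal is correct and follows essentially the same approach as the paper: induction on $\Delta$, the base case via the explicit computation $P^{(0)}|_{\rho^{(0)}}=(y_1^{(0)}+z_1^{(0)})(y_2^{(0)}+z_2^{(0)})$, and the inductive step by factoring via Proposition~\ref{prop:poly}(2), applying multiplicativity of relative rank (Proposition~\ref{prop:relrk}(3)), and then case-splitting on $E_1,E_2,E_3$ exactly as you describe. If anything, your handling of the $E_3$ branch is slightly more detailed than the paper's, which simply asserts $Q_i^{(\Delta)}=(y_i^{(\Delta)}+z_i^{(\Delta)})$ without spelling out the path-survival argument.
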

\begin{proof}

We prove the lemma by induction on $\Delta.$

The base case corresponds to $\Delta = 0,$ in which case $\tilde{Y} = Y^{(0)} = \{y^{(0)}_1,y^{(0)}_2\}$ and $\tilde{Z} = Z^{(0)} = \{z^{(0)}_1,z^{(0)}_2\}$ deterministically. From the definition of $P^{(0)}$ and $\rho^{(0)}$, we know that $P^{(0)}|_{\rho^{(0)}} = (y^{(0)}_1+ z^{(0)}_1)(y^{(0)}_2 + z^{(0)}_2)$, which can easily be seen to have relative rank $1$ w.r.t. the partition $(\tilde{Y},\tilde{Z})$.

\begin{figure}[H]
\centering
  \begin{tikzpicture}
    \node[vertex] (a) at (0,0) {$s_0$};
    \node[vertex] (b) at (1,1) {};
    \node[vertex] (c) at (1,-1) {};
    \node[vertex] (d) at (2,0) {$u_0$};

    \node[vertex] (f) at (3,1) {};
    \node[vertex] (g) at (3,-1) {};
    \node[vertex] (h) at (4,0) {$t_0$};

    \draw[->] (a) -- node[above,pos=0.25] {$y_{1}^{(0)}$} (b);
    \draw[->] (b) -- node[above,pos=0.75] {$1$} (d);
    \draw[->] (d) -- node[above,pos=0.25] {$y_{2}^{(0)}$} (f);
    \draw[->] (f) -- node[above,pos=0.75] {$1$} (h);

    \draw[->] (a) -- node[below,pos=0.25] {$z_{1}^{(0)}$} (c);
    \draw[->] (c) -- node[below,pos=0.75] {$1$} (d);
    \draw[->] (d) -- node[below,pos=0.25] {$z_{2}^{(0)}$} (g);
    \draw[->] (g) -- node[below,pos=0.75] {$1$} (h);
  \end{tikzpicture}
  \label{fig:base-restriction}
  \caption{The effect of $\rho^{(0)}$ on $G^{(0)}$}
\end{figure}
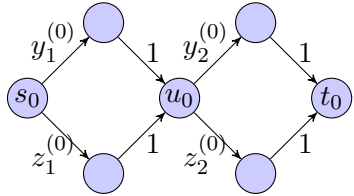

Recall (Proposition~\ref{prop:poly}) that for each $\Delta \geq 1$, $$P^{(\Delta)}(X^{(\Delta)}) = \prod_{i\in [m]} \left(P^{(\Delta-1)}(X^{(\Delta)}_{i,1}) + P^{(\Delta-1)}(X^{(\Delta)}_{i,2})\right)  $$
Let $Q^{(\Delta)}_i$ be equal to $P^{(\Delta-1)}(X^{(\Delta)}_{i,1}) + P^{(\Delta-1)}(X^{(\Delta)}_{i,2})$. The random restriction $\rho^{(\Delta)}$ is defined to be the composition $\rho^{(\Delta)}_1\circ \cdots \circ \rho^{(\Delta)}_m$ where each $\rho^{(\Delta)}_i$ is a random multilinear $(X^{(\Delta)}_i,Y^{(\Delta)}_i,Z^{(\Delta)}_i)$-restriction sampled according to the algorithm $\mc{A}$ described in Section~\ref{sec:restriction}. Thus, we have for any fixing of $\rho^{(\Delta)}_i$ ($i\in [m]$), 
\[
P^{(\Delta)}|_{\rho^{(\Delta)}} = \prod_{i\in [m]} Q^{(\Delta)}|_{\rho^{(\Delta)}_i}
\]
and hence by Proposition~\ref{prop:relrk}, we have
\[
\relrk_{(\tilde{Y},\tilde{Z})}(P^{(\Delta)}|_{\rho^{(\Delta)}}) = \prod_{i\in [m]} \relrk_{(\tilde{Y}_i,\tilde{Z}_i)}(Q^{(\Delta)}|_{\rho^{(\Delta)}_i})
\]
where $\tilde{Y}_i := \Img(\rho^{(\Delta)}_i)\cap Y^{(\Delta)}_i$ and $\tilde{Z}_i := \Img(\rho^{(\Delta)}_i)\cap Z^{(\Delta)}_i$. So it suffices to argue that each term in the above product is $1$ with probability $1$. 

Fix an $i\in [m]$ and consider $\relrk_{(\tilde{Y}_i,\tilde{Z}_i)}(Q^{(\Delta)}|_{\rho^{(\Delta)}_i})$. There are three possibilities for the sampling algorithm $\mc{A}$ in choosing $\rho^{(\Delta)}_i$.

 Say $\mc{A}$ picks option $E_1$. Then, all variables in $X^{(\Delta)}\setminus X^{(\Delta)}_{i,1} = X^{(\Delta)}_{i,2}$ are set to $0$ and $\rho^{(\Delta)}_i$ is simply a copy  of $\rho^{(\Delta-1)}$ defined w.r.t. the sets $(X^{(\Delta)}_{i,1},Y^{(\Delta)}_{i,1},Z^{(\Delta)}_{i,1})$. Thus, by induction $\relrk_{(\tilde{Y}_i,\tilde{Z}_i)}(Q^{(\Delta)}|_{\rho^{(\Delta)}_i})=\relrk_{(\tilde{Y}_i,\tilde{Z}_i)}(P^{(\Delta-1)}|_{\rho^{(\Delta)}_i}) = 1$. A similar reasoning works when $\mc{A}$ picks option $E_2.$
 
In case $\mc{A}$ picks $E_3$ then $Q^{(\Delta)}_i = (y_i^{(\Delta)} +z_i^{(\Delta)})$ and $\tilde{Y}_i = \{y_i^{(\Delta)}\}, \tilde{Z}_i = \{z_i^{(\Delta)}\}.$ It is then easily checked that $\relrk_{(\tilde{Y}_i,\tilde{Z}_i)}(Q^{(\Delta)}|_{\rho^{(\Delta)}_i})=1$. This completes the induction and proves the lemma.
\end{proof}

\section{The Main Result}
\label{sec:mainthm}

The main result of this paper is the following.

\begin{theorem}
\label{thm:PDeltalbd}
 Let $m,\Delta\in \mathbb{N}$ be growing parameters with $\Delta = m^{o(1)}.$\footnote{Since $n_\Delta = O(m)^{\Delta}$, this is equivalent to requiring that $\Delta = o(\log n_\Delta/\log \log n_\Delta).$} Assume that the polynomials $P^{(\Delta)}(X^{(\Delta)})$ are as defined in Section~\ref{sec:polynomial}. Then any multilinear circuit $C$ of product-depth $\Delta$ computing $P^{(\Delta)}$ must have a size of at least $\exp(m^{\Omega(1)}) = \exp(n_{\Delta}^{\Omega(1/\Delta)}).$
\end{theorem}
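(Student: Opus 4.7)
The overall plan is a random-restriction argument combined with the relative-rank measure: fix a multilinear circuit $C$ of product-depth $\Delta$ and size $s$ purportedly computing $P^{(\Delta)}$, apply Lemma~\ref{lem:RY-nf-ckts} to convert it to a syntactic multilinear $(\Sigma\Pi)^{\Delta}\Sigma$ formula $F$ of size at most $s' = (\Delta+1)^2 s^{2\Delta+1}$, and then argue that with positive probability over $\rho^{(\Delta)}$ one has $\relrk_{(\tilde Y,\tilde Z)}(F|_{\rho^{(\Delta)}}) < 1$. Since Lemma~\ref{lem:fullrank} gives $\relrk_{(\tilde Y,\tilde Z)}(P^{(\Delta)}|_{\rho^{(\Delta)}}) = 1$ with probability $1$, this will force a contradiction, yielding the claimed size lower bound.

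To make this work I would prove, by induction on $\Delta$, the more general technical statement referenced as Theorem~\ref{thm:technical}: for any syntactic multilinear formula $Q$ of product-depth at most $\Delta$ and size at most some threshold whose variable set $\Vars(Q)$ covers at least a $\delta$-fraction of each of $K$ copies of $H^{(\Delta)}$ (for carefully chosen parameters $\delta_\Delta, K_\Delta$), the probability that $\relrk_{(\tilde Y,\tilde Z)}(Q|_{\rho^{(\Delta)}})$ exceeds $2^{-\Omega(m^{\beta})}$ is doubly-exponentially small in $m$. At the top level, we may take $\Vars(F) = X^{(\Delta)}$ by Proposition~\ref{prop:vars}, so the hypotheses of Theorem~\ref{thm:technical} hold with $K = m, \delta = 1$; subadditivity of relative rank (Proposition~\ref{prop:relrk}) together with a union bound over the at most $s'$ top summands will then conclude that $\relrk(F|_{\rho^{(\Delta)}}) < 1$ whenever $s' \ll \exp(m^{\Omega(1)})$, which inverts to the stated $\exp(n_\Delta^{\Omega(1/\Delta)})$ bound on $s$.

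The inductive step is the crux. Writing $F = \sum_i T_i$ and each $T_i = \prod_j Q_{i,j}$ with each $Q_{i,j}$ of product-depth at most $\Delta-1$, Proposition~\ref{prop:relrk}(\ref{relrk:mult}) makes $\relrk(T_i|_{\rho^{(\Delta)}})$ factor across $j$, so it suffices to show that many of the factors $Q_{i,j}$ are rank-deficient. I would classify each factor by how its variable set meets the copies of $H^{(\Delta)}$: in the \emph{structured} case, where $\Vars(Q_{i,j})$ covers a $\delta_{\Delta-1}$-fraction of many copies of $H^{(\Delta-1)}$ living inside these $H^{(\Delta)}$'s, I apply the inductive hypothesis after conditioning on the outcome ($E_1$ or $E_2$) of the sampling algorithm $\mc{A}$ of Section~\ref{sec:restriction} on the relevant segments, exploiting that whenever $\mc{A}$ chooses $E_1$ or $E_2$ the restriction on that segment reduces to a fresh copy of $\rho^{(\Delta-1)}$. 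In the \emph{unstructured} case, $\Vars(Q_{i,j})$ is scattered across many copies of $H^{(\Delta)}$ in small pieces, and I re-imagine the randomness of $\mc{A}$ so that in each such copy, the parity of the number of $Y\cup Z$ variables of $Q_{i,j}$ that survive is an unbiased Boolean random variable; Proposition~\ref{prop:xorunbias} then forces this parity to be odd overall with constant probability, and Proposition~\ref{prop:relrk}(1) gives $\relrk(Q_{i,j}|_{\rho^{(\Delta)}}) \leq 1/\sqrt 2$. Combining across $j$ via syntactic multilinearity (which renders the rank-deficiency events read-$O(1)$ in the segment-level randomness) and invoking the read-$k$ concentration bound of Theorem~\ref{thm:GLSS}, I would obtain that $\relrk(T_i|_{\rho^{(\Delta)}})$ is at most $2^{-\Omega(m^{\beta})}$ except with doubly-exponentially small probability, and close by Proposition~\ref{prop:ANDthr} and the union bound over $i$.

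The principal obstacle is the unstructured case together with the bookkeeping of parameters: the thresholds $\delta_\Delta$ and $K_\Delta$ must decay at precisely the right rate (the sketch indicates $\delta \geq 1/M^{0.25}$ against $M$ copies), since if they shrink too quickly the induction collapses because the inductive hypothesis becomes too weak to re-apply, while if they shrink too slowly the unstructured case grows vacuous and fails to cover genuinely scattered subformulas. In particular, after a structured reduction we lose a factor in $K$ (we restrict attention to a subset of copies that are ``touched enough''), and this factor must compose across $\Delta$ levels without degrading below the threshold needed for $\mc{A}$ to produce the required unbiased parities. Tracking the auxiliary $Y^{(\Delta)}, Z^{(\Delta)}$ sets through the recursion (needed so that the relative-rank measure in the inductive claim is well-defined on an intermediate subformula that ``sees'' only part of each clone) adds technical clutter but no new ideas. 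Balancing these parameters and chasing through the recursion from the base case $\Delta = 1$ (which is exactly the $\Sigma\Pi\Sigma$ analysis outlined in the introduction, where the structured/unstructured dichotomy applies to the linear factors of each product term) will yield $\exp(m^{\Omega(1)}) = \exp(n_\Delta^{\Omega(1/\Delta)})$.
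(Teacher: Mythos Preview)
Your reduction of Theorem~\ref{thm:PDeltalbd} to Theorem~\ref{thm:technical} is correct and matches the paper: convert $C$ to a syntactic multilinear $(\Sigma\Pi)^\Delta\Sigma$ formula via Lemma~\ref{lem:RY-nf-ckts}, instantiate Theorem~\ref{thm:technical} with $M=m$, $\varepsilon=1$, $Y'=Z'=\emptyset$, and compare against Lemma~\ref{lem:fullrank}. (Minor point: Theorem~\ref{thm:technical} already applies to the whole formula $F$, with the union bound over summands absorbed into its proof, so you need not break $F$ into summands at the top level.)

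Your sketch of the inductive proof of Theorem~\ref{thm:technical} is also broadly right --- the paper does split into structured cases (its Cases~2 and~3, both handled by induction after conditioning on $\mc{A}$ choosing $E_1$ or $E_2$) and an unstructured case (its Case~1, handled by a parity argument). The serious gap is in your treatment of the unstructured case at depth $\Delta>1$. You assert that the rank-deficiency events for the factors $Q_{i,j}$ are ``read-$O(1)$ in the segment-level randomness'' and plan to apply Theorem~\ref{thm:GLSS}. This is true in the base case $\Delta=1$, where each $X^{(1)}$-segment has only $16$ variables and hence is touched by at most $16$ linear factors; the paper indeed uses the read-$16$ bound there. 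But for $\Delta>1$ an $X^{(\Delta)}$-segment has $2n_{\Delta-1}$ variables, and in Case~1 each factor $F'_i$ is $(\varepsilon/k)$-light in most segments, so a single segment can be touched by an \emph{unbounded} number of factors. The read-$k$ bound gives nothing useful.

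The paper's workaround is quite different from a concentration inequality: it designs an iterative revelation algorithm $\mc{A}'$ that, in each round, groups together a set $I$ of factors whose combined footprint in every remaining segment is small but nonzero in at least one, reveals the randomness only where necessary, and uses Proposition~\ref{prop:xorunbias} to show the aggregated parity $\alpha_I=\bigoplus_{i\in I}\alpha_i$ is unbiased. Proposition~\ref{prop:ANDthr} (not Theorem~\ref{thm:GLSS}) then converts the per-configuration bound into a deviation bound on $\sum_i\tilde\alpha_i$. The delicate bookkeeping you anticipate is precisely what makes this algorithm run for enough rounds: the paper tracks how many ``Type~I'' and ``Type~II'' rounds have occurred and shows the input remains $(a,b)$-good throughout. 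Your proposal identifies the right obstacle but underestimates how far the fix is from a read-$k$ Chernoff bound.
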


The following corollary is immediate from the lower bound in Theorem~\ref{thm:PDeltalbd} and the formula upper bound in Proposition~\ref{prop:poly}.

\begin{corollary}
\label{cor:abstract-pdepth}
Assume $\Delta = \Delta(n) = o(\log n/\log \log n)$. For all large enough $n\in \mathbb{N}$, there is an explicit multilinear polynomial on $n$ variables that has a multilinear formula of size $O(n)$ and product-depth $\Delta(n)+1$ but no multilinear circuit of size at most $\exp(n^{\Omega(1/\Delta)})$ and product-depth at most $\Delta(n).$ 
\end{corollary}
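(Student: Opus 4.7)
The plan is to derive Corollary~\ref{cor:abstract-pdepth} as a direct consequence of Theorem~\ref{thm:PDeltalbd} combined with the formula upper bound in Proposition~\ref{prop:poly}(3), via a choice of the parameter $m$ and a routine padding step.

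Given $n$ and $\Delta = \Delta(n) = o(\log n/\log \log n)$, I would pick the largest integer $m \geq 2$ for which $n_\Delta = 8 \cdot (2m)^\Delta \leq n$. The ratio between successive values $n_{\Delta,m+1}/n_{\Delta,m} = (1 + 1/m)^\Delta \leq e^{\Delta/m}$ is bounded by a constant provided $\Delta = O(m)$, which will follow from the hypothesis once $m = \Theta(n^{1/\Delta})$ is established (see below). This gives $n_\Delta = \Theta(n)$ and $m = \Theta(n^{1/\Delta})$. I would then view $P^{(\Delta)}$ from Section~\ref{sec:polynomial} as a polynomial on $n$ variables by adjoining $n - n_\Delta$ dummy variables on which the polynomial does not depend; this affects neither the product-depth, nor the formula size (up to an additive $O(n)$ term), nor the lower bound.

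Next I would verify that the hypothesis $\Delta = m^{o(1)}$ of Theorem~\ref{thm:PDeltalbd} is implied by $\Delta = o(\log n/\log\log n)$. Since $\log m = \Theta((\log n)/\Delta)$, the required condition $\log \Delta = o(\log m)$ is equivalent to $\Delta \log \Delta = o(\log n)$. The hypothesis gives $\Delta \log \log n = o(\log n)$, and since we may assume $\Delta \leq \log n$ (so that $\log \Delta \leq \log \log n + O(1)$), we obtain $\Delta \log \Delta \leq \Delta(\log \log n + O(1)) = o(\log n)$, as required. As a byproduct, this also confirms the assumption $\Delta = O(m)$ used above, since $\Delta \log \Delta = o(\log n)$ and $m = \Theta(n^{1/\Delta})$ together force $\Delta = o(m)$.

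With these preparations in place, Proposition~\ref{prop:poly}(3) provides a syntactic multilinear $(\Pi\Sigma)^{\Delta}\Pi$ formula (hence of product-depth $\Delta+1$) of size $O(n_\Delta) = O(n)$ computing $P^{(\Delta)}$, while Theorem~\ref{thm:PDeltalbd} ensures that every multilinear circuit of product-depth $\Delta$ computing $P^{(\Delta)}$ has size at least $\exp(n_\Delta^{\Omega(1/\Delta)}) = \exp(n^{\Omega(1/\Delta)})$. These two facts together yield the corollary. There is no substantial obstacle in the proof; the entire argument is a bookkeeping exercise in parameter translation, and the only step that requires a moment of care is matching the asymptotic condition $\Delta = m^{o(1)}$ of Theorem~\ref{thm:PDeltalbd} with the stated condition $\Delta = o(\log n/\log\log n)$ of the corollary under the scaling $m = \Theta(n^{1/\Delta})$.
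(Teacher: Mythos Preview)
Your proposal is correct and follows essentially the same approach as the paper: choose $m$ so that $n_\Delta$ is comparable to $n$, verify that the hypothesis $\Delta = o(\log n/\log\log n)$ translates to $\Delta = m^{o(1)}$, and then invoke Proposition~\ref{prop:poly}(3) for the upper bound and Theorem~\ref{thm:PDeltalbd} for the lower bound. The paper's proof is a two-line sketch asserting that such an $m$ exists with $n_\Delta \in [\sqrt{n},n]$; you have simply filled in the parameter-matching details (and the padding with dummy variables) that the paper leaves implicit.
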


\begin{proof}
We can find $m = m(n)$ so that that $\Delta = m^{o(1)}$ and the corresponding $n_\Delta \in [\sqrt{n}, n]$ for large enough $n$. We now apply Proposition~\ref{prop:poly} and Theorem~\ref{thm:PDeltalbd} to obtain the result. 
\end{proof}

We also have a similar result for depth instead of product-depth.

\begin{corollary}
\label{cor:abstract-depth}
Assume $\Delta = \Delta(n) = o(\log n/\log \log n)$. For all large enough $n\in \mathbb{N}$, there is an explicit multilinear polynomial on $n$ variables that has a multilinear formula of size $n$ and depth $\Delta+1$ but no multilinear circuit of size $\exp(n^{\Omega(1/\Delta)})$ and depth at most $\Delta.$
\end{corollary}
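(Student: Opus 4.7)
The plan is to deduce Corollary~\ref{cor:abstract-depth} from its product-depth counterpart, Corollary~\ref{cor:abstract-pdepth}, via the conversion between depth and product-depth recorded in Section~\ref{sec:mlmodels}: any multilinear circuit of depth at most $d$ can be normalized to have product-depth at most $\lceil d/2\rceil$, while the $(\Pi\Sigma)^{\Delta'}\Pi$ formula for $P^{(\Delta')}$ provided by Proposition~\ref{prop:poly} has depth exactly $2\Delta'+1$ for $\Delta' \geq 1$. With these two facts in hand, the corollary should reduce to choosing $\Delta'$ so that the formula depth matches $\Delta+1$ and the product-depth lower bound matches what is needed to rule out depth-$\Delta$ circuits.

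For even $\Delta$ this translation is immediate. I will set $\Delta' = \Delta/2$ and invoke Corollary~\ref{cor:abstract-pdepth}, obtaining an explicit multilinear polynomial on $n' \in [\sqrt{n}, n]$ variables that admits a multilinear formula of product-depth $\Delta'+1$ (equivalently, depth $2\Delta'+1 = \Delta+1$) and size $O(n')$, and such that no multilinear circuit of product-depth $\Delta'$ has size less than $\exp((n')^{\Omega(1/\Delta')})$. Since any multilinear circuit of depth at most $\Delta$ has product-depth at most $\lceil\Delta/2\rceil = \Delta'$, the product-depth lower bound directly yields a size lower bound of $\exp((n')^{\Omega(1/\Delta')}) = \exp(n^{\Omega(1/\Delta)})$ against multilinear circuits of depth at most $\Delta$. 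Padding with dummy variables set to $0$ brings the variable count to exactly $n$.

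For odd $\Delta$, setting $\Delta' = \lceil\Delta/2\rceil = (\Delta+1)/2$ and applying the same argument yields a formula of depth $2\Delta'+1 = \Delta+2$ (rather than the desired $\Delta+1$) but still rules out every multilinear circuit of depth at most $\Delta$ of size less than $\exp(n^{\Omega(1/\Delta)})$, since any such circuit has product-depth at most $(\Delta+1)/2 = \Delta'$. I expect the main obstacle to lie precisely in this off-by-one in the formula depth: a pure product-depth argument cannot separate formula depth $\Delta+1$ from circuit depth $\Delta$ for odd $\Delta$, because both admit the same maximum product-depth $(\Delta+1)/2$. The cleanest resolution is to apply the even case with $\Delta$ replaced by $\Delta+1$, which gives a polynomial with a multilinear formula of depth $\Delta+2$ and no small multilinear circuit of depth at most $\Delta+1$ (and hence none of depth at most $\Delta$); the mild loss of one level in the formula depth is absorbed into the hidden constant in the $\Omega(1/\Delta)$ exponent of the size bound via the identity $1/(\Delta+1) = \Omega(1/\Delta)$. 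The entire substantive content comes from Theorem~\ref{thm:PDeltalbd}; the remaining work is the depth-to-product-depth translation plus bookkeeping around the parity of $\Delta$.
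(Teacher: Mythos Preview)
Your treatment of the even case is correct and matches the paper's argument. The gap is in the odd case.

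For odd $\Delta$, your proposed ``resolution'' does not prove the corollary as stated: you end up with a formula of depth $\Delta+2$, not $\Delta+1$. The formula depth is a hard constraint in the statement, not a quantity that can be absorbed into the $\Omega(1/\Delta)$ exponent of the size bound. You correctly diagnosed the obstruction --- for odd $\Delta$, both depth-$(\Delta+1)$ formulas and depth-$\Delta$ circuits have the same maximum product-depth $(\Delta+1)/2$, so a pure product-depth argument cannot separate them --- but then you did not resolve it.

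The paper's fix is to change the hard polynomial in the odd case. With $\Delta' = (\Delta+1)/2$, it takes $Q = P^{(\Delta'-1)}(X_1) + P^{(\Delta'-1)}(X_2)$ on two disjoint variable sets. This $Q$ has a $\Sigma(\Pi\Sigma)^{\Delta'-1}\Pi$ formula of depth $2(\Delta'-1)+2 = \Delta+1$, as required. For the lower bound, the key new observation is that $Q$ is \emph{irreducible}, so any circuit computing $Q$ may be assumed to have a $+$-gate at the output. Combined with the collapsing of consecutive $\times$-gates, this forces a depth-$\Delta$ circuit for $Q$ to have product-depth at most $\lfloor \Delta/2\rfloor = \Delta'-1$ (rather than the generic bound $\lceil \Delta/2\rceil = \Delta'$). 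Now Theorem~\ref{thm:PDeltalbd} applied at product-depth $\Delta'-1$ gives the claimed $\exp(n^{\Omega(1/\Delta)})$ lower bound. The irreducibility trick is precisely what breaks the parity tie you identified.
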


\begin{proof}
Let $\Delta' = \lfloor (\Delta+1)/2\rfloor.$ 

Fix $m = m(n)$ so that that $\Delta = m^{o(1)}$ and the corresponding $n_{\Delta'} \in [\sqrt{n}, n/2]$ for large enough $n$. We define the explicit polynomial to be either $P^{(\Delta')}$ or the sum of two copies of $P^{(\Delta'-1)}$ depending on whether $\Delta$ is even or odd.

Assume that $\Delta$ is even. Then, $\Delta = 2\Delta'.$ In this case, the explicit polynomial is $P^{(\Delta')}$ which has a $(\Pi\Sigma)^{\Delta'}\Pi$ formula of size $O(n_{\Delta'}) = O(n)$ by Proposition~\ref{prop:poly}. Note that a $(\Pi\Sigma)^{\Delta'}\Pi$ formula is of depth $2\Delta'+1 = \Delta+1.$ This gives the upper bound. 

For the lower bound, we use Theorem~\ref{thm:PDeltalbd}. Any circuit $C$ of size at most $s$ and depth at most $\Delta = 2\Delta'$ can be converted to one of size at most $s$ and product-depth at most $\Delta'$ as follows. If $C$ contains two $\times$-gates $\Psi_1$ and $\Psi_2$ where $\Psi_1$ feeds into $\Psi_2$, we merge $\Psi_1$ and $\Psi_2$. Repeated applications of this procedure yields a circuit of depth at most $2\Delta'$ in which no input-to-output path can contain consecutive $\times$-gates. This circuit must have product-depth at most $\Delta^{\prime}.$ Clearly, this process can only reduce the size of the circuit. By Theorem~\ref{thm:PDeltalbd}, we see that if $C$ computes $P^{(\Delta')}$ it must have a size of at least $\exp(n_{\Delta'}^{\Omega(1/\Delta^{\prime})}) = \exp(n^{\Omega(1/\Delta)}).$

Now consider the case when $\Delta$ is odd. Then $\Delta = 2\Delta'-1.$ In this case, we define  the explicit polynomial to be $Q = P^{(\Delta'-1)}(X_1) + P^{(\Delta'-1)}(X_2)$ where $X_1$ and $X_2$ are disjoint copies of $X^{(\Delta'-1)}.$ Note that the number of variables in $Q$ is $2n_{\Delta'-1} \leq n_{\Delta^{\prime}}\leq n$ and by Proposition~\ref{prop:poly}, $Q$ has a $\Sigma(\Pi\Sigma)^{\Delta'-1}\Pi$ formula of size $O(n_{\Delta'-1}) = O(n)$. Note that a $\Sigma(\Pi\Sigma)^{\Delta'-1}\Pi$ formula is of depth $2(\Delta'-1)+2 = \Delta+1.$ This gives the upper bound.

For the lower bound, we proceed as was done in the case where $\Delta$ is even. Like before, we can assume that the most efficient circuit $C$ of depth $\Delta$ for $Q$ has the property that no path in $C$ contains consecutive $\times$-gates. Further, since $Q$ is easily seen to be an irreducible polynomial, we can also assume that the output gate is a $+$-gate. This implies that the product-depth of $C$ is at most $\lfloor \Delta/2\rfloor \leq \Delta'-1.$ Applying Theorem~\ref{thm:PDeltalbd} now yields the lower bound as in the even case.
\end{proof}

Theorem~\ref{thm:PDeltalbd} is proved by induction on the parameter $\Delta.$ For the purposes of induction, we need to prove a more technical statement from which we can easily infer Theorem~\ref{thm:PDeltalbd}. We give this technical statement below.

Recall that for a random mutlilinear $(X,Y,Z)$-restriction $\rho$, $\tilde{Y} = \Img(\rho)\cap Y$ and $\tilde{Z} = \Img(\rho)\cap Z$. Note that these are \emph{random} sets. Now, given a multilinear polynomial $f\in \F[X'\cup Y'\cup Z']$ where $X'\subseteq X$ and $Y',Z'$ are disjoint sets of variables, the polynomial $f|_\rho$ is a multilinear polynomial in $\F[\tilde{Y}\cup \tilde{Z}\cup Y'\cup Z'].$

\begin{theorem}
\label{thm:technical}
Let $m,k,\Delta, M\in \mathbb{N}$ be growing positive integer parameters with $k = m^{0.05}$ and $M \geq m/10.$ Let $\varepsilon > 0$  be such that $\varepsilon \geq 1/M^{0.25}.$ Let $\Delta' \geq \Delta$ be such that $\Delta'\leq m^{0.001}.$

Let $X^{(\Delta')}_{(\omega_1,j_1)},\ldots, X^{(\Delta')}_{(\omega_M,j_M)}$ be arbitrary distinct $X^{(\Delta)}$-segments. Let $X = \bigcup_{i\in [M]} X^{(\Delta')}_{(\omega_i,j_i)}, Y = \bigcup_{i\in [M]} Y^{(\Delta')}_{(\omega_i,j_i)},$ and $Z = \bigcup_{i\in [M]} Z^{(\Delta')}_{(\omega_i,j_i)}$.

Assume that $X' = \bigcup_{i\in [M]} X'_i$ where for each $i\in [M]$, $X'_i \subseteq X^{(\Delta')}_{(\omega_i,j_i)}$ satisfying $|X'_i| \geq \varepsilon\cdot |X^{(\Delta')}_{(\omega_i,j_i)}|.$ Let $F$ be any $(\Sigma\Pi)^{\Delta}\Sigma$ syntactially multilinear variable-labelled\footnote{See Section~\ref{sec:struct} for the definition of variable-labelled formulas.} formula with $\Vars(F) = X'\dot\cup Y'\dot\cup Z'$ where $Y'$ and $Z'$ are arbitrary sets of variables that are disjoint from $X'$. Assume that the size of $F$ is $s\leq \exp(k^{0.1}/\Delta^2).$

For each $i\in [M]$, let $\rho_i$ be an \emph{independent} multilinear random restriction obtained by using the sampling algorithm $\mc{A}$ described in Section~\ref{sec:restriction} to sample a $(X^{(\Delta')}_{(\omega_i,j_i)},Y^{(\Delta')}_{(\omega_i,j_i)},Z^{(\Delta')}_{(\omega_i,j_i)})$-restriction. Let $\rho = \rho_1\circ \cdots \circ \rho_M$ be the resulting $(X,Y,Z)$-random restriction. Let $\tilde{Y}' = \rho(X')\cap Y$ and $\tilde{Z}' = \rho(X')\cap Z$.\footnote{ Note that these are \emph{random} sets. Also note that for each fixing of $\rho$, the restricted formula $F|_\rho$ computes a multilinear polynomial in $\F[\tilde{Y}'\cup \tilde{Z}'\cup Y'\cup Z']$}

Then, we have
\[
\prob{\rho}{\relrk_{(Y'\cup \tilde{Y}',Z'\cup \tilde{Z}')}(F|_\rho) \geq \exp(-k^{0.1}/\Delta)} \leq \Delta \exp(-k^{0.1}/\Delta).
\]
\end{theorem}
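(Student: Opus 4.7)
The plan is to prove Theorem~\ref{thm:technical} by induction on $\Delta$. Using Proposition~\ref{prop:vars} we may assume $F$ is variable-labelled, so we can write $F = \sum_{i=1}^{s} T_i$ where each $T_i = \prod_{j=1}^{t_i} Q_{i,j}$ is a product of $(\Sigma\Pi)^{\Delta-1}\Sigma$ variable-labelled syntactic multilinear formulas with $\Vars(T_i) = \bigsqcup_{j} \Vars(Q_{i,j})$ (interpreting a $Q_{i,j}$ as a linear form in the base case $\Delta=1$). By subadditivity of relative rank (Proposition~\ref{prop:relrk}), it suffices to show that for a single fixed term $T_i$ we have $\relrk_{(Y'\cup\tilde Y',\, Z'\cup\tilde Z')}(T_i|_\rho) \leq \exp(-k^{0.1}/\Delta)/s$ with failure probability at most $\Delta\exp(-k^{0.1}/\Delta)/s$; a union bound over $s$ terms yields the theorem, and this is feasible because $s \leq \exp(k^{0.1}/\Delta^2)$ is considerably smaller than the target rank bound $\exp(k^{0.1}/\Delta)$. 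Moreover, since the $\Vars(Q_{i,j})$ partition $\Vars(T_i)$ and the restriction is multilinear, Proposition~\ref{prop:relrk}.\ref{relrk:mult} gives the key multiplicativity $\relrk(T_i|_\rho) = \prod_j \relrk(Q_{i,j}|_\rho)$, so it is enough to produce a single $Q_{i,j}$ of small relative rank, or many that are each mildly rank-deficient.

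Fix $T_i$ and examine how each $\Vars(Q_{i,j})$ meets the $M$ segments. Pick thresholds $\varepsilon' \approx \varepsilon/M^{o(1)}$ and $M' \approx M^{1-o(1)}$, calibrated so that $\varepsilon' \geq 1/(M')^{0.25}$ still holds (this is possible because of the slack between the hypothesis $\varepsilon \geq 1/M^{0.25}$ and the requirement at the next level). Call $Q_{i,j}$ \emph{structured} if its support meets at least $M'$ of the segments $X^{(\Delta')}_{(\omega_\ell, j_\ell)}$ in fraction at least $\varepsilon'$ each. In the structured case, we simply apply the inductive hypothesis for $\Delta-1$ to $Q_{i,j}$ (viewing its auxiliary $Y$/$Z$ variables as the augmented side sets): since the size bound $\exp(k^{0.1}/(\Delta-1)^2)$ is weaker than $\exp(k^{0.1}/\Delta^2)$, the hypothesis applies and yields $\relrk(Q_{i,j}|_\rho) \leq \exp(-k^{0.1}/(\Delta-1))$ with failure probability at most $(\Delta-1)\exp(-k^{0.1}/(\Delta-1))$, which by multiplicativity forces $\relrk(T_i|_\rho)$ to be as small as desired.

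The unstructured case is the heart of the argument. Here, in at least $M-M'$ segments the variables of $X'$ are partitioned among the $Q_{i,j}$'s with no single $Q_{i,j}$ holding a fraction $\geq \varepsilon'$. For each such segment $\ell$, we consider the three options $E_1, E_2, E_3$ of the sampling algorithm $\mc{A}$ for $\rho_\ell$: each of $E_1$ and $E_2$ leaves an $X^{(\Delta-1)}$-clone to be restricted recursively (changing the parity of the number of free $\tilde Y\cup\tilde Z$-variables in a given $\Vars(Q_{i,j})\cap X^{(\Delta')}_{(\omega_\ell,j_\ell)}$ with constant probability), while $E_3$ keeps at most two free variables from that segment. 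By a case analysis on the parity and cardinality of $\Vars(Q_{i,j})\cap X^{(\Delta')}_{(\omega_\ell,j_\ell)}$ after restriction, one shows that in each such unstructured segment a constant fraction of the $Q_{i,j}$'s intersecting that segment become rank-deficient with constant probability, by Proposition~\ref{prop:relrk}.1 (the odd-parity-forces-rank-loss mechanism). Aggregating these constant-probability deficiencies over the $Q_{i,j}$'s via Proposition~\ref{prop:xorunbias} controls the per-$Q_{i,j}$ unbias, and an application of the read-$k$ concentration bound (Theorem~\ref{thm:GLSS}, with $k = m^{0.05}$) to the number of rank-deficient $Q_{i,j}$'s then makes $\prod_j \relrk(Q_{i,j}|_\rho)$ at most $(1/\sqrt 2)^{\Omega(k^{0.1})}$ with exponentially small failure probability, which is enough to conclude.

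The main obstacle will be calibrating $(\varepsilon', M')$ so that the structured case remains within the hypotheses of the inductive statement while the unstructured case still forces enough $Q_{i,j}$'s to be ``caught'' by the parity argument. This is subtle because, going from $\Delta$ down to $\Delta-1$, both the number of ``eligible'' segments ($M' \to M$-like) and the fractional threshold ($\varepsilon' \to \varepsilon$-like) must evolve in sync with the $\Delta$-dependent parameters $k^{0.1}/\Delta$; getting the relationship $\varepsilon \geq 1/M^{0.25}$ to be preserved through $\Delta$ recursive uses is exactly the reason the theorem is stated in its abstract, parameter-heavy form rather than directly for $F = $ a formula computing $P^{(\Delta)}$. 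A secondary technical point is carrying along the $Y',Z'$ auxiliary sets under the subformula-splitting so that the relative-rank factorization across children of a product gate continues to hold cleanly at each inductive step.
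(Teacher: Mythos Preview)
Your overall architecture (induction on $\Delta$, write $F=\sum T_i$, $T_i=\prod Q_{i,j}$, use subadditivity and multiplicativity of $\relrk$, split into a ``structured'' case handled by induction and an ``unstructured'' case handled by a parity argument) matches the paper.  However, two of the key mechanisms are missing, and without them the argument does not go through.

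\textbf{The structured case does not live on the same segments.}  The inductive hypothesis at depth $\Delta-1$ is a statement about $X^{(\Delta-1)}$-\emph{segments} together with restrictions sampled by $\mc{A}$ at that level.  You propose to invoke it for a $Q_{i,j}$ that is $\varepsilon'$-heavy on $M'$ of the given $X^{(\Delta)}$-segments, but those are the wrong objects and the wrong restriction.  The paper bridges the gap by first conditioning on the algorithm $\mc{A}$ choosing option $E_1$ or $E_2$ for (one or many of) the relevant $\rho_j$'s; under this conditioning the restriction on the surviving half-segment $X_{j,1}$ is exactly a copy of $\rho^{(\Delta-1)}$, and $X_{j,1}$ is an $X^{(\Delta-1)}$-clone containing $m$ fresh $X^{(\Delta-1)}$-segments.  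An averaging step then shows the heavy $Q_{i,j}$ is still heavy on $\Omega(m)$ (or $\Omega(\sqrt{m})$) of these sub-segments, and induction is applied \emph{there}.  This is also why the invariant survives: going down one level you lose a factor $k$ in $\varepsilon$ but \emph{gain} a factor $m^{\Omega(1)}$ in the segment count, so $\varepsilon'\ge 1/(M')^{0.25}$ is maintained.  Your proposal $M'\approx M^{1-o(1)}$, $\varepsilon'\approx \varepsilon/M^{o(1)}$ at the same segment level cannot sustain this over $\Delta$ rounds.  (The paper in fact needs two inductive cases here --- one where a single $Q_{i,j}$ is $(1/4)$-heavy in one segment, and one where some $Q_{i,j}$ is $(\varepsilon/k)$-heavy in $M/k$ segments --- and the latter splits again depending on whether heaviness persists into many or few sub-segments.)

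\textbf{The unstructured case is not a read-$k$ Chernoff.}  What you sketch (each $Q_{i,j}$ is rank-deficient with constant probability, then concentrate) is exactly the base-case argument, where each $X^{(1)}$-segment has only $16$ variables so the indicators are read-$16$.  For $\Delta>1$ a segment has $2n_{\Delta-1}$ variables and arbitrarily many $Q_{i,j}$'s may touch it, so no useful read parameter is available.  Worse, the parity $|\hat{Y}_i\cup\hat{Z}_i|\bmod 2$ for a single $Q_{i,j}$ can have vanishingly small unbias when $\Vars(Q_{i,j})$ has tiny density in every segment it meets.  The paper's solution is quite different: condition on option $E_3$ in $\Omega(M)$ segments (so each such $\rho_j$ picks one variable from each half-segment), and then run an adaptive iterative-revelation procedure (the algorithm $\mc{A}'$ of Section~\ref{sec:case1}) that at each step chooses a subset $I$ of the $Q_{i,j}$'s with $|\Vars(F'_I)\cap\bar X_j|/|\bar X_j|$ in a carefully controlled range, reveals just enough of $\rho$ to determine $\bigoplus_{i\in I}\alpha_i$, and succeeds if this disagrees with the target parity.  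The analysis bounds how much of $S$ and $\bar X_j$ is consumed per iteration and shows $\Omega(k)$ iterations are possible, each with $\Omega(\varepsilon/k)$ success probability; the final bound comes from Proposition~\ref{prop:ANDthr}, not Theorem~\ref{thm:GLSS}.  This is the most technical part of the proof and is not captured by your sketch.
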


\begin{remark}
\label{rem:technical}
Note that the $X^{(\Delta)}$-segments that appear in the statement of Theorem~\ref{thm:technical} are $X^{(\Delta)}$-segments contained in $X^{(\Delta')}$ for some $\Delta'\geq \Delta.$ This general form of the theorem is needed when we are proving lower bounds for multilinear formulas of product-depth $\Delta'.$ During the course of the inductive proof, the depth of the circuit reduces iteratively and at some intermediate point in the proof when the depth is down to $\Delta$, we end up in the situation described in Theorem~\ref{thm:technical}.
\end{remark}

We postpone the proof of the above theorem to Section~\ref{sec:technical}, and deduce Theorem~\ref{thm:PDeltalbd} from it below.

\begin{proof}[Proof of Theorem~\ref{thm:PDeltalbd}]
Let $C$ be any multilinear circuit of product-depth at most $\Delta$ computing $P^{(\Delta)}.$ Let $s$ denote the size of $C$. By Lemma~\ref{lem:RY-nf-ckts}, there is a $(\Sigma\Pi)^{\Delta}\Sigma$ syntactic multilinear formula $F$ of size $s' = s^{O(\Delta)}$ computing $P^{(\Delta)}.$ We can  assume that $F$ does not use any variables outside $X^{(\Delta)}$ since such variables may be safely set to $0$ without affecting the polynomial being computed.

Recall that $X^{(\Delta)}$ is the disjoint union of its segments $X^{(\Delta)}_1,\ldots,X^{(\Delta)}_m$. The random restriction $\rho^{(\Delta)}$ from Section~\ref{sec:restriction} is defined to be $\rho^{(\Delta)} = \rho_1^{(\Delta)} \circ \cdots\circ \rho_m^{(\Delta)}$ where each $\rho_i^{(\Delta)}$ is an independent $(X^{(\Delta)}_i,Y^{(\Delta)}_i,Z^{(\Delta)}_i)$-restriction sampled using the algorithm $\mc{A}.$ Let $\tilde{Y}$ and $\tilde{Z}$ denote $\Img(\rho^{(\Delta)})\cap (\bigcup_i Y_i^{(\Delta)})$ and $\Img(\rho^{(\Delta)})\cap (\bigcup_i Z_i^{(\Delta)})$ respectively.

Consider the restricted polynomials $P' = P^{(\Delta)}|_{\rho}$ and $F' = F|_{\rho}$. By Lemma~\ref{lem:fullrank}, we know that $\relrk_{(\tilde{Y},\tilde{Z})}(P') = 1$ with probability $1$. On the other hand, applying Theorem~\ref{thm:technical} with $M=m$, $\varepsilon = 1$, $\Delta'=\Delta = m^{o(1)}$ and $Y' = Z' = \emptyset,$ we get
\[
\prob{\rho}{\relrk_{(\tilde{Y},\tilde{Z})}(F') \geq \exp(-k^{0.1}/\Delta)}\leq \Delta \exp(-k^{-0.1}/\Delta) = o(1)
\]
if $s'\leq \exp(k^{0.1}/\Delta^2)$. The final inequality above follows from the fact that $\Delta \leq m^{o(1)}$ and hence $\Delta \exp(-k^{-0.1}/\Delta) = \Delta \exp(-m^{\Omega(1)}/\Delta) = o(1).$

Since the formula $F'$ computes $P'$, we must therefore have $s' > \exp(k^{0.1}/\Delta^2) = \exp(m^{\Omega(1)}/\Delta^2)$ which yields $s = \exp(m^{\Omega(1)}/\Delta^3)$. Since $\Delta = m^{o(1)},$ this means that $s = \exp(m^{\Omega(1)}) = \exp((n_\Delta)^{\Omega(1/\Delta)}).$
\end{proof}

\section{Proof of Theorem~\ref{thm:technical}}
\label{sec:technical}

The proof of Theorem~\ref{thm:technical} will be by induction on the product-depth $\Delta.$ The base case $\Delta = 1$ is handled in Section~\ref{sec:base-technical}.

For the induction case $\Delta > 1,$ we proceed as follows. Let $F$ be any formula as in the statement of Theorem~\ref{thm:technical}. We can write 
\begin{equation}
\label{eq:F-decompose}
 F = \sum_{i\in [s]} F_i
\end{equation}
where each $F_i$ is a $\Pi (\Sigma\Pi)^{\Delta - 1}\Sigma$ variable-labelled syntactic multilinear formula of size at most $s$. Further, note (see Section~\ref{sec:struct}) that $\Vars(F_i) = \Vars(F)$ for each $i$ by the property of the variable-labelling $\Vars(\cdot).$

We claim that it suffices to show the following. For any $\Pi (\Sigma\Pi)^{\Delta - 1}\Sigma$ variable-labelled syntactic multilinear formula $F'$ with $\Vars(F') = X'\cup Y'\cup Z'$ of size at most $s\leq \exp(k^{0.1}/\Delta^2),$ we have
\begin{equation}
\label{eq:relrkPigate}
\prob{\rho}{\relrk_{(\tilde{Y}'\cup Y',\tilde{Z}'\cup Z')}(F'|_\rho) \geq \exp(-k^{0.1}/(\Delta-1))}\leq \Delta \exp(-k^{0.1}/(\Delta-1)).
\end{equation}

Assuming (\ref{eq:relrkPigate}) for now, we proceed as follows. Applying (\ref{eq:relrkPigate}) to the formulas $F_1,\ldots,F_s$ and using a union bound, we have
\[
\prob{\rho}{\exists i\in [s],\ \relrk_{(\tilde{Y}'\cup Y',\tilde{Z}'\cup Z')}(F_{i}|_\rho) \geq \exp(-k^{0.1}/(\Delta-1))} \leq \Delta \exp(-k^{0.1}/(\Delta-1))\cdot s \leq \Delta \exp(-k^{0.1}/\Delta)
\]
where for the last inequality, we have used the fact that $s\leq \exp(k^{0.1}/\Delta^2).$ Note that if $\relrk_{(\tilde{Y}'\cup Y',\tilde{Z}'\cup Z')}(F_i|_\rho) < \exp(-k^{0.1}/(\Delta-1))$ for each $i\in [s]$, the subadditivity of relative rank (Propostion~\ref{prop:relrk} item 2) implies that
\[
\relrk_{(\tilde{Y}'\cup Y',\tilde{Z}'\cup Z')}(F|_\rho) \leq s\cdot \exp(-k^{0.1}/(\Delta-1)) \leq \exp(-k^{0.1}/\Delta),
\]
where for the last inequality we have again used $s\leq \exp(k^{0.1}/\Delta^2).$ We have thus shown that 
\[
\prob{\rho}{\relrk_{(\tilde{Y}'\cup Y',\tilde{Z}'\cup Z')}(F|_\rho) \geq \exp(-k^{0.1}/\Delta)} \leq \Delta \exp(-k^{0.1}/\Delta)
\]
which proves the theorem. 

It remains to prove (\ref{eq:relrkPigate}), which is the main technical part of the proof. Fix an $F'$ as in (\ref{eq:relrkPigate}) and assume that 
\begin{equation}
\label{eq:F'-decompose}
F' = \prod_{i\in [t]}F'_i
\end{equation}
where the $F'_i$ are the constituent $(\Sigma\Pi)^{(\Delta-1)}\Sigma$ formulas of $F'$. Recall that the $\Vars(F'_i)$ ($i\in [t]$) partition $\Vars(F') = X'\dot\cup Y'\dot\cup Z'.$ 

For $\delta > 0,$ $i\in [t]$ and $j\in [M]$, we say that $F'_i$ is \emph{$\delta$-heavy} in segment $X_j$ if $|\Vars(F'_i) \cap X_j|\geq \delta \cdot |X_j|.$ We say that $X_j$ is \emph{$\delta$-shattered} if there is no $i\in [t]$ such that $F'_i$ is $\delta$-heavy w.r.t. $X_j.$

The proof of (\ref{eq:relrkPigate}) breaks down into the following three cases. Roughly, in case $1$, we can prove (\ref{eq:relrkPigate}) directly, whereas in cases $2$ and $3$, we appeal to the inductive hypothesis.

\begin{itemize}
\item {\bf Case 1:} At least $M/2$ segments are $(1/4)$-shattered and further, no $F'_i$ is $(\varepsilon/k)$-heavy in at least $M/k$ segments $X_j$.
\item {\bf Case 2:} At least $M/2$ segments are \emph{not} $(1/4)$-shattered.
\item {\bf Case 3:} There is some $F'_i$ ($i\in [t]$) that is $(\varepsilon/k)$-heavy in at least $M/k$ many segments $X_j.$
\end{itemize}
We show (\ref{eq:relrkPigate}) in each of the above cases in Section~\ref{sec:cases1-4} below. This will conclude the proof of the theorem.

\paragraph{Notation.} We now define some notation that will be useful in the remainder of the theorem.
\begin{itemize}
\item For brevity, we use $X_i,Y_i,$ and $Z_i$ ($i\in [M]$) to denote the $X^{(\Delta)}$-segment $X^{(\Delta')}_{(\omega_i,j_i)}$, the $Y^{(\Delta)}$-segment $Y^{(\Delta')}_{(\omega_i,j_i)}$ and the $Z^{(\Delta)}$-segment $Z^{(\Delta')}_{(\omega_i,j_i)}$ respectively. The corresponding half-segments of $X_i,Y_i$ and $Z_i$ are denoted $X_{i,j}, Y_{i,j}$ and $Z_{i,j}$ respectively (for $j\in [2]$). Further $X^{(\Delta-1)}$-segments of $X_{i,j}$ are denoted $X_{i,j,p}$ ($p\in [m]$) and similarly we also have $Y_{i,j,p}$ and $Z_{i,j,p}.$
\item For each $i\in [M]$, let $y_i,z_i$ denote the variables $y^{(\Delta')}_{(\omega_i,j_i)}$ and $z^{(\Delta')}_{(\omega_i,j_i)}$ in the sets $Y_i$ and $Z_i$ respectively. Recall (see Section~\ref{sec:polynomial}) that we have $Y_i = Y_{i,1}\cup Y_{i,2} \cup \{y_i\}$ and $Z = Z_{i,1}\cup Z_{i,2} \cup \{z_i\}$.
\item For each $i\in [M]$, let $\tilde{Y}_i$ and $\tilde{Z}_i$ denote the random sets $\rho_i(X_i)\cap Y_i$ and $\rho_i(X_i)\cap Z_i$ respectively. Let $\tilde{Y}$ and $\tilde{Z}$ denote $\bigcup_{i\in [M]} \tilde{Y}_i$ and $\tilde{Z} = \bigcup_{i\in [M]}\tilde{Z}_i.$ Also, let $\tilde{Y}'_i$ and $\tilde{Z}'_i$ denote the random sets $\rho_i(X_i')\cap Y_i$ and $\rho_i(X_i')\cap Z_i$ respectively. 
\end{itemize}

\subsection{The base case of Theorem~\ref{thm:technical}}
\label{sec:base-technical}

Let $F$ be a $\Sigma \Pi \Sigma$ syntactic multilinear formula over the variable set $X^{'}\cup Y'\cup Z'$. We can write $F = T_1+\cdots + T_s$ where each $T_i$ is a $\Pi\Sigma$ syntactic multilinear formula. We claim that for any $\Pi\Sigma$ syntactic multilinear formula $T$, we have 
\begin{equation}
\label{eq:prodLi}
\prob{\rho}{\relrk_{(\tilde{Y}'\cup Y',\tilde{Z}'\cup Z')}(T|_{\rho})\geq \exp(-2k^{0.1})}\leq \exp(-2k^{0.1}).
\end{equation}

Assuming (\ref{eq:prodLi}) we are done, since for any fixing of the random restriction $\rho$ (which is a copy of the restriction $\rho^{(1)}$ from Section~\ref{sec:restriction}), we have by Proposition~\ref{prop:relrk} Item 2 that
\begin{align*}
\relrk_{(\tilde{Y}'\cup Y',\tilde{Z}'\cup Z')}(F|_{\rho}) &\leq \sum_{i=1}^s \relrk_{(\tilde{Y}'\cup Y',\tilde{Z}'\cup Z')}(T_i|_{\rho})\leq s\cdot \max_{i}\left(\relrk_{(\tilde{Y}'\cup Y',\tilde{Z}'\cup Z')}(T_i|_{\rho})\right)\\ 
&\leq \exp(k^{0.1})\cdot \max_{i}\left(\relrk_{(\tilde{Y}'\cup Y',\tilde{Z}'\cup Z')}(T_i|_{\rho})\right)
\end{align*}
and hence 
\begin{align*}
\prob{\rho}{\relrk_{(\tilde{Y}'\cup Y',\tilde{Z}'\cup Z')}(F|_{\rho}) \geq \exp(-k^{0.1})} & \leq  \prob{\rho}{\exists i\ \relrk_{(\tilde{Y}'\cup Y',\tilde{Z}'\cup Z')}(T_i|_{\rho})  \geq \exp(-2k^{0.1})} \\ & \leq \frac{s}{\exp(2k^{0.1})} \leq \exp(-k^{0.1}) 
\end{align*}
where we have used (\ref{eq:prodLi}) and a union bound for the second inequality.

It remains to prove (\ref{eq:prodLi}). To see this, we proceed as follows. Assume that the output $\times$-gate of $T$ is $\Phi$  and let $\Psi_1,\ldots,\Psi_r$  be the $+$-gates feeding into it and let $L_1,\ldots,L_r$ be the linear functions computed by these gates. Recall from Proposition~\ref{prop:vars} that we can assign to each $\Psi_i$ a variable set $\Vars(\Psi_i)\subseteq X^{'}\cup Y'\cup Z'$ such that $\{\Vars(\Psi_i)\ |\ i\in [r]\}$ induces a partition of $X^{'}\cup Y'\cup Z'$ and further $\Vars(\Psi_i)$ contains $\supp(\Psi_i)$.\footnote{$\supp(\Psi_i)$ is the set of variables that actually appear in the subformula rooted at $\Psi_i$ (cf. Section~\ref{sec:struct}).} 
Henceforth, we use $\Vars(L_i)$ to denote $\Vars(\Psi_i)$ for each $i\in [r]$. 

We divide the gates $\Psi_1,\ldots,\Psi_r$ into two classes: those gates such that $\Vars(L_i)\cap X^{'} \neq \emptyset$ and those for which $\Vars(L_i)\cap X^{'} = \emptyset$; let $t$ denote the number of gates of the former kind. Without loss of generality, we can assume that $\Psi_1,\ldots,\Psi_t$ are the gates such that $\Vars(L_i)\cap X^{'} \neq \emptyset$ and $\Psi_{t+1},\ldots,\Psi_r$  the rest. 

We write the polynomial $T$ as $T = L_1\cdots L_t\cdot Q'$ where $Q' = \prod_{i > t}L_i.$  
For each $i\in [t]$ and $j\in [M]$, let $\hat{Y}_{i,j}$ and $\hat{Z}_{i,j}$ denote the random sets $\rho(\Vars(L_i)\cap X'_j)\cap Y$ and $\rho(\Vars(L_i)\cap X'_j)\cap Z$ respectively. We also denote by $\hat{Y}_{i,0}$ and $\hat{Z}_{i,0}$ the (non-random) sets $\Vars(L_i)\cap Y'$ and $\Vars(L_i)\cap Z'.$  Let us define $\hat{Y}_{i} = \bigcup_{j\in [M]\cup \{0\}}\hat{Y}_{i,j}$ and $\hat{Z}_{i} = \bigcup_{j\in [M]\cup \{0\}}\hat{Z}_{i,j}$. Finally, let $\hat{Y}_{t+1}$ and $\hat{Z}_{t+1}$ denote the (non-random) sets $\bigcup_{i > t} \Vars(L_i)\cap Y'$ and $\bigcup_{i > t} \Vars(L_i)\cap Z'.$

Note that the sets $\tilde{Y}'\cup Y'$ can be partitioned as $Y'\cup \bigcup_{j\in [M]} \tilde{Y}'_j$ and also as $\bigcup_{i\in [t+1]}\hat{Y}_i = \left(\bigcup_{i\in [t], j\in [M] \cup \{0\}} \hat{Y}_{i,j}\right)\cup \hat{Y}_{t+1}.$ A similar statement is true for $\tilde{Z'}\cup Z'$ as well.

Upon the application of the random restriction $\rho$ (as described in Section~\ref{sec:restriction}), each $L_i$ ($i\in [t]$) restricts to a linear function $L_i|_{\rho} = \tilde{L}_i\in \F[\hat{Y}_i\cup \hat{Z}_i]$, while the polynomial $Q'$ is unaffected by the restriction $\rho$. 
Recall that $\rho$ is the composition of \emph{independent} restrictions $\rho_1,\ldots,\rho_{m}$ where each $\rho_j$ only 
affects variables in $X_j,$ for all $j\in[m]$ and the $\rho_j$'s are independent copies of the random restriction 
$\rho$ (defined in Section~\ref{sec:restriction}).

By Proposition~\ref{prop:relrk} items 3 and 1, we know that for any choice of $\rho$
\begin{align}
\relrk_{(Y'\cup \tilde{Y}',Z'\cup \tilde{Z}')}(T|_\rho) &= \left(\prod_{i\in [t]}\relrk_{(\hat{Y}_i, \hat{Z}_i )}(L_i|_{\rho})\right)\cdot \relrk_{(\hat{Y}_{t+1},\hat{Z}_{t+1})}(Q')\notag\\
& \leq \prod_{i\in [t]}\relrk_{(\hat{Y}_i, \hat{Z}_i )}(L_i|_{\rho}).\label{eq:relrkT}
\end{align}

To bound $\relrk_{(Y'\cup \tilde{Y}',Z'\cup \tilde{Z}')}(T|_\rho),$ we use (\ref{eq:relrkT}) along with a case analysis depending on the value of $t$. 

Before we do the case analysis, we will present two technical lemmas which will be useful. 

\begin{lemma}
\label{lem:size-tildey'}
For $\tilde{Y}', \tilde{Z}'$ defined as in the statement of Theorem~\ref{thm:technical}, $\prob{\rho}{|\tilde{Y}'|+ |\tilde{Z}'| \leq M/200} \leq \exp(-\Omega(M)).$
\end{lemma}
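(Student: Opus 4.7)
The proof strategy is a standard Chernoff-style concentration argument on a per-segment decomposition. Since $\rho = \rho_1 \circ \cdots \circ \rho_M$ with the individual segment restrictions $\rho_i$ sampled independently, and since $\tilde{Y}'_i := \rho_i(X_i') \cap Y_i$ and $\tilde{Z}'_i := \rho_i(X_i') \cap Z_i$ depend only on $\rho_i$, the Boolean indicators
\[
W_i := \mathbf{1}\bigl[\,|\tilde{Y}'_i| + |\tilde{Z}'_i| \geq 1\,\bigr] \qquad (i \in [M])
\]
are mutually independent, and trivially $|\tilde{Y}'| + |\tilde{Z}'| \geq \sum_i W_i$. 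Thus it suffices to establish a uniform per-segment lower bound $\prob{\rho_i}{W_i = 1} \geq c$ for some absolute constant $c > 0$ (say $c = 1/100$); granting this, the Chernoff bound (Theorem~\ref{thm:Chernoff}) applied to the independent Bernoullis $W_1, \ldots, W_M$ yields $\prob{}{\sum_i W_i \leq cM/2} \leq \exp(-\Omega(M))$, and choosing $c$ so that $cM/2 \geq M/200$ gives the claim.

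To establish the per-segment estimate, the hypothesis $|X_i'| \geq \varepsilon |X_i|$, together with $|X_{i,1}| = |X_{i,2}|$, implies by averaging that one of the half-segments (WLOG $X_{i,1}$) satisfies $|X_i' \cap X_{i,1}| \geq \varepsilon |X_{i,1}|$. I then analyze the three options of the sampling algorithm $\mathcal{A}$. Under $E_3$ (chosen with positive constant probability), $x_1$ is drawn uniformly from $X_{i,1}$ and mapped to the fresh variable $y_i \in Y_i$, so $W_i = 1$ whenever $x_1 \in X_i'$, which occurs with probability $\geq \varepsilon$. In the regime where $\varepsilon$ is bounded below by an absolute constant this is already enough. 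In the small-$\varepsilon$ regime one instead conditions on $E_1$: the recursive restriction $\rho^{(\Delta-1)}$ is applied to $X_{i,1}$, which itself samples $m$ independent inner segment restrictions, each again offering its own $E_3$ option. A further averaging argument shows that many of these inner segments inherit density $\Omega(\varepsilon)$ of $X_i'$, and each offers an independent $\Omega(\varepsilon)$ chance of a hit; combining geometrically amplifies the per-segment success probability to the required absolute constant.

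The main obstacle is precisely the small-$\varepsilon$ step: showing that the recursive restriction $\rho^{(\Delta-1)}$ applied to an adversarially chosen $\varepsilon$-dense subset of $X_{i,1}$ maps at least one of its elements to $Y \cup Z$ with absolute constant probability. The difficulty is that $X_i' \cap X_{i,1}$ may concentrate in only a few inner segments of $X_{i,1}$, so one must carefully track how an initial density of $\varepsilon$ distributes across the inner segments and argue that enough of them retain density at least $\varepsilon/2$ (or similar) to yield independent chances of contribution. Once this structural fact is in hand, the argument reduces cleanly to exploiting the $E_3$ options at the inner level, together with the independence of inner segment restrictions.
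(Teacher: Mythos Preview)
Your overall framework — independent per-segment indicators $W_i$, a uniform lower bound on each, then Chernoff — is exactly what the paper does. The divergence is in the per-segment step, and there you make it harder than necessary.

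This lemma sits in the \emph{base case} section ($\Delta = 1$) of the proof of Theorem~\ref{thm:technical}, where each $X^{(1)}$-half-segment $X_{j,b}$ has exactly $n_0 = 8$ variables. The paper exploits this directly: since $|X_j'| \geq \varepsilon |X_j| > 0$ and $|X_j'|$ is an integer, $X_j'$ is nonempty; fix any single $x \in X_j'$, say $x \in X_{j,1}$. Under option $E_3$ (probability $1/3$), the uniformly chosen variable in $X_{j,1}$ equals $x$ with probability $1/|X_{j,1}| = 1/8$, and then $x$ is mapped to a variable in $Y_j$. Hence $\prob{}{W_j = 1} \geq 1/24$, an absolute constant. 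Only the nonemptiness of $X_j'$ is used; the value of $\varepsilon$ plays no further role.

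The ``small-$\varepsilon$ obstacle'' you flag, and the proposed recursion through $E_1$ into inner segments, arise because you are implicitly treating the half-segment as large --- effectively reading the lemma as a general-$\Delta$ statement. In the base case that obstacle does not exist, since the half-segment has constant size. Moreover, for $\Delta = 1$ your recursion does not even get off the ground: under $E_1$ the inner restriction is $\rho^{(0)}$, which is \emph{deterministic} (it maps four fixed variables of $X_{j,1}$ to $Y^{(0)}\cup Z^{(0)}$), so there are no further independent inner-segment $E_3$ options to amplify with. The machinery you sketch is both unnecessary and inapplicable here.
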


\begin{proof}
For $j\in [M]$, let $v_j$ be a $\{0,1\}$-valued random variable which takes value $1$ if and only if  $\tilde{Y}_j' \cup \tilde{Z}_j' \neq \emptyset$. This is a random variable which only depends on $\rho_j$ for each $j \in [M]$, i.e., $v_j$s are independent random variables. Let $v = \sum_j v_j$. Note that $v \leq |\tilde{Y}'| + |\tilde{Z}'|$. 

We first claim that for each $j$, $\prob{\rho_j}{\tilde{Y}_j' \cup \tilde{Z}_j' \neq \emptyset}$ is at least $1/48$. To see this, first note that $|X_j'|\geq \varepsilon |X_j| > 0$ (the non-emptiness of $X_j'\subseteq X_j$ is all we will need). Fix any variable $x \in X'_j$. From the description of the sampling algorithm $\mc{A}$ for the random restriction $\rho^{}$, it easily follows that the probability with which $\rho_j$ sets $x$ to a variable in $Y_j\cup Z_j$ is at least $1/48$.\footnote{W.l.o.g., say $x\in X_{j,1}$. Then $\mc{A}$ 
sets $x$ to a variable whenever option $E_3$ is chosen (this happens with probability $1/3$) and further $x$ is the variable 
chosen in $X_{j,1}$ to set to a variable (this happens with probability $1/16$).} Hence, $\avg{\rho_j}{v_j} \geq 1/48$. By linearity of expectation, we get that $\avg{\rho}{v} \geq M/48$. 

Using the Chernoff bound (Theorem~\ref{thm:Chernoff}) we get that $\prob{\rho}{v \leq M/200} \leq \exp(-\Omega(M))$. As $v \leq |\tilde{Y}'| + |\tilde{Z}'|$, we get that $\prob{\rho^{}}{|\tilde{Y}'|+ |\tilde{Z}'| \leq M/200} \leq \exp(-\Omega(M)).$
\end{proof}

\begin{lemma}
\label{lem:rank-loss}
For each $i \in [t]$, $\prob{\rho}{\relrk_{(\hat{Y}_i, \hat{Z}_i)}(L_i|_\rho) \leq 1/\sqrt{2}} \geq 1/1000$.
\end{lemma}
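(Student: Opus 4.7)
The plan is to exploit two observations: linearity of $L_i$ caps the rank of its partial-derivative matrix at $2$, and independence of the restrictions $\rho_j$ across segments gives a useful XOR structure on the parity of $W := |\hat{Y}_i| + |\hat{Z}_i|$. Since $L_i|_\rho$ is linear in $\hat{Y}_i \cup \hat{Z}_i$, the matrix $M_{(\hat{Y}_i, \hat{Z}_i)}(L_i|_\rho)$ has rank at most $2$, so $\relrk_{(\hat{Y}_i, \hat{Z}_i)}(L_i|_\rho) \leq 2 \cdot 2^{-W/2}$; combined with Proposition~\ref{prop:relrk}(1), the bound $\relrk \leq 1/\sqrt{2}$ holds automatically whenever $W \geq 3$ or $W$ is odd. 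Hence it suffices to show $\prob{\rho}{W \geq 3 \text{ or } W \text{ odd}} \geq 1/1000$.

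Decompose $W = W_0 + \sum_{j \in [M]} W_j$, where $W_j := |\hat{Y}_{i,j}| + |\hat{Z}_{i,j}|$ depends only on $\rho_j$ and $W_0$ is deterministic. The hypothesis $\Vars(L_i) \cap X' \neq \emptyset$ yields some $j^* \in [M]$ with $A := \Vars(L_i) \cap X_{j^*} \neq \emptyset$. From the description of $\mc{A}$, under branches $E_1, E_2, E_3$ (each probability $1/3$) $W_{j^*}$ takes respectively the values $n_1$, $n_2$, and $\mathbb{1}[x_1 \in A] + \mathbb{1}[x_2 \in A]$, where $n_s \in \{0,\ldots,4\}$ counts elements of $A \cap X_{j^*,s}$ sitting at the four ``image positions'' of the deterministic restriction $\rho^{(0)}$, and $x_s$ is a uniform edge in $X_{j^*,s}$. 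Writing $\alpha := |A \cap X_{j^*,1}|/8$ and $\beta := |A \cap X_{j^*,2}|/8$, a direct case analysis gives the dichotomy: \emph{(a)} some $n_s \geq 3$, in which case $\prob{\rho}{W \geq 3} \geq \prob{\rho}{W_{j^*} \geq 3} \geq 1/3$; or \emph{(b)} $n_1, n_2 \leq 2$, in which case $\prob{}{W_{j^*} = 1 \mid E_3} = \alpha + \beta - 2\alpha\beta$ lies in $[1/8, 7/8]$---the boundary values $0$ and $1$ would require $(\alpha,\beta) \in \{(0,0),(1,1),(0,1),(1,0)\}$, each excluded (the first by $A \neq \emptyset$, the rest because they force $n_s \geq 3$, contradicting (b)). A short arithmetic verification then shows $\unbias(W_{j^*} \bmod 2) \geq 1/24$ in branch (b).

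To conclude in branch (b), I observe that $W \bmod 2$ is the XOR of the independent bits $\{W_j \bmod 2\}_{j \in [M]}$ plus the constant $W_0 \bmod 2$. The standard multiplicativity $\bias(W \bmod 2) = \prod_j \bias(W_j \bmod 2)$ then gives $\unbias(W \bmod 2) \geq \unbias(W_{j^*} \bmod 2) \geq 1/24$, so $\prob{\rho}{W \text{ odd}} \geq 1/24$. Combined with branch (a), this proves $\prob{\rho}{\relrk_{(\hat{Y}_i, \hat{Z}_i)}(L_i|_\rho) \leq 1/\sqrt{2}} \geq 1/24 > 1/1000$. The main technical point is the boundary analysis in branch (b): every configuration of $(\alpha, \beta)$ that would spoil the $[1/8, 7/8]$ bound on $\alpha + \beta - 2\alpha\beta$ must be shown to either violate $A \neq \emptyset$ or force some $n_s \geq 3$ (landing in branch (a)). This is a finite check but requires care, since $n_s$ depends on the combinatorial placement of $A \cap X_{j^*,s}$ within the $X^{(0)}$-clone, not merely on its size.
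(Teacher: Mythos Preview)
Your argument is correct and follows the same core idea as the paper's: linearity caps the partial-derivative rank at $2$, so $\relrk \leq 1/\sqrt{2}$ whenever $W = |\hat{Y}_i| + |\hat{Z}_i|$ is at least $3$ or is odd, and one shows this happens with constant probability by analyzing a single segment under the branches $E_1,E_2,E_3$ of the sampler $\mc{A}$. The organization differs in two respects. First, you case-split on whether some $n_s \geq 3$, which uniformly absorbs what the paper treats as its Case~I ($\Vars(L_i) \supseteq X_{j_0}$) together with sub-case~(c) of Case~II ($\Vars(L_i) \cap X_{j_0}$ equal to a half-segment). Second, for the parity step you invoke bias-multiplicativity of XOR over the independent $W_j$'s directly, whereas the paper conditions on all $\rho_j$ with $j \neq j_0$ and then argues separately for each target parity $b\in\{0,1\}$ via sub-cases (a),(b),(c). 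These are equivalent maneuvers; your route is more streamlined and yields a sharper constant ($1/24$ versus $1/1000$).

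One small gap: the claim that $\alpha + \beta - 2\alpha\beta \in [1/8,7/8]$ in branch~(b) does not follow merely from excluding the four corners $(0,0),(1,1),(0,1),(1,0)$, since excluding the extremal values of a function does not by itself bound it away from them. What actually makes the bound work is that $n_s \leq 2$ forces $|A \cap X_{j^*,s}| \leq 6$ (only four of the eight positions are non-image, so at most $2+4$ variables total), hence $\alpha,\beta \in \{0,1/8,\ldots,3/4\}$; combined with ``not both zero'' a direct computation gives $f \in [1/8,3/4]$, from which $\unbias(W_{j^*}\bmod 2) \geq 1/24$ indeed follows. You flag this as a finite check and it does go through, but the text slightly misidentifies what must be verified.
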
 
\begin{proof}
Fix an $L_i$ for $i\in [t]$. Let us consider the smallest $j_0$ such that $X'_{j_0} \cap \Vars(L_i) \neq \emptyset$. We consider two possibilities for the relationship between $X_{j_0}$ and $\Vars(L_i)$: either $\Vars(L_i) \cap X'_{j_0} \subsetneq X_{j_0}$ or $X_{j_0} \subseteq \Vars(L_i)$. 

We fix $\rho_j$ for all $j \in [M]$ such that $j \neq j_0$. 
Let $\hat{Y}'_i = \cup_{j \neq j_0} \hat{Y}_{i,j}$ and $\hat{Z}'_i = \cup_{j \neq j_0} \hat{Z}_{i,j}$. 

\paragraph{\bf Case I:  $X_{j_0} \subseteq \Vars(L_i)$.} In this case, if the sampling step for $\rho_{j_0}$ chooses options $E_1$ or $E_2$, then from the definition of $\rho^{(0)}$ we can see that $|\hat{Y}_{i,j}| = |\rho(X_{j_0})\cap Y| = 2$ and similarly $|\hat{Z}_{i,j}|=2$ as well. This implies that $|\hat{Y}_i|,|\hat{Z}_i|\geq 2.$ 

On the other hand, we also claim that $\rank(M_{(\hat{Y}_i, \hat{Z}_i)})(L_i|_{\rho})) \leq 2.$ To see this, note that each $L_i|_{\rho}$ can be written as $L_i'(\hat{Y}_i) + L_i''(\hat{Z}_i)$ and for any polynomial $Q$ that depends only on the variables in either $\hat{Y}_i$ or $\hat{Z}_i$, we have $\rank(M_{(\hat{Y}_i,\hat{Z}_i)}(Q))\leq 1.$ The subadditivity of matrix rank now implies that $\rank(M_{(\hat{Y}_i,\hat{Z}_i)}(L_i|_\rho)) \leq 2.$

In particular, we see that whenever option $E_1$ or $E_2$ is chosen, we have
\[
\relrk_{(\hat{Y}_i,\hat{Z}_i)}(L_i|_{\rho}) \leq \frac{2}{2^{\frac{1}{2}(|\hat{Y}_i| + |\hat{Z}_i|)}}\leq \frac{1}{2}.
\]
Since one of $E_1$ or $E_2$ is chosen with probability $2/3$, we get the desired bound in this case.

\paragraph{\bf Case II: $\Vars(L_i) \cap X'_{j_0} \subsetneq X_{j_0}$.} 

As mentioned earlier, we already fixed the restrictions $\rho_j$ for $j \neq j_0$. Thus, the sets $\hat{Y}'_i$ and $ \hat{Z}'_i$ have been fixed. Let $b \in \{0,1\}$. Suppose $|\hat{Y}_i'\cup \hat{Z}'_i| \equiv b \mod{2}$. We will show that
\begin{equation}
\label{eq:odd-even}
 \prob{\rho_{j_0}}{|\hat{Y}_{i,j_0} \cup \hat{Z}_{i,j_0}| = (1-b) \mod 2} \geq 1/1000.
\end{equation}
Assuming (\ref{eq:odd-even}) for now, we get that with probability at least $1/1000$, $| \hat{Y}_i \cup  \hat{Z}_i| = |\hat{Y}_{i,j_0} \cup \hat{Z}_{i,j_0}| + |\hat{Y}_i'\cup \hat{Z}'_i|$ is odd. Therefore using Item 1 of Proposition~\ref{prop:relrk} we are done. This concludes the proof of Lemma~\ref{lem:rank-loss} assuming (\ref{eq:odd-even}). 

To see (\ref{eq:odd-even}), let us assume that $b=1$. (We only present the details for this case. The case of $b=0$ is similar.) If $b=1$, we need to show that $|\hat{Y}_{i,j_0} \cup \hat{Z}_{i,j_0}| = |\rho_{j_0}(\Vars(L_i)\cap X'_{j_0}) \cap (Y_{j_0}\cup Z_{j_0})|$ is even with probability $1/1000$.
The following three possibilities arise:

\begin{enumerate}
\item[(a)] $\Vars(L_i)\cap X_{j_0,1} \neq \emptyset$ and $\Vars(L_i)\cap X_{j_0,2} \neq \emptyset$, 
\item[(b)] $X_{j_0,1} \setminus \Vars(L_i) \neq \emptyset$ and $X_{j_0,2} \setminus \Vars(L_i) \neq \emptyset$,  or
\item[(c)] $\Vars(L_i)\cap X_{j_0} = X_{j_0,1}$ or $\Vars(L_i)\cap X_{j_0} = X_{j_0,2}$. 
\end{enumerate}

In each case we will show that $|\hat{Y}_{i,j_0}\cup \hat{Z}_{i,j_0}|$ is even with probability at least $1/1000$. For (a), if $\rho_{j_0}$ chooses option $E_3$, then with probability at least $1/8$ one of the variables in $\Vars(L_i)\cap X_{j_0,1}$ (resp.\ $\Vars(L_i)\cap X_{j_0,2}$) will be set to $y_{j_0}$ (resp.\ $z_{j_0}$). The option $E_3$ is chosen with probability $1/3$. Therefore, with probability at least $1/(8\cdot 8 \cdot 3)\geq 1/1000$,  $|\hat{Y}_{i,j_0} \cup \hat{Z}_{i,j_0}| = 2$ and  is thus even. 

For (b), observe that if $\rho_{j_0}$ chooses option $E_3$, then with probability at least $1/8$ one of the variables in $X_{j_0,1} \setminus \Vars(L_i)$ (resp.\ $X_{j_0,2} \setminus \Vars(L_i)$) will be set to $y_{j_0}$ (resp.\ $z_{j_0}$) and all other variables in $X_{j_0}$ are set to constants. As this implies that all variables in $\Vars(L_i)$ are set to constants, again with probability at least $1/(8\cdot 8 \cdot 3)$,  $|\hat{Y}_{i,j_0} \cup \hat{Z}_{i,j_0}|=0$ and is thus even. 

For (c), let us assume that  $\Vars(L_i)\cap X_{j_0} = X_{j_0,1}$. In this case, if $\rho_{j_0}$ chooses option $E_1$, then with probability $1$, $|\hat{Y}_{i,j_0} \cup \hat{Z}_{i,j_0}|=4$. The option $E_1$ is chosen with probability $1/3$. Similarly, if $\Vars(L_i)\cap X_{j_0} = X_{j_0,2}$ then if $\rho_{j_0}$ chooses option $E_2$, we have $|\hat{Y}_{i,j_0} \cup \hat{Z}_{i,j_0}|=4$. Again, option $E_2$ is chosen with probability $1/3$. 

This finishes the proof of (\ref{eq:odd-even}) in the case that $b=1$. For the case $b=0$, we employ a similar case analysis with the following cases. 
\begin{enumerate}
\item[(a)] $\Vars(L_i)\cap X_{j_0,1} \neq \emptyset$ and $X_{j_0,2} \setminus \Vars(L_i) \neq \emptyset$,  
\item[(b)] $X_{j_0,1} \setminus \Vars(L_i) \neq \emptyset$ and $\Vars(L_i)\cap X_{j_0,2} \neq \emptyset$.
\end{enumerate}
The proof is similar and left to the reader.
\end{proof}

We now proceed to the proof of (\ref{eq:prodLi}), which will finish the proof of the base case of Theorem~\ref{thm:technical}.

\vspace*{5pt}
\noindent
Let us recall that $t$ is the number of $+$-gates such that for all $i\in[t]$, $\Vars(L_{i})\cap X' \neq \emptyset.$
\vspace*{5pt}
\noindent
\underline{$t\leq M/500$ (Case 1):} This case is quite straightforward. As noted in the proof of Case I of Lemma~\ref{lem:rank-loss}, $\rank(M_{(\hat{Y}_i, \hat{Z}_i)}(L_i|_{\rho})) \leq 2.$

In particular, the above yields 
\[
\relrk_{(\hat{Y}_i, \hat{Z}_i)}(L_i|_{\rho}) \leq \frac{2}{2^{\frac{1}{2}(|\hat{Y}_i|+ |\hat{Z}_i|)}}
\]
and hence using (\ref{eq:relrkT}), we have
\[
\relrk_{(Y'\cup \tilde{Y}',Z'\cup \tilde{Z}')}(T|_\rho) \leq \frac{2^t}{2^{\frac{1}{2}(|\tilde{Y}'\cup Y'| + |\tilde{Z}'\cup Z'|)}}\leq \frac{2^t}{2^{\frac{1}{2}(|\tilde{Y}'| + |\tilde{Z}'|)}}. 
\]
Using Lemma~\ref{lem:size-tildey'}, we know that $|\tilde{Y}'|+ |\tilde{Z}'| \leq M/200$ with probability at most $\exp(-\Omega(M))$ . Therefore, with all but $\exp(-\Omega(M))$ probability, $$\relrk_{(Y'\cup \tilde{Y}',Z'\cup \tilde{Z}')}(T) \leq \frac{2^{M/500}}{2^{M/400}} \leq \exp(-2k^{0.1}).$$  Moreover, for our setting of parameters of $M$ and $k$, $\exp(-\Omega(M)) \leq \exp(-2k^{0.1})$. 
This proves (\ref{eq:prodLi}) in the case where $t\leq M/500.$

\vspace*{5pt}
\noindent
\underline{$t > M/500$ (Case 2):} 
For each $i\in [t]$,  we define a Bernoulli random variable $V_i$ as follows: $V_i=1$ if $\relrk_{(\hat{Y}_i, \hat{Z}_i)}(L_i|_{\rho}) \leq 1/\sqrt{2}$ and it is $0$ otherwise. By Lemma~\ref{lem:rank-loss}, we know that $\avg{}{V_i} \geq 1/1000$.  Let $V  = \sum_{i \in [t]} V_i$. Then by linearity of expectation, $\avg{}{V} \geq t/1000$. 

Recall that $|X_j| = 16$ for every $j \in [M]$. The multilinearity of the formula $T$ implies that for any $i \neq i'$, $\Vars(L_i) \cap \Vars(L_{i'}) = \emptyset$.  Therefore, for each segment $X_j$ the number of $L_i$ such that $\Vars(L_i)\cap X_j\neq \emptyset$ is at most $16$. This implies that the Bernoulli random variables $V_i$s are read-$16$, when viewed as functions of the independent random restrictions $\rho_1,\ldots,\rho_M.$ 
By the read-$k$ Chernoff bound (Theorem~\ref{thm:GLSS}),\footnote{The use of the read-$k$ Chernoff bound here can be easily circumvented by showing that in fact a constant-fraction of the $V_i$ ($i\in [t]$) are in fact completely independent and the standard Chernoff bound can be applied to their sum. We leave the details to the interested reader.} we thus obtain
\[
\prob{}{ V < t/2000 } \leq \exp(-\Omega(t)) \leq \exp(-\Omega(M))\leq  \exp(-2k^{0.1}).
\]
When $V \geq t/2000$, i.e. when at least  $t/2000$ many $L_i$ are such that $\relrk_{(\hat{Y}_i, \hat{Z}_i)}(L_i|_{\rho}) \leq 1/\sqrt{2}$, we have by (\ref{eq:relrkT}),
\begin{align*}
\relrk_{(Y'\cup \tilde{Y},Z'\cup \tilde{Z})}(T) &\leq \prod_{i\in [t]}\relrk_{(\hat{Y}_i,\hat{Z}_i)}(L_i|_{\rho}) \leq \frac{1}{(\sqrt{2})^{ t/2000}} \leq \exp(-2k^{0.1}).
\end{align*}
This proves (\ref{eq:prodLi}) in the case where $t > M/500.$ This completes the proof of the base case.

\subsection{The Inductive Case: Proof of (\ref{eq:relrkPigate}) in Cases 1-3}
\label{sec:cases1-4}
The following piece of notation will be useful for the inductive case. 

For each $i\in [t]$ and $j\in [M]$, let $\hat{Y}_{i,j}$ and $\hat{Z}_{i,j}$ denote the random sets $\rho(\Vars(F'_i)\cap X'_j)\cap Y$ and $\rho(\Vars(F'_i)\cap X'_j)\cap Z$ respectively. We also denote by $\hat{Y}_{i,0}$ and $\hat{Z}_{i,0}$ the (non-random) sets $\Vars(F'_i)\cap Y'$ and $\Vars(F'_i)\cap Z'.$  Let us define $\hat{Y}_{i} = \bigcup_{j\in [M]\cup \{0\}}\hat{Y}_{i,j}$ and $\hat{Z}_{i} = \bigcup_{j\in [M]\cup \{0\}}\hat{Z}_{i,j}$.

Note that the sets $\tilde{Y}'\cup Y'$ can be partitioned as $Y'\cup \bigcup_{j\in [M]} \tilde{Y}'_j$ and also as $\bigcup_{i\in [t]}\hat{Y}_i = \bigcup_{i\in [t], j\in  [M]\cup \{0\}} \hat{Y}_{i,j}.$ A similar statement is true for $\tilde{Z'}\cup Z'$ as well.

\subsubsection{Case 1}
\label{sec:case1}

By renaming the segments if necessary, we assume that $X_1,\ldots,X_{M/2}$ are $(1/4)$-shattered. For each $j \leq M/2$, the restriction $\rho_j$ is sampled according to the algorithm $\mc{A}$ described in Section~\ref{sec:restriction}. We say that a $\rho_j$ is \emph{good} if the algorithm $\mc{A}$ decides on option $E_3$ in sampling $\rho_j.$ For each $j$, $\rho_j$ is good with probability $1/3$. Let $\mc{E}_1$ denote the event that the number of good $\rho_j$ is at most $M/8$. Since different $\rho_j$'s are sampled independently, a Chernoff bound (Theorem~\ref{thm:Chernoff} Item 1) tells us that 
\begin{equation}
\label{eq:case1chernoff}
\prob{\rho}{\mc{E}_1}\leq \exp(-\Omega(M)).
\end{equation}

For each $j\in [M/2],$ we condition on whether or not $\rho_j$ is good. By (\ref{eq:case1chernoff}), the probability that $\mc{E}_1$ occurs is $\exp(-\Omega(M)).$

Assume that the event $\mc{E}_1$ does not occur. By renaming segments once more, we may assume that $\rho_j$ is good for each $j\in [M/8]$. We condition on any choice of the restrictions $\rho_j$ for $j > M/8$; this fixes the sets $\tilde{Y}'_j$ and $\tilde{Z}'_j$ defined above for $j > M/8$. 

We now observe the following from the description of the sampling algorithm $\mc{A}$, specifically option $E_3$ of $\mc{A}$. Conditioned on our choices so far, each $\rho_j$ for $j\in [M/8]$ is now a random $(X_j,\{y_j\},\{z_j\})$-restriction. For clarity, we call this conditioned random restriction $\rho_j'$.

We now proceed to analyzing $\relrk_{(\tilde{Y}'\cup Y',\tilde{Z}'\cup Z')}(F'|_\rho)$. Since $\tilde{Y}'\cup Y'$ (resp.\ $\tilde{Z}'\cup Z'$) can be partitioned as $\bigcup_{i\in [t]} \hat{Y}_i$ (resp.\ $\bigcup_{i\in [t]}\hat{Z}_i$), we know (Proposition~\ref{prop:relrk} Item 3) that for any choice of $\rho,$
\begin{equation}
\label{eq:case1relrkF'}
\relrk_{(\tilde{Y}'\cup Y',\tilde{Z}'\cup Z')}(F'|_\rho) = \prod_{i\in [t]}\relrk_{(\hat{Y}_i,\hat{Z}_i)}(F'_i|_\rho).
\end{equation}

So we analyze $\relrk_{(\hat{Y}_i,\hat{Z}_i)}(F'_i|_\rho)$ for each $i\in [t]$. To bound this quantity, we recall (Proposition~\ref{prop:relrk} Item 1) that $\relrk_{(\hat{Y}_i,\hat{Z}_i)}(F'_i|_\rho)$ is always at most $1$. Further, if $|\hat{Y}_i\cup \hat{Z}_i|$ is odd, then $\relrk_{(\hat{Y}_i,\hat{Z}_i)}(F'_i|_\rho)\leq 1/\sqrt{2}.$ This motivates what follows.

For any $i\in [t]$ and $j\in \{0,\ldots,M\}$, let $\alpha_{i,j}\in \{0,1\}$ be the random variable defined by $\alpha_{i,j} \equiv |\hat{Y}_{i,j}\cup \hat{Z}_{i,j}| \pmod{2}.$ Note that the variables $\alpha_{i,0}$ and $\alpha_{i,j}$ for $j > M/8$ are actually fixed. Define $\tilde{\alpha}_i\in \{0,1\}$ by $\tilde{\alpha}_i = \bigoplus_{j=0}^M \alpha_{i,j}.$ Observe that $\tilde{\alpha}_i = 1$ if and only if $|\hat{Y}_i\cup \hat{Z}_i|$ is odd. In particular, by Proposition~\ref{prop:relrk} (item 3), we have $\relrk_{(\hat{Y}_i,\hat{Z}_i)}(F'_i|_\rho) \leq 1/2^{\tilde{\alpha}_i/2}.$

Let $\mathrm{Odd}_\rho$ be the integer random variable defined by 
\[
\mathrm{Odd}_\rho = \sum_{i\in [t]} \tilde{\alpha}_i
\]
where the sum is defined over $\mathbb{R}.$ By the discussion above and (\ref{eq:case1relrkF'}), we know that 
\begin{equation}
\label{eq:case1oddrelrk}
\relrk_{(\tilde{Y}'\cup Y',\tilde{Z}'\cup Z')}(F'|_\rho) \leq \frac{1}{2^{\mathrm{Odd}_\rho/2}}.
\end{equation}

We will show below that 
\begin{equation}
\label{eq:case1oddclm}
\prob{\rho'_1,\ldots,\rho'_{M/8}}{\mathrm{Odd}_{\rho} \leq 10 k^{0.1}}\leq \exp(-k^{0.1}).
\end{equation}

The above, along with (\ref{eq:case1oddrelrk}) implies that whenever the event $\mc{E}_1$ does not occur, we have the inequality
\[
\prob{\rho}{\relrk_{(\tilde{Y}'\cup Y',\tilde{Z}'\cup Z')}(F'|_\rho) \geq \exp(-k^{0.1})} \leq \exp(-k^{0.1}).
\]

In particular, using (\ref{eq:case1chernoff}), we see that
\[
\prob{\rho}{\relrk_{(\tilde{Y}'\cup Y',\tilde{Z}'\cup Z')}(F'|_\rho) \geq \exp(-k^{0.1})} \leq \exp(-k^{0.1}) + \exp(-\Omega(M)) \leq 2\exp(-k^{0.1})
\]
which implies (\ref{eq:relrkPigate}) and hence finishes the analysis of Case $1$. For the last inequality above, we have used the fact $M \geq m/10\geq k$ for our choice of parameters.

We now prove (\ref{eq:case1oddclm}). To do this, we will actually prove a slightly different statement. For any $i\in [t]$, define $\alpha_i = \bigoplus_{j\in [M/8]}\alpha_{i,j}$. Note that $\alpha_i = \tilde{\alpha}_i \oplus \bigoplus_{j\not\in [M/8]}\alpha_{i,j}.$ For any $\beta_1,\ldots,\beta_t\in \{0,1\}$, we will show that 
\begin{equation}
\label{eq:case1alphabeta}
\prob{\rho_1',\ldots,\rho'_{M/8}}{\forall i\in [t]\ \alpha_i = \beta_i} \leq \exp(-k^{0.15}).
\end{equation}

Assuming the above, we obtain a similar statement for $\tilde{\alpha}_1,\ldots,\tilde{\alpha}_t$, since for any $\beta_1,\ldots,\beta_t\in \{0,1\}$,
\begin{align*}
\prob{\rho_1',\ldots,\rho'_{M/8}}{\forall i\in [t]\ \tilde{\alpha}_i = \beta_i} &= \prob{\rho_1',\ldots,\rho'_{M/8}}{\forall i\in [t]\ \alpha_i = \beta_i\oplus \bigoplus_{j\not\in [M/8]}\alpha_{i,j}} \leq \exp(-k^{0.15}).
\end{align*}
(We have used above the fact that $\alpha_{i,j}$ is fixed for any $j\in \{0,\ldots,M\}\setminus [M/8]$.)

From the above inequality and using Proposition~\ref{prop:ANDthr}, we get
\begin{align*}
\prob{\rho'_1,\ldots,\rho'_{M/8}}{\mathrm{Odd}_{\rho} \leq 10 k^{0.1}} &\leq \exp(10k^{0.1}\log t - k^{0.15})\\
&\leq \exp(10k^{0.1} k^{0.02+o(1)} - k^{0.15}) \leq \exp(-k^{0.1})
\end{align*}
where for the second inequality, we used the fact that $t\leq n_{\Delta'} = O(m)^{\Delta'}$ and hence $\log t = O(\Delta' \log m )\leq m^{0.001 + o(1)} = k^{0.02 + o(1)}$ as $\Delta' \leq m^{0.001}$ and $m = k^{20}.$ This finishes the proof of (\ref{eq:case1oddclm}) modulo (\ref{eq:case1alphabeta}).

The proof of (\ref{eq:case1alphabeta}) is the main technical statement of this section. From now on, fix some $\beta_1,\ldots,\beta_t\in \{0,1\}.$   

\paragraph{Proof outline of (\ref{eq:case1alphabeta}).}
From the description of the sampling algorithm $\mc{A}$ (specifically option $E_3$ of $\mc{A}$), we observe the following. To sample each $\rho_j'$, we choose independent random variables $x_{j,1}$ and $x_{j,2}$ uniformly from $X_{j,1}$ and $X_{j,2}$ respectively, and set $x_{j,1}$ to $y_j$ and $x_{j,2}$ to $z_j$; all other variables in $X_{j}$ are set deterministically to $0$ or $1$ according to some rule $R$ (the exact rule $R$ will not be relevant in this argument). We will view this sampling process iteratively via an algorithm $\mc{A}'$ described formally below. 

Informally, at each step, $\mc{A}'$ chooses a suitable $I\subseteq [t]$ and asks for each $j\in [M/8]$ if either of the variables $x_{j,1}$ or $x_{j,2}$ belongs to the set of variables $\bigcup_{i\in I}\Vars(F'_i)\cap X'_j$. If this event does occur for any $j\in [M/8]$, then $\mc{A}'$ reveals the restriction $\rho'_j$ entirely, and otherwise, it does not reveal anything else about $\rho'_j$ in this step. In either case, we are able to entirely deduce the values of $\alpha_{i,j}$ for $i\in I$ from the above information about $\rho_j'$ and hence we can also deduce $\alpha_i = \bigoplus_{j\in [M/8]} \alpha_{i,j}$ for each $i\in I$. We show that with reasonable probability, it holds that $\bigoplus_{i\in I} \alpha_i \neq \bigoplus_{i\in I} \beta_i,$ which in particular implies that there must be an $i$ such that $\alpha_i\neq \beta_i.$ In this case, $\mc{A}'$ outputs SUCCESS. If not, the algorithm continues with another iteration of the same procedure. We will show that with high probability, $\mc{A}'$ can carry out many such iterations. If either $\mc{A}'$ does not find an $i$ such that $\alpha_i \neq \beta_i$ after many iterations, or it cannot carry out too many iterations, then $\mc{A}'$ outputs FAILURE. We show that $\mc{A}'$ outputs SUCCESS with high probability, which will finish the proof of (\ref{eq:case1alphabeta}).

\paragraph{The algorithm $\mc{A}'$.} The algorithm $\mc{A}'$ is a general sampling procedure that has the following input-output behaviour.
\begin{itemize}
\item {\bf Input}: Sets $S\subseteq [M/8], T\subseteq [t]$, and fixed (i.e. not random) $(X_j,\{y_j\},\{z_j\})$-restrictions $(\rho'_j)_{j\in [M/8]\setminus S}.$ For each $j\in S$, we define $\bar{X}_j = (\bigcup_{i\in T}\Vars(F'_i))\cap X_j.$
\item {\bf Desired Output}: The algorithm samples an independent random restriction $\rho'_j$ for each $j\in S$ as follows. Independent and uniformly random variables $x_{j,1}$ and $x_{j,2}$ are chosen from sets $\bar{X}_{j,1}:= \bar{X}_j\cap X_{j,1}$ and $\bar{X}_{j,2}:= \bar{X}_j \cap X_{j,2}$ respectively. The variables $x_{j,1}$ and $x_{j,2}$ are set to $y_j$ and $z_j$ respectively, and the remaining variables in $X_j$ are set deterministically to $0$ or $1$ in accordance with the rule $R$ referenced above. 

Further, the algorithm either outputs SUCCESS or FAILURE, with the guarantee that when it outputs SUCCESS, then we must have $\alpha_i\neq \beta_i$ for some $i\in [t]$. 
\end{itemize}

Note that the problem of sampling $\rho'_1,\ldots, \rho'_{M/8}$ is equivalent to running the algorithm $\mc{A}'$ with $S = [M/8]$ and $T = [t]$. In this case, $\bar{X}_j = X_j$ for each $j\in [M/8]$.

The formal description of the algorithm $\mc{A}'$ follows. 

\paragraph{Algorithm $\mc{A}'(S,T,(\rho'_j)_{j\in [M/8]\setminus S})$}
\begin{enumerate}
\item Find $I\subseteq T$ such that the following holds. Let $F'_I := \prod_{i\in I}F'_i$ and $\Vars(F'_I) := \bigcup_{i\in I}\Vars(F'_i).$ 
\begin{enumerate}
\item For all $j\in S$, $|\Vars(F'_I)\cap \bar{X}_j|\leq (1/3)\cdot |\bar{X}_j|.$ 
\item $|\hat{S}| \leq M/k$, where $\hat{S} := \{j \in S \mid |\Vars(F'_I)\cap \bar{X}_j|> (2\varepsilon/k)\cdot |\bar{X}_j|\}$. 
\item There is some $j\in S$ such that $|\Vars(F'_I)\cap \bar{X}_j|\geq (\varepsilon/2k)\cdot |\bar{X}_j|.$
\end{enumerate}
If there are more than one such $I$, choose the lexicographically least one. If there is no such $I$ or if $S = \emptyset$, output FAILURE. Further, complete the sampling process as follows. For each $j\in S$, the restriction $\rho'_j$ is sampled by choosing variables $x_{j,1}$ and $x_{j,2}$ independently and uniformly from $\bar{X}_{j,1}$ and $\bar{X}_{j,2}$ respectively and setting them to $y_j$ and $z_j$ respectively. Other variables in $X_j$ are set according to the restriction rule $R$.
\item Initialize the set $S' = \emptyset .$
\item For each $j\in S$, define $\delta_{j,1}\in [0,1]$ by
\[
\delta_{j,1} = \frac{1}{|\bar{X}_{j,1}|}\cdot |\Vars(F'_I)\cap \bar{X}_{j,1}|
\]
and $\delta_{j,2}$ similarly. Let $\delta_j = |\Vars(F'_I)\cap \bar{X}_{j}|/|\bar{X}_j| $ and define $\delta = \sum_{j\in S}\delta_j.$
\item If $\delta < 1$, define $S'' = \{j\in S\ |\ \delta_j \geq 1/M^{0.5}\}$. If $\delta \geq 1,$ define $S'' = \hat{S}$ where $\hat{S}$ is as defined in Step 1 above, i.e., $S'' = \{j\in S\ | \delta_j \geq (2\varepsilon/k)\}.$
\item For each $j\in S$, do the following independently.
\begin{enumerate}
\item Sample $b_{j,1}\in \{0,1\}$ so that $b_{j,1}=1$ w.p. $\delta_{j,1}$ and $0$ otherwise. Similarly, sample $b_{j,2}$ independent of $b_{j,1}$ so that $b_{j,2}=1$ w.p. $\delta_{j,2}.$
\item If either of $b_{j,1}$ or $b_{j,2}$ is $1$, or $j\in S''$, do the following.
\begin{enumerate}
\item Add $j$ to the set $S'$.
\item If $b_{j,1} = 1$, sample $x_{j,1}$ uniformly from $\Vars(F'_I)\cap \bar{X}_{j,1}$ and set $x_{j,1}$ to $y_j$. If $b_{j,1}=0,$ sample $x_{j,1}$ uniformly from $\bar{X}_{j,1}\setminus \Vars(F'_I)$ and set $x_{j,1}$ to $y_j$.
\item Sample $x_{j,2}$ similarly from $\bar{X}_{j,2}$ and set $x_{j,2}$ to $z_j$. 
\item Set all other variables in $X_j$ deterministically in accordance with the rule $R$. (This fixes the $(X_j,\{y_j\},\{z_j\})$-restriction $\rho'_j$.)
\end{enumerate}
\end{enumerate}
\item Compute the Boolean variables $\alpha_{i,j}'\in \{0,1\}$ for each $i\in I$ and $j\in [M/8]$ as follows. 
\begin{enumerate}
\item If $j\not\in S$ or $j\in S'$, compute $\alpha_{i,j}'$ from $\rho'_j$ using $\alpha'_{i,j} = |\hat{Y}_{i,j}\cup \hat{Z}_{i,j}| \pmod{2}.$
\item If $j\in S\setminus S'$, then set $\alpha'_{i,j} = 0.$
\end{enumerate}
\item For each $i\in I$, let $\alpha'_i = \bigoplus_{j\in [M/8]}\alpha'_{i,j}.$ If $\alpha'_i \neq \beta_i$ for any $i\in I$, sample the remaining $\rho'_j$ ($j\in S\setminus S'$) as in Steps 5(b)(ii) and 5(b)(iii), and output SUCCESS.
\item Otherwise, run the algorithm $\mc{A}'$ on inputs $(S\setminus S', T\setminus I, (\rho'_j)_{j\not \in (S\setminus S')}).$
\end{enumerate}

\paragraph{Correctness.} Here, we show that $\mc{A}'$, on any input $S,T,$ and $(\rho'_j)_{j\not\in S}$ samples $(\rho'_j)_{j\in S}$ according to the desired distribution. Moreover, we show that whenever $\mc{A}'$ outputs SUCCESS, it is indeed because the sampled restrictions imply that $\alpha_i \neq \beta_i$ for some $i\in [t]$.

We first argue that the sampled distribution is correct. Note that if the algorithm cannot find a suitable $I$ in Step 1, then the restrictions sampled trivially have the correct distribution. So we assume that $\mc{A}$ does not output FAILURE in Step 1. 

Now, note that to sample a uniformly random variable $a$ from a finite set $A$, we may first fix any set $A'\subseteq A$ and sample a random bit $\beta\in \{0,1\}$ that is $1$ with probability  $|A'|/|A|$ and depending on whether $\beta$ is $1$ or $0$, sample a random element of $A'$ or $A\setminus A'.$ This describes how the sampling algorithm $\mc{A}'$ samples a random $x_{j,1}$ from $\bar{X}_{j,1}$ (the case of $\bar{X}_{j,2}$ is similar), where the role of the set $A'$ is taken by $\bar{X}_{j,1}\cap\Vars(F'_I)$ and the bit $b_{j,1}$ plays the role of $\beta$. If $j\in S'$, then the subsequent sampling takes place in Step 5(b), and for $j\in S\setminus S'$, the subsequent sampling takes place in either Step 7 (in case the algorithm outputs SUCCESS) or in a later iteration.

We now argue that an output of SUCCESS means that the algorithm has found an $i$ such that $\alpha_i\neq \beta_i.$ This is obvious once we argue that for each $i\in I$, the quantity $\alpha'_i$ computed by the algorithm is equal to the random variable $\alpha_{i}$ defined above. To argue this, it suffices to show that $\alpha'_{i,j}$ (computed in Step 6) equals $\alpha_{i,j}$ for each $i\in I,j\in [M/8]$. This is obvious for $j\not\in S$ or $j\in S'$ from Step 6 and the definition of $\alpha_{i,j}$ above. For $j\in S\setminus S'$, we know that $b_{j,1} = b_{j,2} = 0$ and hence the sampled $\rho'_j$ does not choose any variable from $\Vars(F'_I)$ to be either $x_{j,1}$ or $x_{j,2}$. In particular, this implies that  $x_{j,1}\not\in \Vars(F'_i)$ and $x_{j,2}\not\in \Vars(F'_i)$ for any $i\in I$ and hence, $\alpha_{i,j}=0.$ Thus, even in the case that $j\in S\setminus S'$, we have $\alpha_{i,j} = \alpha'_{i,j}.$ This concludes the proof of correctness.

\paragraph{Probability of SUCCESS.} We now argue that the algorithm $\mc{A}'$ outputs SUCCESS with high probability if started with the initial input of $S = S_0 := [M/8]$ and $T = T_0 := [t]$ (in which case $\bar{X}_j = X_j$ for each $j$ and the input $(\rho'_j)_{j\not\in S}$ is trivial). This will prove (\ref{eq:case1alphabeta}). 

Consider a run of the algorithm $\mc{A}'$ on the above inputs. Each such run can produce many successive iterations on different inputs $(S,T,(\rho'_j)_{j\not\in S})$. We say that a single iteration of $\mc{A}'$ is of Type I if the corresponding value of $\delta$ (computed in Step 3) is at least $1$ and of Type II otherwise. Note that this is a \emph{deterministic} function of the current inputs $S$ and $T$.

To show that the algorithm $\mc{A}'$ outputs SUCCESS with high probability on $(S_0,T_0)$, we show a more general statement. Call an input $(S,T,(\rho'_j)_{j\not\in S})$ an \emph{$(a,b)$-good} input, where $a,b\in \mathbb{N}$, if the following holds.
\begin{itemize}
\item The input set $S$ satisfies
\begin{equation}
\label{eq:case1S}
|S| \geq \frac{M}{8} - \frac{10\cdot a\cdot M}{k}- 10\cdot b\cdot M^{0.5},
\end{equation}
\item and for each $j\in S$, we have 
\begin{equation}
\label{eq:case1Xjbar}
|\bar{X}_j| \geq |X_j|\cdot \left(1-\frac{2\varepsilon}{k}\cdot a - \frac{b}{M^{0.5}}\right).
\end{equation}
\end{itemize}

We say an iteration of $\mc{A}'$ is $(a,b)$-good, if its input is $(a,b)$-good. Informally, we expect an iteration of $\mc{A}'$ to be $(a,b)$-good after at most $a$ iterations of Type I and at most $b$ iterations of Type II (see Claim~\ref{clm:case1gooditerns} below). 

The first observation is that an algorithm cannot output FAILURE on an $(a,b)$-good iteration for small $a,b$. 

Define $a_0 = k/100$ and $b_0 = M^{0.45}.$

\begin{claim}
\label{clm:case1failureiter}
The algorithm $\mc{A}'$ does not output FAILURE in any $(a,b)$-good iteration where $a < a_0$ and $b  < b_0.$
\end{claim}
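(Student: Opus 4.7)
The plan is to refute both conditions for FAILURE in Step~1 of $\mc{A}'$: $S \neq \emptyset$ and some $I \subseteq T$ satisfies (a), (b), and (c). First I would verify that $S \neq \emptyset$ and that $|\bar X_j|$ is close to $|X_j|$ for every $j \in S$. The $(a,b)$-goodness bounds (\ref{eq:case1S}) and (\ref{eq:case1Xjbar}), together with $a < k/100$ and $b < M^{0.45}$, give $|S| \geq M/8 - M/10 - 10M^{0.95} = \Omega(M)$ (using $M \geq m/10$ and $k = m^{0.05}$) and $|\bar X_j| \geq |X_j|(1 - \varepsilon/50 - M^{-0.05}) \geq (9/10)|X_j|$. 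As a byproduct, $T \neq \emptyset$, since otherwise $\bar X_j$ would be empty in contradiction with the lower bound.

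The main step is to exhibit a valid $I$, which I would do greedily: initialize $I = \emptyset$, process indices of $T$ in some arbitrary order, add each successive $i$ to $I$, and stop as soon as condition (c) is first satisfied. This greedy process must terminate successfully, because once all of $T$ has been added one has $\Vars(F'_I) \cap \bar X_j = \bar X_j$ for every $j \in S$ (the $\Vars(F'_i)$ for $i \in T$ partition $\bar X_j$ by definition of $\bar X_j$), so (c) becomes trivially true no later than when $T$ is exhausted.

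Finally I would verify (a) and (b) at the termination step. Let $i_r$ be the last index added and $I_{r-1} := I \setminus \{i_r\}$; by the greedy stopping rule $|\Vars(F'_{I_{r-1}}) \cap \bar X_j|/|\bar X_j| < \varepsilon/(2k)$ for every $j \in S$. For (a), I would exploit that $S \subseteq [M/8] \subseteq [M/2]$ and so every $X_j$ with $j\in S$ is $(1/4)$-shattered (after the Case~1 renaming); combined with $|\bar X_j| \geq (9/10)|X_j|$ this gives $|\Vars(F'_{i_r}) \cap \bar X_j|/|\bar X_j| \leq (1/4)(10/9) = 5/18$, and hence $|\Vars(F'_I) \cap \bar X_j|/|\bar X_j| < \varepsilon/(2k) + 5/18 < 1/3$ for $k$ large. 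For (b), any $j \in \hat S$ must satisfy $|\Vars(F'_{i_r}) \cap \bar X_j|/|\bar X_j| > 2\varepsilon/k - \varepsilon/(2k) = 3\varepsilon/(2k)$, which transfers to $|\Vars(F'_{i_r}) \cap X_j|/|X_j| > (27/20)(\varepsilon/k) > \varepsilon/k$; thus $F'_{i_r}$ is $(\varepsilon/k)$-heavy in $X_j$, and the Case~1 hypothesis bounds the number of such $j$ by $M/k$.

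I expect the main obstacle to be condition (b), whose verification depends crucially on the greedy being halted at the \emph{first} moment (c) holds. This minimality forces essentially all the excess mass in a heavy segment (beyond the $\varepsilon/(2k)$ contributed by $I_{r-1}$) to come from the single last term $F'_{i_r}$, which is exactly what the single-term Case~1 heaviness hypothesis is strong enough to control; without this minimality one would be forced to track cumulative contributions of many terms and the clean bound $|\hat S| < M/k$ would be much harder to obtain.
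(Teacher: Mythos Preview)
Your argument is correct. The approach differs from the paper's: instead of a unified greedy, the paper splits into two cases according to whether some single $i_0\in T$ already satisfies $|\Vars(F'_{i_0})\cap X_{j_0}|\geq (\varepsilon/2k)|X_{j_0}|$ for some $j_0\in S$. If so, it takes $I=\{i_0\}$ and verifies (a)--(c) directly; if not, it takes a \emph{minimum-cardinality} $I$ satisfying (c) and, by splitting $I$ into two proper subsets (each failing (c)), obtains the stronger bound $|\Vars(F'_I)\cap \bar X_j|\leq (\varepsilon/k)|\bar X_j|$ for \emph{all} $j\in S$, so in that second case (a) holds with room to spare and in fact $\hat S=\emptyset$. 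Your greedy argument is more streamlined---it avoids the case split entirely by isolating the last-added index $i_r$ and bounding its contribution via the $(1/4)$-shattering and the Case~1 heaviness hypothesis---at the cost of slightly weaker (but still sufficient) bounds in (a) and (b). Both routes exploit the same core idea: minimality with respect to (c) localizes the ``excess'' in any heavy segment to a single $F'_i$, which the Case~1 hypothesis then controls.
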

\begin{proof}
We only need to show that $S\neq \emptyset$ and that $\mc{A}'$ is able to find an $I\subseteq T$ with the required properties in Step 1. 

Observe that in an $(a,b)$-good iteration for $a < a_0$ and $b < b_0,$ we have by (\ref{eq:case1S})
\begin{equation}
\label{eq:case1lbdS}
|S| \geq \frac{M}{8} - \frac{10\cdot (k/100) \cdot M}{k} - 10\cdot M^{0.45}\cdot M^{0.5} \geq \frac{M}{50}
\end{equation}
and also for each $j\in S$, by (\ref{eq:case1Xjbar})
\begin{equation}
\label{eq:case1lbdXj}
|\bar{X}_j|\geq |X_j|\cdot \left(1-\frac{2\varepsilon}{k}\cdot \frac{k}{100} - \frac{M^{0.45}}{M^{0.5}}\right) \geq  (1-o(1))|X_j|.
\end{equation}

First consider the case that there is some $i_0\in T$ and some $j_0\in S$ such that $|\Vars(F'_{i_0})\cap X_{j_0}|\geq (\varepsilon/2k)|X_{j_0}|.$ In this case, we claim that the singleton set $I = \{{i_0}\}$ has the properties required in Step 1 of $\mc{A}'.$ Property (a) follows from the fact that each $X_j$ ($j\in [M/8]$) is $(1/4)$-shattered and hence for each $j\in [M/8]$
\[
|\Vars(F'_{i_0})\cap \bar{X}_j| = |\Vars(F'_{i_0})\cap X_j|\leq |X_j|/4 \leq |\bar{X}_j|/3
\]
 where the last inequality uses (\ref{eq:case1lbdXj}). Property (c) follows from the choice of $i_0$, which implies 
 \[
 |\Vars(F'_{i_0})\cap \bar{X}_{j_0}| = |\Vars(F'_{i_0})\cap X_{j_0}|\geq (\varepsilon/2k)\cdot|X_{j_0}| \geq (\varepsilon/2k)\cdot |\bar{X}_{j_0}|.
 \]
 Finally, to see Property (b), we note that if $|\Vars(F'_{i_0})\cap \bar{X}_j| > (2\varepsilon/k)|\bar{X}_j|$ for some $j\in S$, then by (\ref{eq:case1lbdXj}), we also have 
 \[
|\Vars(F'_{i_0})\cap \bar{X}_j| = (2\varepsilon/k)|\bar{X}_j|  > (\varepsilon/k)|X_j|.
 \]
 In other words, $F'_{i_0}$ is $(\varepsilon/k)$-heavy in $X_j$. However, the assumption in this section (corresponding to Case 1) is that each $F'_i$ is $(\varepsilon/k)$-heavy in at most $M/k$ many $X_j$. In particular, this shows that the  number of $j$'s such that $|\Vars(F'_{i_0})\cap \bar{X}_j| > (2\varepsilon/k)|\bar{X}_j|$ is at most $M/k$ and hence Property (b) is indeed satisfied. Thus, the singleton set $I = \{i_0\}$ has all the required properties.
 
 From now onwards, we assume that there is no $i_0\in T$ and $j_0\in S$ satisfying the above. Thus, we have for each $i\in T$ and $j\in S$, 
 \begin{equation}
 \label{eq:case1Xjshatt}
 |\Vars(F'_i)\cap X_j| < (\varepsilon/2k)|X_j|.
 \end{equation}

From (\ref{eq:case1Xjshatt}), we know in particular that for each $i\in T$, we have $|\Vars(F'_i)\cap \bar{X}_j| = |\Vars(F'_i)\cap X_j|\leq (\varepsilon/2k)|X_j| \leq (2\varepsilon/k)|\bar{X}_j|.$ Thus, the set $\bar{X}_j$ is partitioned by the sets $(\Vars(F'_i)\cap \bar{X}_j)_{i\in T},$ each of relative size at most $(2\varepsilon/k)$.

Now, consider all sets $I\subseteq T$ such that for \emph{some} $j\in S$, we have 
\begin{equation}
\label{eq:case1I0def}
|\Vars(F'_I)\cap \bar{X}_j|\geq (\varepsilon/2k)\cdot |\bar{X}_j|.
\end{equation}
Clearly, since each $\bar{X}_j$ is partitioned by the sets $(\Vars(F'_i)\cap \bar{X}_j)_{i\in T},$ there do exist sets $I$ satisfying (\ref{eq:case1I0def}). That is, any such set $I$ satisfies property (c) mentioned in Step 1 of the algorithm $\mc{A}'$. 

We fix such an $I$ of the smallest possible size and claim that $I$ also satisfies 
\begin{equation}
\label{eq:case1I0prop}
|\Vars(F'_{I})\cap \bar{X}_j|\leq (2\varepsilon/k)\cdot |\bar{X}_j|
\end{equation}
for each $j\in S$. This will therefore prove that this $I$ satisfies properties (a) and (b) in Step 1 of the algorithm $\mc{A}'$. (In fact for proving that (a) holds, it suffices to prove that $|\Vars(F'_{I})\cap \bar{X}_j|\leq 1/3 \cdot |\bar{X}_j|$. Similarly, for proving (b), it suffices to prove (\ref{eq:case1I0prop}) for all but $M/k$ many $j\in S$. But we get better bounds in this case.)

To see this, we argue as follows. If $|I| = 1$, then $F'_I = F'_i$ for some $i\in T$. Then (\ref{eq:case1I0prop}) follows from (\ref{eq:case1Xjshatt}) and (\ref{eq:case1lbdXj}). So we may assume $|I| > 1.$ In this case, as $I$ is the smallest possible set satisfying (\ref{eq:case1I0def}), we must have $|\Vars(F'_{I'})\cap \bar{X}_j|<(\varepsilon/2k)\cdot |\bar{X}_j|$ for each $I'\subsetneq I$ and \emph{each} $j\in S$. Thus, we have for any fixed partition of $I$ into two disjoint sets $I'$ and $I''$ and any $j\in S$, 
\[ 
|\Vars(F'_{I})\cap \bar{X}_j|\leq |\Vars(F'_{I'})\cap \bar{X}_j| + |\Vars(F'_{I''})\cap \bar{X}_j| \leq (\varepsilon/2k + \varepsilon/2k)\cdot |\bar{X}_j| \leq (\varepsilon/k)|\bar{X}_j| 
\]
proving (\ref{eq:case1I0prop}).

This shows that there are suitable sets $I$ satisfying the properties required by Step 1 of the algorithm $\mc{A}'$ and hence $\mc{A}'$ does not output FAILURE on this input.
\end{proof}

Informally, as mentioned earlier, we expect an iteration of $\mc{A}'$ to be $(a,b)$-good if there have been at most $a$ iterations of Type I and at most $b$ iterations of Type II before this one. Further, in each good iteration, we have a reasonable probability of success. These points are made precise in the following claim.

\begin{claim}
\label{clm:case1gooditerns}
Consider an iteration of $\mc{A}'$ on an $(a,b)$-good input where $a < a_0$ and $b < b_0.$ 
\begin{itemize}
\item If the iteration is of Type I, then the probability that the next iteration is not $(a+1,b)$-good is at most $\exp(-\Omega(k))$. Further, the probability of SUCCESS in the current iteration is at least $\eta_1 = 1/100.$
\item If the iteration is of Type II, then the probability that the input produced for the next iteration is not $(a,b+1)$-good is at most $\exp(-\Omega(k)).$ Further, the probability of SUCCESS in the current iteration is at least $\eta_2 = \varepsilon/100k.$
\end{itemize}
\end{claim}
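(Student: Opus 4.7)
The claim has two parts: preserving the $(a,b)$-goodness invariant for the next iteration, and lower-bounding the current iteration's success probability. I outline the plan for each, then the main obstacle.

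\noindent\textbf{Invariant preservation.} The $\bar{X}_j$ lower bound (\ref{eq:case1Xjbar}) is deterministic: every $j$ in the new $S=S\setminus S'$ satisfies $j\notin S''$ (since $S''\subseteq S'$), which forces $\delta_j\le 2\varepsilon/k$ in Type I and $\delta_j<1/M^{0.5}$ in Type II. Since the new $\bar{X}_j$ has size $(1-\delta_j)|\bar{X}_j|$, the estimate $(1-x)(1-y)\ge 1-x-y$ applied to the old invariant yields the updated bound with $a\to a+1$ (Type I) or $b\to b+1$ (Type II). For the $|S|$ invariant I need a bound on $|S'|$, which I split as $|S'|\le |S''|+|S'\setminus S''|$. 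The first term is deterministic: $|S''|=|\hat{S}|\le M/k$ in Type I by property (b) of $I$, and $|S''|\le \delta\cdot M^{0.5}<M^{0.5}$ in Type II since $\delta<1$. The second term is a sum of independent $\{0,1\}$-variables, one per $j\in S\setminus S''$, with success probability $1-(1-\delta_{j,1})(1-\delta_{j,2})\le \delta_{j,1}+\delta_{j,2}$; using the $\delta_j$-bounds for $j\notin S''$ in each type, I bound the expectation of this sum and apply Theorem~\ref{thm:Chernoff} to get an $\exp(-\Omega(k))$ tail.

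\noindent\textbf{Success probability.} SUCCESS occurs whenever $\bigoplus_{i\in I}\alpha'_i\neq \bigoplus_{i\in I}\beta_i$, so it suffices to lower-bound $\unbias(W)$ for $W:=\bigoplus_{i\in I}\alpha'_i$. Using that $\alpha'_{i,j}=0$ for $j\in S\setminus S'$, that $\alpha'_{i,j}\equiv|\hat{Y}_{i,j}\cup\hat{Z}_{i,j}|\pmod 2$ for $j\in S'$, and that the sets $\Vars(F'_i)$ for $i\in I$ are pairwise disjoint (syntactic multilinearity), I rewrite
\[
W = C \oplus \bigoplus_{j\in S} W_j, \qquad W_j := \mathbf{1}[x_{j,1}\in\Vars(F'_I)]\oplus \mathbf{1}[x_{j,2}\in\Vars(F'_I)],
\]
where $C$ is a fixed bit determined by the data for $j\notin S$ and the $W_j$ are mutually independent by independence of the $\rho'_j$. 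Proposition~\ref{prop:xorunbias} then gives $\unbias(W)\ge \min\bigl(\tfrac{1}{2}\sum_{j\in S}\unbias(W_j),\,\tfrac{1}{10}\bigr)$.

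To lower-bound each $\unbias(W_j)$, I carry out a short case analysis on whether $\delta_{j,1},\delta_{j,2}\le 1/2$, using property (a) of $I$ ($\delta_j\le 1/3$) to control the corner case where one half-segment is almost saturated by $\Vars(F'_I)$; the conclusion is $\unbias(W_j)=\Omega(\delta_j)$. Summing, $\sum_{j\in S}\unbias(W_j)=\Omega(\delta)$. In Type I, $\delta\ge 1$ pushes $\unbias(W)$ past $1/10$, yielding $\eta_1=1/100$. In Type II, property (c) of $I$ supplies a single $j_0\in S$ with $\delta_{j_0}\ge \varepsilon/2k$, which alone gives $\sum_j\unbias(W_j)=\Omega(\varepsilon/k)$ and hence $\eta_2=\varepsilon/100k$.

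\noindent\textbf{Main obstacle.} The trickiest step is the lower bound $\unbias(W_j)=\Omega(\delta_j)$ when the half-segments $\bar{X}_{j,1},\bar{X}_{j,2}$ are very unbalanced in size, so that one of $\delta_{j,1},\delta_{j,2}$ is close to $1$ while the other is $0$; here one must combine $\delta_j\le 1/3$ with the identity $\delta_j|\bar{X}_j|=\delta_{j,1}|\bar{X}_{j,1}|+\delta_{j,2}|\bar{X}_{j,2}|$ to argue that the ``large'' $\delta_{j,\ell}$ can only occur on a proportionally small half-segment, keeping $\unbias(W_j)=\Omega(\delta_j)$. A secondary technical nuisance is bounding $\mathbb{E}[|S'\setminus S''|]$ cleanly in Type I, which requires passing from $\delta_j\le 2\varepsilon/k$ to bounds on $\delta_{j,1}+\delta_{j,2}$ despite possibly unbalanced half-segments.
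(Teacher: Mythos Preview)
Your overall plan is exactly the paper's: split the goodness invariant into the deterministic $\bar X_j$ bound and a Chernoff bound on $|S'|$, and lower-bound the success probability via $\unbias$ of the XOR $\bigoplus_{i\in I}\alpha_i=\bigoplus_{j\in S}(b_{j,1}\oplus b_{j,2})$ together with Proposition~\ref{prop:xorunbias}.

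However, both of your ``obstacles'' are artifacts of a missing observation, and without it your handling of the main obstacle is actually incorrect. The half-segments are \emph{never} unbalanced: the original half-segments satisfy $|X_{j,1}|=|X_{j,2}|=|X_j|/2$ exactly, and the $(a,b)$-goodness bound (\ref{eq:case1Xjbar}) (which gives (\ref{eq:case1lbdXj}) for $a<a_0,b<b_0$) says that only an $o(1)$ fraction of $X_j$ has been removed to form $\bar X_j$. Hence $|\bar X_{j,\gamma}|=(\tfrac12\pm o(1))|\bar X_j|$ for both $\gamma$, which immediately yields $\delta_{j,1}+\delta_{j,2}=(2\pm o(1))\delta_j$ and, from property~(a), $\delta_{j,\gamma}\le \tfrac23+o(1)$. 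This one line dissolves your ``secondary nuisance'' and gives $\unbias(b_{j,\gamma})\ge \delta_{j,\gamma}/3$ directly, hence $\unbias(W_j)\ge \tfrac13(1-o(1))\delta_j$ by Proposition~\ref{prop:xorunbias}; no case analysis on whether $\delta_{j,\gamma}\le 1/2$ is needed.

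Your proposed workaround for the unbalanced case does not actually go through. If, say, $\delta_{j,1}=1$ and $\delta_{j,2}=0$ (one tiny half-segment entirely contained in $\Vars(F'_I)$), then $W_j=1$ deterministically and $\unbias(W_j)=0$, while $\delta_j=|\bar X_{j,1}|/|\bar X_j|$ can be any value in $(0,1/3]$. The identity $\delta_j|\bar X_j|=\delta_{j,1}|\bar X_{j,1}|+\delta_{j,2}|\bar X_{j,2}|$ together with $\delta_j\le 1/3$ only tells you the saturated half-segment is small, which does \emph{not} rescue $\unbias(W_j)=\Omega(\delta_j)$. So the balance of half-segments is not a convenience but the essential input; once you use it, the rest of your sketch matches the paper line for line.
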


\begin{proof}
Let $(S,T,(\rho'_j)_{j\not\in S})$ be the input to this iteration of $\mc{A}'$ and $I\subseteq T$ the set chosen in Step 1 of $\mc{A}'.$ The inputs to the next iteration are $(S_1,T_1,(\rho'_j)_{j\not\in S_1})$ where $S_1 = S \setminus S'$, $T_1 = T\setminus I$ and $\rho'_j$ for $j\in S'$ are sampled during the current iteration. For each $j\in S_1$, the set corresponding to $\bar{X}_j$ in the next iteration is denoted by $\bar{X}_j'$; formally, $\bar{X}_j' := \bar{X}_j \setminus \Vars(F'_I).$

We start with some preliminary observations. 
Note that by (\ref{eq:case1lbdXj}), we have $|X_j\setminus \bar{X}_j| = o(|\bar{X}_j|)$ and also as a consequence $|X_{j,\gamma}\setminus \bar{X}_{j,\gamma}| = o (|\bar{X}_j|) =  o(|\bar{X}_{j,\gamma}|)$ for any $\gamma\in \{1,2\}.$ In particular, this implies that for any $j\in S$, 
\begin{align}
\delta_{j,1} + \delta_{j,2}  &= \frac{|\Vars(F'_I)\cap \bar{X}_{j,1}|}{|\bar{X}_{j,1}|} + \frac{|\Vars(F'_I)\cap \bar{X}_{j,2}|}{|\bar{X}_{j,2}|}\notag\\ 
&= \frac{|\Vars(F'_I)\cap \bar{X}_{j,1}|}{((1/2)\pm o(1))|\bar{X}_{j}|} + \frac{|\Vars(F'_I)\cap \bar{X}_{j,2}|}{((1/2)\pm o(1))|\bar{X}_{j}|}\notag\\
&= (2\pm o(1)) \frac{|\Vars(F'_I)\cap \bar{X}_{j}|}{|\bar{X}_{j}|} = (2\pm o(1))\cdot \delta_j.\label{eq:case1deltaj}
\end{align}

Since each variable $b_{j,\gamma}$ ($j\in S$, $\gamma\in \{1,2\}$) is independently set to $1$ with probability $\delta_{j,\gamma}$, we see (using (\ref{eq:case1deltaj})) that for each $j\in S$
\begin{equation}
\label{eq:case1jinS'}
\prob{}{b_{j,1} = 1 \text{ or } b_{j,2} = 1} \leq \delta_{j,1} + \delta_{j,2} \leq 3 \delta_j.
\end{equation}

We now proceed to the proof of the statement of the claim.

\paragraph{Type I iteration.} To show that $(S_1,T_1,(\rho'_j)_{j\not\in S_1})$ is $(a+1,b)$-good, we need to show the corresponding versions\footnote{I.e., the version with $a$ replaced by $a+1$ and $\bar{X}_j$ replaced by $\bar{X}'_j$.} of (\ref{eq:case1S}) and (\ref{eq:case1Xjbar}).

\subparagraph{Proof of (\ref{eq:case1S}).} To show (\ref{eq:case1S}), note that it suffices to show
\begin{equation}
\label{eq:probS'largeI}
\prob{}{|S'|\geq \frac{10M}{k}} \leq \exp(-\Omega(k)).
\end{equation}
Note also that in the case of a Type I iteration, $S' =  S''' \cup \hat{S}$ where $S''' =  \{j\in S\ |\ b_{j,1} = 1 \text{ or } b_{j,2} = 1\}.$ By our choice of $I$, we know that $|\hat{S}|\leq M/k.$ Below, we will bound $|S'''\setminus \hat{S}|$. 

Recall that each $b_{j,\gamma}$ ($\gamma\in [2]$) is set to $1$ with probability $\delta_{j,\gamma}.$ Hence, for each $j\in S$, the probability that $j\in S'''$ is at most $\delta_{j,1} + \delta_{j,2},$ which is at most $3\delta_j$ by (\ref{eq:case1jinS'}). Note that for each $j\in S\setminus \hat{S}$, we have $\delta_j \leq (2\varepsilon/k).$ Thus, the expected value of $|S'''\setminus \hat{S}|$ is at most $3\sum_{j\in S\setminus \hat{S}} \delta_j \leq 3\cdot (2\varepsilon/k)\cdot |S| \leq M/k,$ as $|S|\leq M/8$ and $\varepsilon \leq 1.$ Thus, by using our bound on $|\hat{S}|$ and a Chernoff bound (Theorem~\ref{thm:Chernoff}), we obtain
\[
\prob{}{|S'| \geq 10 M/k}\leq \prob{}{|S'''\setminus \hat{S}|\geq 9M/k}\leq \exp(-\Omega(M/k)) \leq \exp(-k)
\]
proving (\ref{eq:probS'largeI}). For the final inequality, we have used the fact that $M \geq m/10 = \Omega(k^{20}).$

\subparagraph{Proof of (\ref{eq:case1Xjbar}).} Fix any $j\in S_1$. Since $\hat{S}\cap S_1=\emptyset$, we have $|\Vars(F'_I)\cap \bar{X}_j|\leq (2\varepsilon/k)|\bar{X}_j|\leq (2\varepsilon/k)\cdot |X_j|$. Thus, the set $\bar{X}'_j = \bar{X}_j \setminus \Vars(F'_I)$ satisfies
\[
|\bar{X}'_j| \geq |\bar{X}_j| - (2\varepsilon/k)|X_j| \geq |X_j|\cdot \left(1-\frac{2\varepsilon}{k}\cdot (a+1)-\frac{b}{M^{0.5}}\right)
\]
thus proving (\ref{eq:case1Xjbar}). This shows that the next iteration is $(a+1,b)$-good except with probability $\exp(-\Omega(k)).$

\subparagraph{Probability of SUCCESS.} The algorithm outputs SUCCESS whenever it finds an $i\in I$ such that $\alpha'_i\neq \beta_i$. Since, as argued in the correctness proof above, $\alpha'_i$ equals the random variable $\alpha_i,$ the algorithm outputs SUCCESS whenever there is an $i$ such that $\alpha_i\neq \beta_i$ for some $i\in I$.

Define the Boolean random variable $\alpha_I$ to be $\bigoplus_{i\in I} \alpha_i$; similarly, define $\beta_I$ to be $\bigoplus_{i\in I}\beta_i.$ For there to be some $i\in I$ such that $\alpha_i \neq \beta_i$, it suffices to have $\alpha_I \neq \beta_I$, or equivalently, whenever $\alpha_I \oplus \beta_I = 1.$  We show that this occurs with probability at least $\eta_1$.

Each $\alpha_I = \bigoplus_{j\in S} \alpha_{I,j}\oplus \bigoplus_{j\not\in S}\alpha_{I,j}$ where $\alpha_{I,j}$ is defined to be $\bigoplus_{i\in I}\alpha_{i,j}.$ Note that for $j\in S$, $\alpha_{I,j}$ is $1$ precisely when exactly one among the randomly chosen variables $x_{j,1}$ and $x_{j,2}$ lands up in $\Vars(F'_I);$ this happens precisely when $b_{j,1}\oplus b_{j,2} = 1.$ Recall that $b_{j,1}$ and $b_{j,2}$ are independent random variables that are $1$ with probability $\delta_{j,1}$ and $\delta_{j,2}$ respectively.

We know that $\delta_{j,\gamma} = |\Vars(F'_I)\cap \bar{X}_{j,\gamma}|/|\bar{X}_{j,\gamma}|$ for each $\gamma\in [2]$. Since $|\Vars(F'_I)\cap \bar{X}_{j,1}| \leq (1/3)|\bar{X}_j|$ (by the choice of $I$ in Step 1 of $\mc{A}'$) and $|\bar{X}_{j,\gamma}|\geq ((1/2)-o(1))\cdot |\bar{X}_j|$ as noted above, we have $\delta_{j,\gamma}\leq (2/3+o(1)).$ In particular, this implies that $\mathrm{Unbias}(b_{j,\gamma}) \geq \min\{\delta_{j,\gamma}, 1-\delta_{j, \gamma}\} \geq \delta_{j,\gamma}/3$\footnote{Recall (Section~\ref{sec:misc}) that $\mathrm{Unbias}(b_{j,1}) = \min\{\prob{}{b_{j,1} = 1}, \prob{}{b_{j,1}=0}\}$.}. 

Thus, Proposition~\ref{prop:xorunbias} and (\ref{eq:case1deltaj}) imply that for each $j\in S$,
\begin{align*}
  \mathrm{Unbias}(\alpha_{I,j}) = \mathrm{Unbias}(b_{j,1}\oplus b_{j,2}) &\geq \min\{\frac{1}{6}(\delta_{j,1} + \delta_{j,2}),\frac{1}{10}\}  \geq\min\{\frac{1}{3}(1-o(1))\delta_j,\frac{1}{10}\}.
\end{align*}

Since the random variables $\alpha_{I,j}$ are independent and this is a Type I iteration (meaning that $\sum_{j\in S} \delta_j \geq 1$), we see from Proposition~\ref{prop:xorunbias} that $\mathrm{Unbias}(\alpha_I) \geq \min\{(1/6)\sum_{j\in S} (1-o(1))\cdot \delta_j, 1/10\} \geq 1/100.$ This implies that the probability that $\alpha_I \oplus \beta_I = 1$ is at least $\eta_1=1/100$ as claimed.

\paragraph{Type II iteration.} The proof is, to a large extent, similar to the proof in the Type I case. So we merely point out the main differences.

\subparagraph{Proof of (\ref{eq:case1S}).} To show (the corresponding version of) (\ref{eq:case1S}), note that it suffices to show
\begin{equation}
\label{eq:probS'largeII}
\prob{}{|S'|\geq 10\cdot M^{0.5}} \leq \exp(-\Omega(k)).
\end{equation}
Note that in this case, $S' = S'' \cup S''' $ where $S''$ is as defined in $\mc{A}'$ and	 $S''' = \{j\in S\ |\ b_{j,1} = 1 \text{ or } b_{j,2} = 1\}.$ 

To bound $S''$, we note that since we are dealing with a Type II iteration, we have $\sum_{j\in S} \delta_j < 1.$ In particular, the number of $j\in S$ such that $\delta_j \geq 1/M^{0.5}$ is at most $M^{0.5}$. Hence, we have shown that $|S''|\leq M^{0.5}.$

To bound $S''',$ we note as in the Type I case that the expected size of $|S'''|$ is at most $3\sum_j\delta_j\leq 3.$ In particular, by Theorem~\ref{thm:Chernoff} (Item 2), we have
\[
\prob{}{|S'''| \geq M^{0.5}}\leq \exp(-\Omega(M^{0.5})) \leq \exp(-\Omega(k)).
\]
Thus, we see that with probability at least $1-\exp(-\Omega(k)),$ both $S''$ and $S'''$ have size at most $M^{0.5}$ and in this case, $|S'|\leq |S''| + |S'''|  \leq 2 M^{0.5}.$ This proves (\ref{eq:probS'largeII}).

\subparagraph{Proof of (\ref{eq:case1Xjbar}).} By our choice of $S''$, the set $S_1 = S\setminus S'$ does not contain any $j$ such that $\delta_j \geq 1/M^{0.5}.$ Hence, we see that for all $j\in S_1$, $|\Vars(F'_I)\cap \bar{X}_j|\leq |\bar{X}_j|/M^{0.5}$. Hence, we have
\[
|\bar{X}'_j| \geq |\bar{X}_j| - \frac{|\bar{X}_j|}{M^{0.5}} \geq  |X_j|\cdot \left(1-\frac{2\varepsilon}{k}\cdot a - \frac{b+1}{M^{0.5}}\right)
\]
as desired, for each $j\in S_{1}$. 

\subparagraph{Probability of SUCCESS.} We argue as in the Type I case that $\mathrm{Unbias}(\alpha_I)$ is large. 

We know by our choice of $I$ that there is at least one $j$ such that $\delta_j\in [\varepsilon/2k,2\varepsilon/k].$ Hence, exactly as in the Type I case, we see that
$\mathrm{Unbias}(\alpha_I) \geq \min\{(1/6)\sum_j (1-o(1))\cdot \delta_j, 1/10\} \geq \varepsilon/12k.$ In particular, this implies that the probability that $\alpha_I \oplus \beta_I = 1$ is at least $\eta_2 = \varepsilon/100k$, which yields the statement of the claim. 
\end{proof}

We are now ready to upper bound the probability that the algorithm outputs FAILURE on input $(S_0,T_0)$. We prove a more general statement. Assume that $(S,T,(\rho'_j)_{j\not\in S})$ is any $(a,b)$-good input to $\mc{A}'$ where $a \leq a_0$ and $b \leq b_0.$ Then, we claim that
\begin{equation}
\label{eq:case1indn}
\prob{}{\mc{A}'(S,T,(\rho'_j)_{j\not\in S}) \text{ outputs FAILURE}} \leq ((a_0-a) + (b_0-b))\cdot \exp(-\Omega(k)) + (1-\eta_1)^{a_0-a} + (1-\eta_2)^{b_0-b}
\end{equation}
where $\eta_1,\eta_2$ are as defined in the statement of Claim~\ref{clm:case1gooditerns}. We prove the above by downward induction on $a+b$ where $a\leq a_0$ and $b\leq b_0$. 

The base case of the induction is defined to be the case when either $a = a_0$ or $b = b_0.$ In this case, the claim is trivial since the right hand side of (\ref{eq:case1indn}) is more than $1$. 

Now consider the case when $a < a_0$ and $b < b_0.$ Let us analyze the behaviour of the algorithm $\mc{A}'$ on an $(a,b)$-good input $(S,T,(\rho'_j)_{j\not\in S}).$ Assume that the current iteration is of Type I (the other case is similar). Claim~\ref{clm:case1failureiter} tells us that the algorithm does not output FAILURE in this iteration. By Claim~\ref{clm:case1gooditerns}, the probability that the algorithm produces an input (for the next iteration) that is \emph{not} $(a+1,b)$-good is at most $\exp(-\Omega(k))$; in this case, we give up and assume that the algorithm subsequently will output FAILURE. Finally, by Claim~\ref{clm:case1gooditerns}, the probability that the algorithm does not output SUCCESS in this round is at most $(1-\eta_1)$. In particular, this implies that the probability that the algorithm does not output SUCCESS \emph{and} produces an $(a+1,b)$-good input for the next iteration is also at most $(1-\eta_1)$. However, conditioned on this event, we can use the inductive hypothesis to bound the probability of FAILURE. By the union bound and the induction hypothesis, we get
\begin{align*}
&\prob{}{\mc{A}'(S,T,(\rho'_j)_{j\not\in S}) \text{ outputs FAILURE}}\\ 
&\leq\exp(-\Omega(k)) + (1-\eta_1)\cdot (((a_0-(a+1)) + (b_0-b))\cdot \exp(-\Omega(k)) + (1-\eta_1)^{a_0-(a+1)} + (1-\eta_2)^{b_0-b})\\
&\leq \exp(-\Omega(k)) + ((a_0-(a+1)) + (b_0-b))\cdot \exp(-\Omega(k)) + (1-\eta_1)\cdot (1-\eta_1)^{a_0-(a+1)} + (1-\eta_2)^{b_0-b}\\
&\leq ((a_0-a) + (b_0-b))\cdot \exp(-\Omega(k)) + (1-\eta_1)^{a_0-a} + (1-\eta_2)^{b_0-b}
\end{align*}
which completes the induction. 

In particular, since the initial input $(S_0,T_0)$ is $(0,0)$-good, we see that 
\begin{align*}
\prob{}{\mc{A}' \text{ outputs FAILURE on } (S_0,T_0)} &\leq (a_0+b_0)\cdot \exp(-\Omega(k)) + (1-\eta_1)^{a_0} + (1-\eta_2)^{b_0}\\
&\leq (a_0+b_0)\cdot \exp(-\Omega(k)) + \exp(-\Omega(a_0)) + \exp(-\Omega(b_0\varepsilon/k))\\
&\leq M\cdot \exp(-\Omega(k)) + \exp(-\Omega(k)) + \exp(-\Omega(M^{0.45}\varepsilon/k))\\
&\leq \exp(-\Omega(k))
\end{align*}
where the second inequality follows from the definition of $\eta_1$ and $\eta_2$, the third from the definition of $a_0$ and $b_0$, and the last follows from the fact that $M \leq m^{\Delta'} \leq  \exp(k^{0.02+o(1)}),$\footnote{Note that there are at most $m^{\Delta'}$ many $X^{(\Delta)}$-segments in $G^{(\Delta')}$. } $\varepsilon \geq 1/M^{0.25},$ and $M = \Omega(m) = \Omega(k^{20}).$ This implies (\ref{eq:case1alphabeta}) and hence completes the proof of Case 1.

\subsubsection{Case 2}
\label{sec:case2}
By renaming our segments if necessary, we assume that $X_1, \dots, X_{M/2}$ are the segments that are not $(1/4)$-shattered. That is, for each of these segments $X_j$ ($j\in[M/2]$), there exists an $F_{i_j}'$ ($i_j\in[t]$) such that $\abs{\Vars(F_{i_j}')\cap X_j} \geq (1/4)\abs{X_j}$. By an averaging argument, we can say that there exists $b_j\in\{1,2\}$ such that $\abs{\Vars(F_{i_{j}}')\cap X_{j,b_j}} \geq (1/4)\abs{X_{j,b_j}}$. 

For each $j \leq M/2$, the restriction $\rho_j$ is sampled according to the algorithm $\mc{A}$ described in Section~\ref{sec:restriction}. We first condition on the choice (among the options $E_1,E_2,E_3$) made by the algorithm $\mc{A}$ in sampling each $\rho_j.$ We say that a $\rho_j$ is \emph{good} if the algorithm $\mc{A}$ decides on option $E_{b_j}\in \{E_1,E_2\}$ in sampling $\rho_j.$ For each $j$, $\rho_j$ is good with probability $1/3$. Let $\mc{E}_2$ denote the event that no $\rho_j$ is good for $j\in [M/2]$ and let $\bar{\mc{E}_2}$ be the complement of $\mc{E}_2$. Clearly, we have
\begin{equation}
\label{eq:case2chernoff}
\prob{\rho}{\mc{E}_2}\leq (2/3)^{M/2} \leq \exp(-\Omega(M)).
\end{equation}

Assume that $\mc{E}_2$ does not occur. Then, there is a $j\in[M/2]$ such that $\rho_j$ is good. By renaming segments once more, we can assume that $j=1$. Further, by renaming the $F_{i}'$s, we assume that $i_1 =1$; in particular, we have $\abs{\Vars(F_1')\cap X_1}\geq \abs{X_1}/4$. Finally, we can similarly assume that $b_1 = 1$, which implies that $\abs{\Vars(F_1')\cap X_{1,1}}\geq \abs{X_{1,1}}/4.$

We condition on any choice of the restrictions $\rho_j$ for $j > 1$; this fixes the sets $\tilde{Y}'_j$ and $\tilde{Z}'_j$ defined above for $j > 1$. Conditioned on all these events, we see that the random restriction $\rho_1$ sets all the variables in $X_{1,2}$ to $0$ deterministically, and hence can now be considered an $(X_{1,1},Y_{1,1},Z_{1,1})$-restriction, which we will call $\rho_1'$ for clarity. Note that $X_{1,1}, Y_{1,1}$ and $Z_{1,1}$ are $X^{(\Delta-1)}$-, $Y^{(\Delta-1)}$- and $Z^{(\Delta-1)}$-clones respectively.

Recall that the $X^{(\Delta-1)}$-clone $X_{1,1}$ is made up of $m$ $X^{(\Delta-1)}$-segments $X_{1,1,1},\ldots,X_{1,1,m}$. We similarly have $Y_{1,1} = \bigcup_{j\in [m]} Y_{1,1,j}$ and $Z_{1,1} =  \bigcup_{j\in [m]} Z_{1,1,j}$. 

The following claim is proved by a standard averaging argument. 

\begin{claim}\label{clm:popularity}
  There exist at least $m/8$ many $j\in [m]$ such that $F'_1$ is $1/8$-heavy in $X_{1,1,j}$ i.e., $|\Vars(F'_1)\cap X_{1,1,j}|\geq (1/8)\cdot |X_{1,1,j}|.$ 
\end{claim}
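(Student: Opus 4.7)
The claim is a direct averaging argument, and I expect no serious obstacles; the only thing to keep in mind is that all the pieces $X_{1,1,j}$ are of equal size, which follows from the uniform combinatorial construction of $X^{(\Delta)}$.

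The plan is as follows. First, observe that since $X_{1,1}$ is an $X^{(\Delta-1)}$-clone, it decomposes as the disjoint union of its $m$ segments $X_{1,1,1},\dots,X_{1,1,m}$, and by the inductive definition in Section~\ref{sec:polynomial} each segment has the same cardinality, namely $s:=2 n_{\Delta-2}=|X_{1,1}|/m$. Consequently, a subset is ``$1/8$-heavy'' in $X_{1,1,j}$ iff it contains at least $s/8$ elements of $X_{1,1,j}$.

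Next, I would recall that we have already arranged (just before the claim) $|\Vars(F'_1)\cap X_{1,1}|\geq |X_{1,1}|/4 = ms/4$. Writing this size as a sum over segments,
\[
\sum_{j\in[m]} |\Vars(F'_1)\cap X_{1,1,j}| \;\geq\; \frac{ms}{4}.
\]
Let $A=\{j\in[m] : |\Vars(F'_1)\cap X_{1,1,j}|\geq s/8\}$ and $B=[m]\setminus A$. Bound the two parts separately: each term in the $A$-sum is at most $s$ (the size of a segment) and each term in the $B$-sum is strictly less than $s/8$, so
\[
\frac{ms}{4} \;\leq\; \sum_{j\in A} |\Vars(F'_1)\cap X_{1,1,j}| + \sum_{j\in B} |\Vars(F'_1)\cap X_{1,1,j}| \;<\; |A|\cdot s + m\cdot \frac{s}{8}.
\]
Dividing by $s$ and rearranging gives $|A|> m/4 - m/8 = m/8$, which is the desired conclusion.

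The main (and essentially only) point to be careful about is confirming the uniformity of segment sizes, which is immediate from the recursive definition of $X^{(\Delta')}$. No probabilistic input or inductive hypothesis is needed for this claim; it is purely a counting observation.
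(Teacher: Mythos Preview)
Your proof is correct and is essentially the same averaging argument as the paper's: the paper phrases it via the fractions $p_j = |\Vars(F'_1)\cap X_{1,1,j}|/|X_{1,1,j}|$ and argues by contradiction, while you work with raw counts and argue directly, but the underlying computation is identical. Your explicit remark that all segments have equal size is exactly the point needed to justify the paper's identity $p_j = m|\Vars(F'_1)\cap X_{1,1,j}|/|X_{1,1}|$.
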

Assuming Claim~\ref{clm:popularity}, we will first show how we can invoke the induction hypothesis to prove the desired result.

By renaming if needed, let the $m/8$ segments from Claim~\ref{clm:popularity} be $X_{1,1, 1}, \dots, X_{1,1, m/8}$. The restriction $\rho_1'=\rho_{1,1}\circ\cdots\circ\rho_{1,m}$ where $\rho_{1,\ell}$ for $\ell\in[m]$ is an $(X_{1,1,\ell},Y_{1,1,\ell},Z_{1,1,\ell})$-restriction sampled using the algorithm $\mc{A}$ in Section~\ref{sec:restriction}. We further condition on any choice of restrictions $\rho_{1,\ell}$ for all $\ell > m/8$. This fixes the sets $\tilde{Y}_{1,1,\ell}' := \rho_{1,\ell}(\Vars(F'_1)\cap X_{1,1,\ell})\cap Y_{1,1,\ell}$ and $\tilde{Z}_{1,1,\ell}'$ (defined similarly) for all $\ell > m/8$. Let $F''$ denote the restricted formula thus obtained and $\rho_1''$ the random restriction after this conditioning (i.e. $\rho_1'' = \rho_{1,1}\circ \cdots \circ \rho_{1,m/8}).$

We proceed to analyze $\relrk_{(\tilde{Y}'\cup Y',\tilde{Z}'\cup Z')}(F'|_\rho)$. Since $\tilde{Y}'\cup Y'$ (resp.\ $\tilde{Z}'\cup Z'$) can be partitioned as $\bigcup_{i\in [t]} \hat{Y}_i$ (resp.\ $\bigcup_{i\in [t]}\hat{Z}_i$), we know (using Proposition~\ref{prop:relrk} Item 3) that for any choice of $\rho$ consistent with our choices so far,
\begin{equation}
\label{eq:case2relrkF'}
\relrk_{(\tilde{Y}'\cup Y',\tilde{Z}'\cup Z')}(F'|_\rho) = \prod_{i\in [t]}\relrk_{(\hat{Y}_i,\hat{Z}_i)}(F'_i|_\rho) \leq \relrk_{(\hat{Y}_1,\hat{Z}_1)}(F'_1|_\rho) = \relrk_{(\hat{Y}_1,\hat{Z}_1)}(F''|_{\rho_{1}''}).
\end{equation}
To bound the latter term, we invoke the induction hypothesis on $F''$ with $M $ replaced by $M'= m/8$ and $\varepsilon$ replaced by $\varepsilon' = 1/8$. Here $\rho''$ is a $(X'', Y'', Z'')$-restriction where
\begin{align*}
X''&=\cup_{j\in[m/8]}X_{1,1,j}\\ Y''&=\cup_{j\in[m/8]}Y_{1,1,j}\\ Z''&=\cup_{j\in[m/8]}Z_{1,1,j}
\end{align*} 
and $\Vars(F'')=\bar{X}\cup\bar{Y}\cup\bar{Z}$ where
\begin{align*}
  \bar{X} &= \Vars(F_1')\cap X''\\
  \bar{Y} &= \left(\cup_{j>1}\tilde{Y}_j'\right)\cup\left(\cup_{\ell>m/8}\tilde{Y}'_{1,1,j}\right)\cup \hat{Y}_{1,0}\\
  \bar{Z} &= \left(\cup_{j>1}\tilde{Z}_j'\right)\cup\left(\cup_{\ell>m/8}\tilde{Z}'_{1,1,j}\right)\cup \hat{Z}_{1,0}.
\end{align*}
Note that by Claim~\ref{clm:popularity}, it follows that $|\Vars(F'')\cap X_{1,1,j}|\geq \varepsilon'\cdot |X_{1,1,j}|$ for each $j\in [M']$ and hence the induction hypothesis for product-depth $(\Delta -1)$ is applicable.

From the induction hypothesis, we see that 
\begin{align*}
  \prob{}{\relrk_{(\hat{Y}_1,\hat{Z}_1)}(F''|_{\rho''}) \geq \exp(-k^{0.1}/(\Delta-1))} &\leq (\Delta-1) \exp(-k^{0.1}/(\Delta-1))
  \end{align*}
Using (\ref{eq:case2relrkF'}), we get
    \begin{align*}
\prob{\rho}{\relrk_{(\tilde{Y}'\cup Y',\tilde{Z}'\cup Z')}(F'|_{\rho}) \geq  \exp(-k^{0.1}/(\Delta-1))\mid \bar{\mc{E}}_2} &\leq (\Delta-1) \exp(-k^{0.1}/(\Delta-1)).
\end{align*}

By (\ref{eq:case2chernoff}), we thus have
\begin{align*}
  \prob{}{\relrk_{(\tilde{Y}'\cup Y',\tilde{Z}'\cup Z')}(F'|_{\rho}) \geq  \exp(-k^{0.1}/(\Delta-1))}&\leq (\Delta-1) \exp(-k^{0.1}/(\Delta-1)) + \exp(-\Omega(M))\\
    &\leq \Delta \exp(-k^{0.1}/(\Delta-1)).
\end{align*}
The last inequality holds since $M \geq m/10 > k$. This finishes the proof of Case 2 modulo the (standard) proof of Claim~\ref{clm:popularity}, which we give below for completeness.
\begin{proof}[Proof of Claim~\ref{clm:popularity}]
  For $j\in [m]$, let $p_j$ denote $\abs{\Vars(F_{1}')\cap X_{1,1,j}}/|X_{1,1,j}| = (m\abs{\Vars(F_{1}')\cap X_{1,1,j}})/|X_{1,1}|\,$. Since $\abs{\Vars(F_{1}')\cap X_{1}}\geq (1/4)\abs{X_{1}}$, we get that $\sum_{j=1}^mp_j \geq m/4$. Assume that the number of $j$ such that $p_j \geq 1/8$ is strictly less than $m/8$. Then, we have
  \begin{align*}
  \sum_{j\in [m]} p_j \leq \sum_{j: p_j \geq 1/8} 1 + \sum_{j: p_j \leq 1/8} (1/8)  < (m/8) + 1/8 \sum_{j: p_j \leq 1/8} 1 \leq m/8 + (1/8)\cdot m = m/4,
  \end{align*}
  yielding $\sum_{j\in [m]} p_j < m/4,$ contradicting our assumption on $\sum_{j\in [m/8]} p_j.$ This proves the claim.
 \end{proof} 

\subsubsection{Case 3}
\label{sec:case3}
Recall that in Case 3, we have that there is some $F'_i$ ($i\in [t]$) that is $(\varepsilon/k)$-heavy in at least $M/k$ many segments $X_j.$ We will prove (\ref{eq:relrkPigate}) in this case using induction. W.l.o.g., assume that $i=1$.

Since $\tilde{Y}'\cup Y'$ (resp.\ $\tilde{Z}'\cup Z'$) can be partitioned as $\bigcup_{i\in [t]} \hat{Y}_i$ (resp.\ $\bigcup_{i\in [t]}\hat{Z}_i$), we know (Proposition~\ref{prop:relrk} Item 3) that for any choice of $\rho,$
\begin{equation}
\label{eq:case4relrkF'}
\relrk_{(\tilde{Y}'\cup Y',\tilde{Z}'\cup Z')}(F'|_\rho) = \prod_{i\in [t]}\relrk_{(\hat{Y}_i,\hat{Z}_i)}(F'_i|_\rho) \leq \relrk_{(\hat{Y}_1,\hat{Z}_1)}(F'_1|_\rho).
\end{equation}
It is therefore sufficient to bound $\relrk_{(\hat{Y}_1,\hat{Z}_1)}(F'_1|_\rho)$.

W.l.o.g., we assume that $F'_1$ is $(\varepsilon/k)$-heavy in segments $X_1,\ldots,X_{M/k}$. Each $X_j$ ($j\in [M/k]$) consists of two half-segments, i.e., $X_j = X_{j,1} \cup X_{j,2}$.  By averaging, we know that there is some $\gamma\in [2]$ such that $F'_1$ is $(\varepsilon/k)$-heavy in at least half of $X_{1,\gamma},\ldots,X_{M/k,\gamma}$. W.l.o.g.\ we assume that $\gamma = 1$. By renaming the segments, let us assume that $F'_1$ is $(\varepsilon/k)$-heavy in $X_{1,1},\ldots,X_{M/2k,1}$. 

For each $j \leq M/2k$, the restriction $\rho_j$ is sampled according to the algorithm $\mc{A}$ described in Section~\ref{sec:restriction}. Let us say that $\rho_j$ is \emph{good} if the algorithm $\mc{A}$ decides on option $E_1$ in sampling $\rho_j.$ For each $j$, $\rho_j$ is good with probability $1/3$. For each $j\in [M/2k],$ we condition on whether or not $\rho_j$ is good. Let $\mc{E}_3$ denote the event that the number of good $\rho_j$ is at most $M/8k$. Since the different $\rho_j$ are sampled independently, a Chernoff bound (Theorem~\ref{thm:Chernoff} Item 1) tells us that 
\begin{equation}
\label{eq:case4chernoff}
\prob{\rho}{\mc{E}_3}\leq \exp(-\Omega(M/k)).
\end{equation}

Assume that the event $\mc{E}_3$ does not occur. By renaming segments once more, we may assume that $\rho_j$ is good for each $j\in [M/8k]$. We condition on any choice of the restrictions $\rho_j$ for $j > M/8k$; this fixes the sets $\tilde{Y}'_j$ and $\tilde{Z}'_j$ for $j > M/8k$. Also, note that since we have conditioned on choosing the option $E_1$ for sampling restrictions $\rho_1,\ldots,\rho_{M/8k}$, all variables in $X_{j,2}$ ($j\in [M/8k]$) are set to $0$ with probability $1$. We will therefore think of $\rho_j$ as an $(X_{j,1},Y_{j,1},Z_{j,1})$-restriction for the rest of the proof.

We will now prove our result by considering two subcases. Recall that each half-segment $X_{j,1}$ is an $X^{(\Delta-1)}$-clone and hence a union of $m$ $X^{(\Delta-1)}$-segments $X_{j,1,1},\ldots,X_{j,1,m}.$ We will refer to these as \emph{sub-segments} of $X_{j,1}.$

\paragraph*{Case (3a): There are at least $M/16k$ many $j\in [M/8k]$ such that $F'_1$ is at least $(\varepsilon/2k)$-heavy in at least $m^{\what}$ many sub-segments of $X_{j,1}$.} 

Without loss of generality, let us assume that the hypothesis of Case (3a) holds for the half-segments $X_{1,1},\ldots,X_{M/16k,1}.$ That is, $F'_1$ is $(\varepsilon/2k)$-heavy in at least  $m^{\what}$ many sub-segments of $X_{j,1}$ for each $j\in [M/16k]$. Here, we will be able to apply induction on $F'_1$ (which is a $\Sigma(\Pi\Sigma)^{(\Delta-1)}$ formula) to obtain an upper bound on its relative rank under the random restriction.

We will apply the induction hypothesis with $M$ replaced by $M' = M/16k \times m^{\what}$ and $\varepsilon$ replaced by $\varepsilon' = \varepsilon/2k$. We claim that we have
$\varepsilon' \geq (1/M')^{0.25}$ and $M'\geq m/10$ (which are needed to apply induction). The latter inequality is trivial as $M'\geq M \geq m/10$. For the former, note that we have
\[
\frac{1}{(\varepsilon')^4} = \frac{(2k)^4}{\varepsilon^4} \leq (2k)^4 M \leq \frac{M \sqrt{m}}{16k}  = M'
\]
where the first inequality uses the fact that $\varepsilon \geq 1/M^{0.25}$ and the last inequality uses the fact that $k \leq m^{0.05}.$ This shows that $\varepsilon'\geq (1/M')^{0.25}$ as required. 

We apply induction after some further processing. By renaming if necessary, let us assume that $F_1'$ is heavy in the first $\sqrt{m}$ sub-segments $X_{j,1,1}, \cdots, X_{j,1,m^{0.5}}$ of $X_{j,1}$ for each $j\in [M/16k]$. Conditioned on what is known about $\rho$ so far, for each $j\in [M/8k]$, the restriction $\rho_j$ can be written as $\rho_j=\rho_{j,1}\circ\cdots\circ\rho_{j,m}$ where each $\rho_{j,\ell}$ for $\ell \in[m]$ is an $(X_{j,1,\ell},Y_{j,1,\ell},Z_{j,1,\ell})$-restriction sampled as per the sampling algorithm $\mc{A}$. We further condition on any choice of the restrictions $\rho_{j_1}$ for all $j_1 > M/16k$ and $\rho_{j_2, \ell}$  for all $j_2\in[M/16k]$ and $\ell > m^{0.5}$. This fixes the sets $\tilde{Y}_{j_1}'$, $\tilde{Z}_{j_1}'$ ($M/8k < j_1 \leq M/16k$) and $\tilde{Y}'_{j_2,1, \ell} := \rho(\Vars(F'_1)\cap X_{j_2,1,\ell})\cap Y_{j_2,1,\ell}$ and $\tilde{Z}'_{j_2,1, \ell} := \rho(\Vars(F'_1)\cap X_{j_2,1,\ell})\cap Z_{j_2,1,\ell}$ ($j_2\in [M/16k], \ell > m^{0.5}$). 

Conditioned on our choices so far, the restriction $\rho$ can now be identified an $(X'', Y'', Z'')$-restriction $\rho''$ where
\begin{align*}
X''&=\cup_{j\in[M/8k]}\cup_{\ell\in[m^{0.5}]}X_{j,1,\ell}\\ Y''&=\cup_{j\in[M/8k]}\cup_{\ell\in[m^{0.5}]}Y_{j,1,\ell}\\ Z''&=\cup_{j\in[M/8k]}\cup_{\ell\in[m^{0.5}]}Z_{j,1,\ell}.
\end{align*}
and the restricted formula, which we denote by $F_1''$, satisfies $\Vars(F''_1)=\bar{X}\cup\bar{Y}\cup\bar{Z}$ where
\begin{align*}
  \bar{X} &= \Vars(F_1')\cap X''\\
  \bar{Y} &= \left(\cup_{j> M/16k}\tilde{Y}_j'\right)\cup\left(\cup_{j\in [M/16k]}\cup_{\ell>m^{0.5}}\tilde{Y}'_{j,1,\ell}\right)\cup \hat{Y}_{1,0}\\
  \bar{Z} &= \left(\cup_{j>M/16k}\tilde{Z}_j'\right)\cup\left(\cup_{j \in [M/16k]}\cup_{\ell > m^{0.5}}\tilde{Z}'_{j,1,\ell}\right)\cup \hat{Z}_{1,0}.
\end{align*}

Based on the discussion above, we can now invoke the induction hypothesis with $M' = (M/16k)m^{0.5}$ and $\varepsilon' = \varepsilon/2k$ on $F''_1$ to obtain
\[
\prob{\rho''}{\relrk_{(\hat{Y}_1,\hat{Z}_1)}(F''_1|_\rho) \geq \exp(-k^{0.1}/(\Delta-1))}\leq (\Delta-1)\exp(-k^{0.1}/(\Delta-1)).
\]
Since the above holds for an arbitrary sequence of fixings after it was determined that $\mc{E}_3$ did not hold, we have thus shown the following.

\begin{equation}
\label{eq:case4aresult}
\prob{\rho}{\relrk_{(\hat{Y}_1,\hat{Z}_1)}(F'_1|_\rho) \geq \exp(-k^{0.1}/(\Delta-1))\ |\ \bar{\mc{E}}_3}\leq (\Delta-1)\exp(-k^{0.1}/(\Delta-1)).
\end{equation}

We will use this below, after the proof of Case 3(b), to finish the proof in this case.

\paragraph*{Case (3b): There are at least $M/16k$ many $j\in [M/8k]$ such that $F'_1$ is at least $(\varepsilon/2k)$-heavy in less than $m^{\what}$ many sub-segments of $X_{j,1}$.} 
W.l.o.g., assume that for each $j\in [M/16k]$, there is a $h_j < \sqrt{m}$ such that $F_1'$ is $(\varepsilon/2k)$-heavy only in the sub-segments $X_{j,1,1},\ldots,X_{j,1,h_j}.$

For each $j \in [M/16k]$, let $W_j$ be $X_{j,1} \cap \Vars(F_1')$. By our assumption in Case 3, we know that

\begin{equation}
\label{eq:lower}
|W_j|  \geq \frac{\varepsilon}{k} \cdot |X_{j,1}|~~~~\forall j \in [M/16k]
\end{equation}
Also, we know that for any $\ell \in \{h_j+1, \ldots, m\}$, $|X_{j,1,\ell} \cap W_j| < \frac{\varepsilon}{2k}\cdot |X_{j,1,\ell}|$ for each $j \in [M/16k]$. Therefore, we get that

\begin{equation}
\label{eq:upper}
\sum_{\ell > h_j} |X_{j,1,\ell} \cap W_j| \leq \frac{\varepsilon}{2k}\cdot |X_{j,1}| ~~~~\forall j \in [M/16k]
\end{equation}

Using Equations~(\ref{eq:lower}) and (\ref{eq:upper}) we get that
\begin{align*}
\sum_{\ell \in [h_j]} |X_{(j,1),\ell} \cap W_j| \geq \frac{\varepsilon}{2k}\cdot |X_{j,1}|~\mbox{ for each }  j \in [M/16k].
\end{align*}
Thus by averaging and by using the fact that $h_j \leq m^{\what}$, we get that for every $j \in [M/16k]$ there exists an $\ell_j \in [h_j]$ such that $|X_{j,1,\ell_j} \cap W_j| \geq \frac{\varepsilon}{2k}\cdot \frac{|X_{j,1}|}{m^{\what}}$. By renaming if necessary, let $\ell_j = 1$ for all $j\in[M/16k]$.

Notice that in fact $|X_{j,1}|  = m\cdot |X_{j,1,\ell}|$ for any $\ell \in [m]$. Therefore, we get 

\begin{align}
\label{eq:new-e}
|X_{j,1,1} \cap \Vars(F'_{1})| = |X_{j,1,1} \cap W_j| & \geq \frac{\varepsilon}{2k}\cdot \frac{|X_{j,1,1}| \cdot m}{m^{\what}} = \frac{\varepsilon}{2k} \cdot m^{\notwhat} \cdot |X_{j,1,1}| & \forall j \in [M/16k]
\end{align}

We will apply induction on the formula $F'_1$ and the sub-segments $X_{j,1,1}$ ($j\in [M/16k]$). For the induction, the parameter $M$ will thus be replaced by $M' = M/16k$ and $\varepsilon$ by $\varepsilon' = \frac{\varepsilon}{2k} \cdot m^{\notwhat}$ (we can take this $\varepsilon'$ by (\ref{eq:new-e}) above). To check that induction is possible with these parameters, we need to ensure  that $\varepsilon' \geq 1/M'^{0.25}$ and $M'\geq m/10$. The first inequality follows as in Case (3a). Formally, we note that
\[
\frac{1}{(\varepsilon')^4} = \frac{(2k)^4}{\varepsilon^4 m^2} \leq \frac{M}{16k}  = M'
\]
where the inequality follows from the fact that $\varepsilon \geq 1/M^{0.25}$ and $k = m^{0.05}.$ This shows that $\varepsilon' \geq (1/M')^{0.25}.$

Now we will show that $M'\geq m/10$. We know that $M\geq m/10$, but in this case we will be able to get a better lower bound on the value of $M$, which will then give us the intended lower bound on $M'$. For this, first observe that $\varepsilon' = (\varepsilon/2k)\cdot m^{\what}$. Note that by (\ref{eq:new-e}) and the fact that $|X_{j,1,1} \cap \Vars(F'_{1})|\leq |X_{j,1,1}|$, we get 
$$(\varepsilon/2k)\cdot m^{\what}\leq 1 \Leftrightarrow \frac{1}{\varepsilon} \geq \frac{\sqrt{m}}{2k}.$$
 We also know that $\varepsilon \geq 1/M^{0.25}$ and therefore, $M \geq 1/\varepsilon^4 \geq m^2/(2k)^4$. As $M' = M/16k$, we get that $M' \geq \frac{m^2}{8k \cdot (2k)^4 } \geq m$. This gives us (a bound that is slightly better than) the desired bound on $M'$. 

We now apply induction. As in Case (3a), some processing is needed. Note that conditioned on what is known about $\rho$ so far, for each $j\in [M/8k]$, the restriction $\rho_j$ sets all  $\rho_j=\rho_{j,1}\circ\cdots\circ\rho_{j,m}$ where each $\rho_{j,\ell}$ for $\ell \in[m]$ is an $(X_{j,1,\ell},Y_{j,1,\ell},Z_{j,1,\ell})$-restriction sampled as per the sampling algorithm $\mc{A}$. We further condition on any choice the restrictions $\rho_{j_1}$ for all $j_1 > M/16k$ and $\rho_{j_2, \ell}$  for all $j_2\in[M/16k]$ and $\ell > 1$. This fixes the sets $\tilde{Y}_{j_1}'$, $\tilde{Z}_{j_1}'$ ($M/8k < j_1 \leq M/16k$) and $\tilde{Y}'_{j_2,1, \ell} := \rho(\Vars(F'_1)\cap X_{j_2,1,\ell})\cap Y_{j_2,1,\ell}$ and $\tilde{Z}'_{j_2,1, \ell} := \rho(\Vars(F'_1)\cap X_{j_2,1,\ell})\cap Z_{j_2,1,\ell}$ ($j_2\in [M/16k], \ell > 1$). 

Conditioned on our choices so far, the restriction $\rho$ can be identified with an $(X'', Y'', Z'')$-restriction $\rho''$ where
\begin{align*}
X''&=\cup_{j\in[M/16k]}X_{j,1,1}\\ Y''&=\cup_{j\in[M/16k]}Y_{j,1,1}\\ Z''&=\cup_{j\in[M/16k]}Z_{j,1,1}.
\end{align*}
and the restricted formula $F_1''$ satisfies $\Vars(F''_1)=\bar{X}\cup\bar{Y}\cup\bar{Z}$ where
\begin{align*}
  \bar{X} &= \Vars(F_1')\cap X''\\
  \bar{Y} &= \left(\cup_{j> M/16k}\tilde{Y}_j'\right)\cup\left(\cup_{j\in [M/16k]}\cup_{\ell>1}\tilde{Y}'_{j,1,\ell}\right)\cup \hat{Y}_{1,0}\\
  \bar{Z} &= \left(\cup_{j>M/16k}\tilde{Z}_j'\right)\cup\left(\cup_{j \in [M/16k]}\cup_{\ell > 1}\tilde{Z}'_{j,1,\ell}\right)\cup \hat{Z}_{1,0}.
\end{align*}

Based on the discussion above, we can now invoke the induction hypothesis with $M'$ and $\varepsilon'$ (as defined above) on $F''_1$ to obtain
\[
\prob{\rho}{\relrk_{(\hat{Y}_1,\hat{Z}_1)}(F''_1|_{\rho''}) \geq \exp(-k^{0.1}/(\Delta-1))}\leq (\Delta-1)\exp(-k^{0.1}/(\Delta-1)).
\]
Hence, we obtain as in Case (3a),

\begin{equation}
\label{eq:case4bresult}
\prob{\rho}{\relrk_{(\hat{Y}_1,\hat{Z}_1)}(F'_1|_\rho) \geq \exp(-k^{0.1}/(\Delta-1))\ |\ \bar{\mc{E}}_3}\leq (\Delta-1)\exp(-k^{0.1}/(\Delta-1)).
\end{equation}

We now see how to finish the proof in both Cases (3a) and (3b). Using (\ref{eq:case4aresult}) and (\ref{eq:case4bresult}), we see that in each of the cases (3a) and (3b),

\begin{align*}
 \prob{\rho}{\relrk_{(\hat{Y}_1,\hat{Z}_1)}(F'_1|_{\rho}) \geq \exp(-k^{0.1}/(\Delta - 1))}&\leq (\Delta-1)\exp(-k^{0.1}/(\Delta - 1)) + \prob{\rho}{\mc{E}_{3}} \\
 &\leq (\Delta-1)\exp(-k^{0.1}/(\Delta - 1)) + \exp(-\Omega(M/k))\\
 &\leq \Delta \exp(-k^{0.1}/(\Delta - 1))
\end{align*}
where the second inequality follows from (\ref{eq:case4chernoff}) and the last inequality from the fact that $M = \Omega(m) = \Omega(k^{20}).$

By (\ref{eq:case4relrkF'}), we thus get 
\[
\prob{\rho}{\relrk_{(\tilde{Y}'\cup Y',\tilde{Z}'\cup Z')}(F'|_{\rho}) \geq \exp(-k^{0.1}/(\Delta - 1))} \leq \Delta \exp(-k^{0.1}/(\Delta - 1))
\]
proving inequality (\ref{eq:relrkPigate}) in this case. This completes the proof in Case 4.

\bibliography{localref}

\end{document}